\documentclass[floats,12pt,onecolumn]{article}
\usepackage{hyperref}
\usepackage{amsfonts}
\usepackage{amsmath}
\usepackage{epsfig}
\usepackage{pstricks}
\usepackage{calc}
\usepackage{color}
\usepackage{subfigure} 
\usepackage{graphics}
\usepackage{enumerate}
\usepackage{amssymb}
\usepackage{amsthm}
\usepackage{psfrag}
\usepackage[noblocks]{authblk}
\usepackage{caption}
\usepackage{dblfloatfix}

\usepackage{csquotes}

\usepackage{mathtools}
\DeclarePairedDelimiter{\set}{\{}{\}}
\DeclarePairedDelimiterX{\setcomprehension}[2]{\{}{\}}{#1\mathrel{\delimsize\vert} #2}

\usepackage{float}
\restylefloat{table}

\topmargin=-1.7cm
\oddsidemargin=0cm
\evensidemargin=0cm 
\textwidth=17cm
\textheight=23.4cm

\newcommand*{\bbA}{\mathbb{A}}
\newcommand*{\bbZ}{\mathbb{Z}}

\newcommand*{\cA}{\mathcal{A}} 
\newcommand*{\cB}{\mathcal{B}}
\newcommand*{\cC}{\mathcal{C}}

\newcommand*{\cG}{\mathcal{G}}
\newcommand*{\cH}{\mathcal{H}}

\newcommand*{\cD}{\mathcal{D}}

\newcommand*{\cR}{\mathcal{R}}

\newcommand*{\cP}{\mathcal{P}}

\newcommand*{\cT}{\mathcal{T}}

\newcommand*{\bPi}{{\mathbf{\Pi}}}
\newcommand*{\bD}{{\mathbf{D}}}

\newcommand*{\bU}{{\mathbf{U}}}

\newcommand*{\bV}{{\mathbf{V}}}
\newcommand*{\bI}{{\mathbf{I}}}
\newcommand*{\bW}{{\mathbf{W}}}

\newcommand*{\mcg}{{\mathsf{MCG}}}

\newcommand*{\bUt}{{\bf \tilde{U}}}
\newcommand*{\Ut}{ \tilde{U}}

%\newcommand{\id}{\mathbb{I}}

				% VECTORS
				% ABS

\newtheorem{theorem}{Theorem}[section]
\newtheorem{lemma}[theorem]{Lemma}
\newtheorem{proposition}{Proposition}[section]
\newtheorem{postulate}[theorem]{Postulate}

\newtheorem{definition}[theorem]{Definition}

\newtheorem{example}[theorem]{Example}
\newtheorem{corollary}[theorem]{Corollary}

%%%%%%%%%%%%%% Greek Letters %%%%%%%%%%%%%%%%%%%%%%

\usepackage{amsfonts}

\newcommand{\ket}[1]{|#1\rangle}

\newcommand{\proj}[1]{|#1\rangle\langle#1|}

 % inproduct, < | >

\newcommand{\comment}[1]{}
\newcommand{\diag}{\mbox{diag}}

\newcommand*{\phiperm}{\Delta}

\newcommand*{\Smatrix}[2]{\raisebox{-1.7ex}{
\begin{picture}(39,28)(-2,-4)
\put(0,-6){\includegraphics[scale=.5]{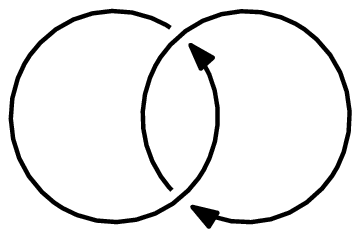}}
\put(-7,5){\small $#1$}
\put(52,5){\small $#2$}
\end{picture}}}

\newcommand*{\connected}[1]{\Leftrightarrow_{#1}}

% Some more semantic command definitions (for consistency)

\newcommand*{\phaseExp}[2]{\mathrm{e}^{\mathrm{i}\varphi{#1}(#2)}}
\newcommand*{\phaseArg}[2]{\varphi{#1}(#2)}
\newcommand*{\phase}[2]{\lambda_{#2}#1}
\newcommand*{\phaseGlobal}{\lambda}

\newcommand*{\perms}{\vec{\pi}}
\newcommand*{\permLoop}[1]{\pi^{#1}}

\newcommand*{\unitaryGroup}[1]{\mathrm{U}(#1)}

% These could be changed if some explicit visual markup is preferred.
\newcommand*{\labeling}[1]{#1}
\newcommand*{\labelSequenceSphere}[1]{#1}
%\newcommand*{\labelSequenceSphere}[1]{\vec{#1}}

% If this is changed to #1^* then \dual{\jmath} should be changed to \dual{j}!
\newcommand*{\dual}[1]{\overline{#1}}

\newcommand*{\conjugate}[1]{\overline{#1}}

\begin{document}

\newcommand*{\id}{\mathsf{id}}

% Verlinde algebra
\newcommand*{\f}{\mathbf{f}}
\newcommand*{\p}{\mathbf{p}}
\newcommand*{\q}{\mathbf{q}}
\newcommand*{\Ver}{\mathsf{Ver}}

\newcommand*{\four}[5]{\begin{pspicture}[shift=-0.350](0.000,0.000)(3.350,0.700)
\put(0.100,0.000){\scalebox{1.000}{\epsfig{file=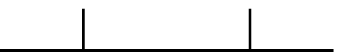}}}
\put(0.800,0.600){$#1$}
\put(0.100,0.200){$#2$}
\put(1.200,0.200){$#5$}
\put(2.850,0.200){$#3$}
\put(2.500,0.600){$#4$}
\end{pspicture}}

\newcommand*{\fourdot}[4]{\four{#1}{#2}{#3}{#4}{\quad \   \cdot}}
\newcommand{\Fset}[2]{\mathsf{Iso}\left(#1  \  \rightarrow \   #2\ \right)}
\newcommand{\fset}[3]{\mathsf{Iso}\left(#1\overset{#3}{\rightarrow} #2\right)}

\newcommand*{\logical}[1]{[#1]}
\newcommand*{\Aut}{\mathsf{Aut}}
\newcommand*{\labels}{\bbA}
\newcommand*{\pauli}{\mathsf{Pauli}}
\newcommand*{\cliff}{\mathsf{Clifford}}

\newcommand{\z}[1]{\tilde{#1}}

\title{Protected gates  for topological quantum field theories}

\author[1]{Michael E. Beverland} 
\author[2]{Oliver Buerschaper}
\author[3]{Robert Koenig}
\author[1]{Fernando Pastawski}
\author[1]{John Preskill}
\author[4]{Sumit Sijher}
\affil[1]{Institute for Quantum Information \& Matter, California Institute of Technology,  Pasadena CA 91125, USA}

\affil[2]{Dahlem Center for Complex Quantum Systems, Freie Universit\"at Berlin, 14195 Berlin, Germany }
  
 \affil[3]{Institute for Advanced Study \& Zentrum Mathematik, Technische Universit\"at M\"unchen, 85748 Garching, Germany}

 \affil[4]{Institute for Quantum Computing \& Department of Applied Mathematics, University of Waterloo, Waterloo, ON N2L 3G1, Canada}
\maketitle

\begin{abstract}
We study restrictions on locality-preserving unitary logical gates for topological quantum codes in two spatial dimensions.  A locality-preserving operation is one which maps local operators to local operators --- for example, a  constant-depth quantum circuit of geometrically local gates, or evolution for a constant time governed by a geometrically-local bounded-strength Hamiltonian. Locality-preserving logical gates of topological codes are intrinsically fault tolerant because spatially localized errors remain localized, and hence sufficiently dilute errors remain correctable. By invoking general properties of two-dimensional topological field theories, we find that the locality-preserving logical gates are severely limited for codes which admit non-abelian anyons; in particular, there are no locality-preserving logical gates on the torus or the sphere with~$M$ punctures if the braiding of anyons is computationally universal. Furthermore, for Ising anyons on the $M$-punctured sphere, locality-preserving gates must be elements of the logical Pauli group. We derive these results by relating logical gates of a topological code to automorphisms of the Verlinde algebra of the corresponding anyon model, and by requiring the logical gates to be compatible with basis changes in the logical Hilbert space arising from local $F$-moves and the mapping class group.
\end{abstract}

\tableofcontents
\section{Introduction}
In order to reliably compute, it is necessary to protect information 
against  noise. For quantum computations, this is particularly challenging because
noise in the form of decoherence threatens the very quantum nature of the process.
Adding redundancy by encoding information into a quantum error-correcting code
is a natural, conceptually appealing approach towards building noise-resilient
scalable computers based on imperfect hardware.

Among the known quantum error-correcting codes, the class of so-called topological codes stands out. Examples in 2D include 
the toric code and quantum double models~\cite{Kitaev03}, the surface codes~\cite{BK:surface},  the 2D color codes~\cite{BMD:topo}, variants of these codes~\cite{BombinTwists,Fowler08}, and the Levin-Wen model~\cite{LevinWen}. In 3D, known examples are  Bombin and Martin-Delgado's  3D~color code ~\cite{Bombin06}, as well as Haah's~\cite{Haah11} and Michnicki's~\cite{Michnicki12} models. 
These codes are attractive  for a number of reasons: their code space is topologically protected, meaning that small local deformations or locally acting noise do not affect encoded information. The degree of this protection (measured in information-theoretic notions in terms of code distance, and manifesting itself in physical properties such as gap stability) scales with the system size: in other words, robustness essentially reduces to the question of  scalability.  Finally, the code space of a topological code is the degenerate ground space of a geometrically local Hamiltonian: this means that
syndrome information can be extracted by local measurements, an important feature for actual realizations.  Furthermore, this implies that a topological code  is essentially a  phase of a many-body system and can be characterized in terms of its particle content, their statistics, and the quantum field theory emerging in the continuum limit.  In particular, the quantum field theory provides a description of such systems which captures all universal features, independently of microscopic details.

While quantum error-correcting codes can provide the necessary protection of  information against noise,  a further requirement for quantum computation is the ability to execute gates in a robust manner. Again, topological codes stand out: they usually 
provide certain intrinsic  mechanisms for executing gates in a robust way. More precisely,
there are sequences of local code deformations, under which the information stays encoded in a code with macroscopic distance, but undergoes some unitary transformation.   
In principle, this provides a robust implementation of computations by sequences of  local, and hence, potentially experimentally realizable actions. In the case of $2D$-topological codes described by topological quantum field theories, this corresponds to adiabatic movement (braiding) of quasi-particle excitations (also called anyons).

Unfortunately, as is well known, braiding (by which we mean the movement either around each other or more generally around non-trivial loops) of anyons does not always give rise to a universal gate set. Rather, the set of gates is model-dependent: braiding of $D(\mathbb{Z}_2)$-anyons generates only global phases on the sphere, and elements of the Pauli group on non-zero genus surfaces. 
Braiding of Ising anyons gives Clifford gates, whereas braiding of Fibonacci anyons generates a dense subgroup of the set of unitaries (and is therefore universal within suitable subspaces of the code space). In other words, braiding alone, without additional tricks such as magic state distillation~\cite{BravyiKitaev2005} (which has a large overhead~\cite{Fowler2012}), is not in general sufficient to provide universal fault-tolerant computation; unfortunately, the known systems with universal braiding behavior are of a rather complex nature, requiring e.g., 12-body interactions among spins~\cite{LevinWen}. Even ignoring the question of universality, the use of braiding has some potentially significant drawbacks: in general (for non-abelian anyons), it requires an amount of time which scales with the system size (or code distance) to execute a single logical gate. (Mathematically, this is reflected by the fact that string-operators cannot be implemented in constant depth for general non-abelian anyon models -- in contrast to e.g., the toric code\footnote{In the language of this paper, braiding/mapping class group elements belong to locality-preserving unitaries if the model is abelian. However, for a general non-abelian model, braiding is not locality-preserving according to our definition.}.) This implies that error-correction steps will be necessary even during the execution of such a gate (see e.g.,~\cite{PedrocchiDiVincenzo15,Hutter2015,Burton2015,Brell2014} for a recent discussion of the robustness of braiding). This may pose an additional technological challenge, for example, if the intermediate topologies are different.

Given the limitations of braiding, it is natural to look for other mechanisms for implementing robust gates in topological codes. For  stabilizer quantum codes,
the notion of transversal gates has traditionally been  used almost synonymously
with  fault-tolerant gates: their key feature is the fact that they do not propagate physical errors. More generally, for topological stabilizer codes, we can consider logical gates implementable by constant-depth quantum circuits as a proxy for robust gates: they can increase the weight of a physical error only by a constant, and are thus sufficiently robust when combined with suitable error-correction gadgets. Note that finite-depth local circuits represent a much broader class than transversal gates.

Gate restrictions on transversal, as well as constant-depth local circuits have been obtained for stabilizer and more general codes. Eastin and Knill~\cite{EastinKnill2009} argued that for any code protected against local errors, transversal gates can only generate a finite group and therefore do not provide universality.  Bravyi and K\"onig~\cite{BravyiKoenig2013} consider the group of logical gates that may be implemented by such constant-depth local circuits on geometrically local topological stabilizer codes. 
They found that such gates are contained in $\cP_D$, the $D$-th level of the Clifford hierarchy, where $D$ is the spatial dimension in which the stabilizer code is geometrically local. 

In this work, we characterize the set of gates implementable
by a locality-preserving unitary in a system described by a 2D~TQFT. By doing so, we both specialize and generalize the results of~\cite{BravyiKoenig2013}: 
we restrict our attention to dimension~$2$, but go beyond
the set of local stabilizer codes in two significant ways.

First, we obtain statements 
which are independent of the particular realization (e.g., the toric code model) but are instead phrased in terms of the TQFT (i.e., the anyon model describing the system). In this way, we obtain a characterization which holds for a gapped phase of matter, rather than just for a particular code representing that phase. On a conceptual level, this  is  similar in spirit to the work of~\cite{elseetal12},  where statements on the computational power for measurement-based quantum computation were obtained that hold throughout a  certain phase.  Here we use the term phase loosely -- we say that two systems are in the same phase if they have the same particle content. To avoid having to make  any direct reference to an underlying lattice model, we replace the notion of a constant-depth local circuit by the more general notion of a locality-preserving unitary: this is a unitary operation which maps local operators to local ones. 

Second, our results and techniques also apply to non-abelian anyon models (whereas stabilizer codes only realize certain abelian models, unless e.g., domain walls or `twists' are added~\cite{BombinTwists} that break homogeneity). In particular, we obtain statements that can be applied, e.g., to the Levin-Wen models~\cite{LevinWen}, as well as chiral phases. For such systems, restrictions on protected gates were previously not known.  Again, knowledge of the underlying microscopic model is unnecessary to apply our results, which only depend on the type of anyons present in the system. Our approach relates locality-preserving unitaries to certain symmetries of the underlying anyon model; this imposes constraints on the allowed operations.  We consider the Fibonacci and Ising models as paradigmatic examples and find that there are no non-trivial gates in the former, and only Pauli operations in the latter case. Our focus on these anyons models is for concreteness only, but our methods and conclusions apply more generally.   Some of our more general conclusions are that 
\begin{enumerate}[(i)]
\item
protected gates generically (see Section~\ref{sec:equivalenceclasses} discussing  the necessity of certain technical assumptions) form only a finite group and 
\item
when the representation of the mapping class group is computationally universal (i.e., forms a dense subgroup), then there are no non-trivial protected gates.
\end{enumerate}
Our observations are summarized in Table~\ref{tab:Braiding_VS_locality_preserving_gates}. According to our results, the 
class of locality-preserving unitaries (which is distinguished from the point of view of error correction) is too restricted and needs to be supplemented with alternative mechanisms to achieve universality.

\begin{table}[htbp]
\begin{center}
\begin{tabular}{lll}
Model & mapping class group &  locality-preserving\\
  & contained in  & unitaries contained in\\
\hline\hline
$D(\mathbb{Z}_2)$ & Pauli group & restricted Clifford group\\
\hline
abelian anyon model & generalized Pauli group &  generalized Clifford group\\
\hline
Fibonacci model & universal & global phase (trivial)\\
\hline
general anyon model & universal  & global phase (trivial)\\
\hline
Ising model & Clifford group & Pauli group\\
\hline
generic anyon model & model-dependent & finite group
\end{tabular}
\caption{
We study different anyon models (first column).
The second column describes the properties of the
unitary group generated by the (projective) representation of the mapping class group (see Section~\ref{sec:mcg}) -- this corresponds to braiding for punctured spheres. The third  column characterizes the set of protected gates. 
Our results suggest a trade-off between the computational power of the mapping class group representation and that of gates implementable by locality-preserving unitaries. 
}\label{tab:Braiding_VS_locality_preserving_gates} 
\end{center}
\end{table}

Finally, let us comment on limitations, as well as open problems arising from our work.  
The first and most obvious one is the dimensionality of the systems under consideration: our methods apply only to $2D$~TQFTs. 
The mathematics of higher-dimensional TQFTs is less developed, and currently an active research area (see e.g.,~\cite{KongWen14}). 
While the techniques of~\cite{BravyiKoenig2013}, which have recently been significantly strengthened by Pastawski and Yoshida~\cite{PastawskiYoshida14},  also apply to higher-dimensional codes (such as Haah's), they are restricted to the stabilizer formalism (but importantly,~\cite{PastawskiYoshida14} also obtain statements for subsystem codes). Obtaining non-abelian analogues of our results in higher dimensions appears  to be a challenging research problem. A full characterization of the  case $D=3$ is particularly desirable from a technological viewpoint.

Even in $2D$, there are obvious limitations of our results: the systems we consider are essentially ``homogenous'' lattices with anyonic excitations in the bulk. We are not considering defect lines, or condensation of anyons at boundaries; 
for example, our discussion excludes  the quantum double models constructed in~\cite{BeigiShorWhalen11}, which have domain walls constructed from condensation at boundaries using the folding trick. Again, we expect that obtaining statements on protected gates for these models requires additional technology in the form of more refined categorical notions, as discussed by Kitaev and Kong~\cite{KitaevKong12}. Also, although we identify possible locality preserving logical unitaries, our arguments do not show that these can necessarily be realized, either in general TQFTs or in specific models that realize TQFTs. Lastly, our work is based on the (physically motivated) assumption that a TQFT description is possible and the underlying data is given. For a concrete lattice model of interacting spins, the problem of identifying this description (or associated invariants~\cite{KitaevPreskill06,LevWenTopOrd,Haah14}), as well as constructing the relevant string-operators (as has been done for quantum double models~\cite{Kitaev03,BombinDelgado08} as well as the Levin-Wen models~\cite{LevinWen}), is a problem in its own right.

\subsubsection*{Rough statement of problem }
Our results concern families of systems defined on any $2$-dimensional orientable manifold (surface)~$\Sigma$, which we will take to be closed unless otherwise stated. Typically, such a family is defined in terms of some local physical degrees of freedom (spins) associated with sites of  a lattice embedded in~$\Sigma$.  We refer to the joint Hilbert space~$\cH_{\textrm{phys},\Sigma}$ of these spins as the `physical' Hilbert space.  The Hamiltonian~$H_\Sigma$ on~$\cH_{\textrm{phys},\Sigma}$ is local, i.e., it consists only of interactions
between ``neighbors'' within constant-diameter regions on the lattice. More generally, assuming a suitable metric on~$\Sigma$ is chosen, we may define locality in terms of the distance measure on~$\Sigma$.

We are interested in the  ground space~$\cH_\Sigma$  of $H_\Sigma$. For a topologically ordered system, this ground space is degenerate with dimension growing exponentially with the genus of~$\Sigma$, and is therefore suitable for storing and manipulating quantum information.  We will give a detailed description of
this space below (see Section~\ref{sec:TQFTbackground}); it has a preferred basis
consisting of labelings associated with some set~$\labels$. This is a finite set
characterizing all distinct types of anyonic quasiparticle excitations of~$H_\Sigma$ in the relevant low energy sector of $\cH_{\textrm{phys},\Sigma}$.

Importantly, the form of~$\cH_\Sigma$ is independent of the microscopic details (in the definition of $H_\Sigma$): it is fully determined by the associated TQFT. In mathematical terms, it can be described in terms of  the data of a modular tensor category, which also describes fusion, braiding and twists of the anyons. We will refer to~$\cH_\Sigma$ as the TQFT Hilbert space.

The significance of~$\cH_\Sigma$ is that it is protected:  local observables can not distinguish between states belonging to~$\cH_\Sigma$. This implies that $\cH_{\Sigma}$ is an error-correcting code with the property that local regions are correctable: any operator supported in a small region which preserves the code space must act trivially on it (otherwise it could be used to distinguish between ground states). 

To compute fault-tolerantly, one would like to operate on information encoded in the code space~$\cH_\Sigma$ by acting  with a unitary
$U:\cH_{\textrm{phys},\Sigma}\rightarrow\cH_{\textrm{phys},\Sigma}$ on the physical degrees of freedom\footnote{ In principle, we could consider
unitaries/isometries (or sequences thereof) of the form 
$U:\cH_{\textrm{phys},\Sigma}\rightarrow\cH'_{\textrm{phys},\Sigma'}$ which map between {\em different }systems~$\cH_{\textrm{phys},\Sigma}$ and $\cH'_{\textrm{phys},\Sigma'}$.
By a slight modification of the arguments here, we could then 
 obtain restrictions on locality-preserving isomorphisms (instead of automorphisms, cf.~Section~\ref{sec:nonabelian}). Such a scenario was discussed in~\cite{BravyiKoenig2013} in the context of stabilizer codes. Here we restrict to the case where
the systems (and associated ground spaces) are identical for simplicity,
since the main conclusions are identical.}. There are a number of features that are desirable for such a unitary to be useful  -- physical realizability being an obvious one. For fault-tolerance, two conditions are particularly natural: 
\begin{enumerate}[(i)]
\item the unitary $U$ should preserve the code space, $U\cH_\Sigma= \cH_\Sigma$ so that the information stays encoded. We call a unitary $U$ with this property an automorphism of the code and denote its restriction to $\cH_\Sigma$ by $\logical{U}:\cH_\Sigma\rightarrow\cH_\Sigma$. The action~$\logical{U}$  defines the logical operation or gate that $U$ realizes.
\item 
 typical errors should remain correctable  under the application of the unitary~$U$. 
 In the context of topological codes, which correct sufficiently local errors, and where a  local error model is usually assumed,
this condition is satisfied if $U$ does not significantly change the locality properties of an operator: if an operator $X$ has support on a region~$\cR\subset\Sigma$, then the support of $UXU^\dagger $ is contained within a constant-size neighborhood of~$\cR$.  We call such a unitary a locality-preserving unitary. 
\end{enumerate}
We call a unitary $U$ satisfying~(i) and (ii) a locality-preserving unitary automorphism of the code (or simply a topologically protected gate).  
Our goal is to characterize the set of logical operations that have the form~$\logical{U}$ for some locality-preserving\footnote{
As a side remark, we mention that our terminology is chosen with spin lattices in mind. 
However, the notion of locality-preservation can be relaxed. As will become obvious below, our results apply more generally to the set of {\em homology-preserving} automorphisms~$U$. The latter can be defined as follows: if the support of an operator $X$ is contained in a region~$\cR\subset\Sigma$ which deformation retracts to a closed curve~$C$, then the support of~$UXU^\dagger$ must be contained in a region~$\cR'\subset\Sigma$ which deformation retracts to a curve~$C'$ in the same homology class as~$C$. For example, for a translation-invariant system, translating by a possibly extensive amount  realizes such a homology-preserving (but not locality-preserving) automorphism.
}
unitary automorphism~$U$.
 For example, if $\cH_{\Sigma}$ is a topologically ordered subspace of $\cH_{\textrm{phys},\Sigma}$, the Hilbert space of a spin lattice, then (ii) is satisfied if $U$ is a constant-depth local circuit.
 Another important example is the constant-time evolution $U=\cT\exp[ -i\int dt H(t) ]$ of a system through a bounded-strength geometrically-local  Hamiltonian $H(t)$.
 Here, Lieb-Robinson bounds \cite{Lieb1972, Bravyi2006} provide quantitative statements on how the resulting unitary may be exponentially well approximated by a locality-preserving unitary.
 This is relevant since it describes the time evolution of a physical system and can also be used to model adiabatic transformations of the Hamiltonian \cite{Chen2010}.

From a computational point of view, the group
\begin{align*}
\langle \left\{\logical{U}\ |\ U\textrm{ locality-preserving unitary automorphism} \right\}\rangle
\end{align*}
generated by such gates is of particular interest: it determines the computational power of gates that are implementable fault-tolerantly with locality preserving automorphisms.

\subsubsection*{Outline}
In Section~\ref{sec:TQFTbackground}, we provide a brief introduction to the relevant concepts of TQFTs. We then derive our main results on the characterization of protected gates in Section~\ref{sec:nonabelian}. Further restrictions on the allowed protected gates are provided in Sections~\ref{sec:glblmcg} and \ref{sec:globalconstraintsfmove}. In Section~\ref{sec:examplesnonabelian}, we apply our results to particular models, deriving in particular our characterizations for  Ising and Fibonacci anyons. Finally, in Section~\ref{sec:abelianmodels} we use additional properties of abelian models to show that their protected gates must be contained within a proper subgroup of the generalized Clifford group, which is similar to the result of~\cite{BravyiKoenig2013}, but goes further. 

\section{TQFTs: background\label{sec:TQFTbackground}}
In this section, we provide the necessary  background on topological quantum field theories (TQFTs). Our discussion will be rather brief; for a more detailed discussion of topological quantum computation and anyons, we refer to~\cite{preskillnotes}. Following Witten's  work~\cite{Witten89}, TQFTs have been axiomatized by  Atiyah~\cite{Atiyah89} based on Segal's work~\cite{Segal} on conformal field theories. Moore and Seiberg~\cite{MooreSeiberg98} derived the relations satisfied by the basic algebraic data of such theories (or more precisely, a modular functor). Here we borrow some of the terminology developed in full generality by Walker~\cite{Walker91}  (see  also~\cite{FreedmanKitaevWang02}). For a thorough treatment of the category-theoretic concepts, we recommend the appendix of~\cite{Kitaev05}.

Our focus is on the Hilbert space~$\cH_{\Sigma}$ spanned by the vacuum states of a TQFT defined on the orientable surface~$\Sigma$. 
Recall that this is generally a subspace~$\cH_{\Sigma}\subset\cH_{\textrm{phys},\Sigma}$ of a Hilbert space of physical degrees of freedom. The TQFT is specified by a finite set of anyon labels $\labels = \{1,a,b,c \dots \}$, their {\em fusion rules} (described using a non-negative integer $N^c_{ab}$ for each triple of anyons $a,b,c$, called fusion multiplicities), along with $S$, $F$, $R$ and $T$ matrices (complex valued matrices with columns and rows indexed by anyon labels). If the TQFT arises from taking continuous limits of a local Hamiltonian model such as the toric code, the anyons are simply the elementary excitations of the model, and the fusion rules and matrices can be understood in terms of creating, combining, moving and annihilating anyons in the surface. The anyon set must contain a trivial particle $1 \in \labels$ such that when combined with any particle, the latter remains unchanged $N^c_{a1}=N^c_{1a} = \delta^c_a$, and each particle $a \in \labels$ must have an antiparticle $\dual{a} \in \labels$ such that $N^1_{a\dual{a}} \neq0$.  We will restrict our attention to models where $N^c_{ab}\in\{0,1\}$ for all $a,b,c\in \labels$ for simplicity (our results generalize with only minor modifications).

\subsection{String-like operators and relations\label{sec:stringlikeopsandrelations}}

We are interested in the algebra $\cA_\Sigma$ of operators $X:\cH_{\textrm{phys},\Sigma}\rightarrow\cH_{\textrm{phys},\Sigma}$ which preserve the subspace $\cH_{\Sigma}$. 
We call such an element $X\in\cA_\Sigma$ an automorphism and denote by $\logical{X}:\cH_{\Sigma}\rightarrow\cH_{\Sigma}$ the restriction to $\cH_{\Sigma}$. We call $X$ a representative (or realization)  of~$\logical{X}$. 
Operators of the form $\logical{X}$,  where $X\in\cA_\Sigma$, 
define an associative $*$-algebra~$\logical{\cA_\Sigma}$ with unit and multiplication~$\logical{X}\logical{Y}=\logical{XY}$. 
The unit element in~$\logical{\cA_\Sigma}$ is represented by the identity operator~$\id$ on the whole space~$\cH_{\textrm{phys},\Sigma}$.

Our constraints on protected gates are derived by studying how they transform certain operators acting on~$\cH_{phys,\Sigma}$ (see Fig.~\ref{fig:TopologicalLoops}). To define the latter, fix a simple closed curve~$C:[0,1]\rightarrow\Sigma$ on the surface and an ``anyon label'' $a\in\labels$. (The set of labels~$\labels$ is determined by the underlying model.)
Then there is a ``string-operator'' $F_a(C)$ acting on $\cH_{\textrm{phys},\Sigma}$, supported in a constant-diameter neighborhood of~$C$.  It corresponds to the process of creating a particle-anti\-particle-pair $(a,\dual{a})$, moving $a$ along~$C$, and subsequently fusing to the vacuum. The last step in this process involves projection onto the ground space, which is not trivial in general: the operator~$F_a(C)$ can involve post-selection, in which case it is a non-unitary element of~$\cA_\Sigma$.

The operators $\{F_a(C)\}_{a\in\labels}$ form a closed subalgebra~$\cA(C)\subset\cA_\Sigma$: they preserve the ground space and
satisfy
\begin{align}
F_a(C)F_b(C)&=\sum_n N_{ab}^n F_n(C)\ ,\qquad\qquad F_a(C)^\dagger=F_{\dual{a}}(C)\qquad\textrm{ and }\qquad F_1(C)=\id_{\cH_{\textrm{phys}}}\  \label{eq:fproductrule}
\end{align}
for the fusion multiplicities~$N_{ab}^n$ (see Section~\ref{sec:verlinde}). In addition, reversing the direction of $C$, i.e., considering $C^{-1}(t)\equiv C(1-t)$, is equivalent to exchanging the particle with its antiparticle, i.e.,
\begin{align}
F_a(C^{-1})=F_{\dual{a}}(C)\ .\label{eq:fcinversedef}
\end{align} 
Here $a\mapsto\dual{a}$ is an involution on the set of particle labels~$\labels$, again defined by the underlying model. Properties~\eqref{eq:fproductrule} and~\eqref{eq:fcinversedef} of the string-operators can be shown in the diagrammatic formalism mentioned below (but this is not needed here; we will use them as axioms).

 We denote the restriction of $F_a(C)$ to the code space~$\cH_\Sigma$ by $\logical{F_a(C)}$.
Note that, while $\logical{F_a(C)}$ is unitary in abelian anyon models, this is not the case in general.

\begin{figure}
\begin{center}
\includegraphics[width=0.6\textwidth]{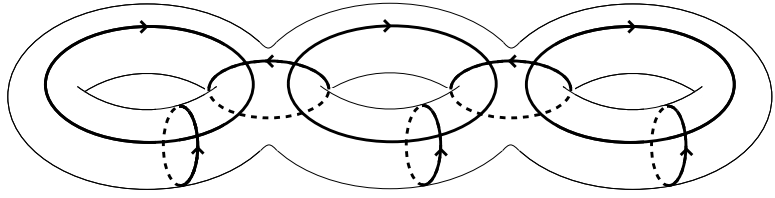}
\end{center}
\caption{Closed $2$-manifolds are characterized by their genus~$g$. The figure illustrates the 3-handled torus~$\Sigma_g$ corresponding to~$g=3$. 
A canonical set of $3g-1$ generators of the mapping class group of the surface~$\Sigma_g$ can be specified in terms of a set~$\cG=\{C_j\}_{j=1}^{3g-1}$ of loops (each associated with a Dehn twist). 
Dragging an anyon~$a$  around such  loop~$C:[0,1]\rightarrow\Sigma_g$ and fusing to the vacuum implements an undetectable operator~$F_a(C)$;   homologically non-trivial loops realize logical operations. The full algebra of logical operators is generated by the set of operators~$\{F_{a}(C)\}_{a\in\labels,C\in\cG}$. However, these operators are generally not independent.}
\label{fig:TopologicalLoops} 
\end{figure}

\begin{example}[$D(G)$ and Kitaev's toric code]\label{ex:kitaevtoric}
As an example, consider a model described by the quantum double $D(G)$ of a finite group $G$, for which Kitaev has constructed a lattice model~\cite{Kitaev03}. In the case where $G$ is abelian, we have $D(G)\cong G\times G$, i.e., the particles and fusion rules are simply given by the product group~$\labels=G\times G$. 

Specializing to $G=\mathbb{Z}_2$ gives the particles commonly denoted by $1=(0,0)$ (vacuum), $m=(1,0)$, $e=(0,1)$ and $\epsilon=m\times e=(1,1)$.  For the toric code model, the associated ribbon operators are 
\begin{align*}
F_1(C)=\id\quad
F_e(C)=\bar{X}(C)\quad
F_m(C)=\bar{Z}(C)\quad
F_\epsilon(C)=\bar{X}(C)\bar{Z}(C)\ ,
\end{align*}
where $\bar{X}(C)=\otimes_{j\in \partial_+ C} X_j$ and $\bar{Z}(C)=\otimes_{j\in \partial_- C}Z_j$ are appropriate tensor products of Pauli-$X$ and Pauli-$Z$-operators along~$C$ (as specified in~\cite{Kitaev03}). 

Specializing to $G=\mathbb{Z}_N$, with $\omega_N=\exp(2\pi i/N)$ and generalized $N$-dit Pauli operators $X$ and $Z$ (and their inverses), 
defined by their action
\begin{align*}
X\ket{j}=\ket{j+1\mod N}\qquad Z\ket{j}=\omega_N^j \ket{j}
\end{align*}
on computational basis states~$\{\ket{j}\}_{j=0,\ldots,N-1}$, we can consider such a model (the $\mathbb{Z}_N$-toric code) with generalized ribbon operators. Here
\begin{align*}
F_{(a,a')}(C)&=\bar{X}(C)^a\bar{Z}(C){^{a'}},\
\end{align*}
where $\bar{X}(C)$ is a  tensor product of Pauli-$X$ and its inverse depending on the orientation of the underlying lattice,  and  similarly for $\bar{Z}(C)$.

It is easy to check that operators associated with the same loop commute, i.e.,
\begin{align}
[F_{(a,a')}(C), F_{(b,b')}(C)]=0\ ,\label{eq:commutatrelxv}
\end{align}
and since $Z^aX^b=\omega_N^{ab}X^bZ^a$, we get the commutation relation 
\begin{align}
F_{(a,a')}(C_1)F_{(b,b')}(C_2)&=\omega_N^{a b'-a' b} F_{(b,b')}(C_2)F_{(a,a')}(C_1)\ \label{eq:commutationrelationszn}
\end{align}
for any two strings~$C_1,C_2$ intersecting once.
\end{example} 

Returning to the general case, the 
algebra of string operators does not necessarily satisfy relations as simple as~\eqref{eq:commutatrelxv} and~\eqref{eq:commutationrelationszn}. Nevertheless, some essential features hold under very general assumptions.
We express  these as postulates; they can be seen as a subset of the  isotopy-invariant calculus of labeled ribbon graphs associated with the underlying category (see e.g.,~\cite{Freedmanetal08} for a discussion of the latter). That is, the properties expressed by our postulates are a subset of the 
axioms formalizing TQFTs, and serve to capture the essential features in an algebraic manner. For particular systems (such as the toric code or the quantum double models), these postulates can be rigorously established (see~\cite{Kitaev03,BombinDelgado08}), whereas in other cases, only partial results are known (see e.g., the discussion in~\cite[p. 107]{wangbook}) but they are conjectured to hold. We sidestep the independent important and challenging problem of rigorously establishing these postulates, and instead  derive some consequences. Throughout our work, we hence assume that the models under consideration satisfy our postulates.

\begin{postulate}[Completeness of string-operators]\label{pos:completeness_of_strings}
Consider an operator $U$ with support in some region~$\cR$ which preserves the code space $\cH_\Sigma$. 
Then its action on the code space is equivalent to that of a linear combination of products of operators of the form $F_a(C)$, for a closed loop $C:[0,1]\rightarrow\cR$ which is supported in~$\cR$. 
That is, we have
\begin{align*}
\logical{U}=\sum_j \beta_j \prod_k \logical{F_{a_{j,k}}(C_{j,k})}\ .
\end{align*}
\end{postulate}
This postulate essentially means that, as far as the logical action is concerned, we may think of $\logical{U}$ as a linear combination of products of closed-loop string operators. 
Such products~$F_{a_m}(C_m)\cdots F_{a_1}(C_1)$  can conveniently be thought of as `labeled' loop gases embedded in the three-manifold~$\Sigma\times [0,1]$, where, for some $0<t_1<\cdots <t_m<1$, the operator $F_{a_j}(C_j)$ is applied at `time' $t_j$ (and hence a labeled loop is embedded in the slice $\Sigma\times \{t_j\}$). Diagrammatically, one represents such a product by the projection onto~$\Sigma$ with crossings representing temporal order, as in
\begin{align}
F_{a_2}(C_2)F_{a_1}(C_1) \ \ &= \quad \raisebox{-6ex}{\includegraphics[scale=0.5]{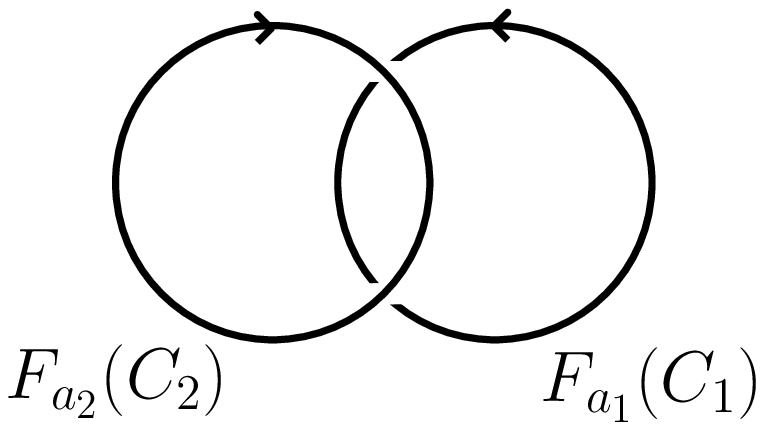}}
\label{eq:Fstringloopcrossing}
\end{align}
One may manipulate every term in a linear combination representing $U$ without changing the logical action according to certain local `moves'; in particular, the order of application of these moves is irrelevant (a fact formalized by MacLane's theorem~\cite{MacLane}).

\begin{figure}
\begin{center}
\includegraphics[width=0.4\textwidth]{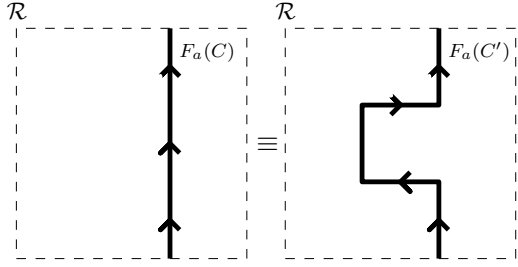}
\end{center}
\caption{The content of Postulate~\ref{pos:stringdeformationpostulate}:  We can deform a line without changing the logical action of the
string-operator\label{fig:deformationpostulate}.}

\end{figure}
For our purposes, we only require the following `local' moves,
which relate two  products $U$ and $U'$ of string-operators given 
by diagrams such as~\eqref{eq:Fstringloopcrossing}. 
More generally, they may be applied term-by-term to any linear combination if each term contains the same local sub-diagram. 
\begin{postulate}[String deformation (see Fig.~\ref{fig:deformationpostulate})]\label{pos:stringdeformationpostulate}
Suppose operators $U,U' \in \cA_\Sigma$ are identical on the complement of some region $\cR$.
Assume further that inside~$\cR$, both $U$ and $U'$ contain a single string describing the dragging of the same anyon type along a path $C$ and $C'$, respectively,  where $C'$ can be locally deformed into~$C$. Then the logical action of  $U$ and $U'$ must be equivalent: $\logical{U}=e^{i \theta}\logical{U'}$ for some unimportant phase $e^{i \theta}$.
\end{postulate}
In particular, this postulate  implies that if $C$ and $C'$ are two closed homologically equivalent loops and $a$ is an arbitrary anyon label, then the operators $F_a(C)$ and $F_a(C')$ realized by ``dragging'' the specified anyon along $C$ and $C'$ respectively have  equivalent logical action on the code space,  $\logical{F_a(C)}=e^{i\theta}\logical{F_{a}(C')}$. 

The next postulate involves local operators, and essentially states that the space~$\cH_\Sigma$ is a quantum error-correcting code protecting against local errors. While we may state it in a form only referring to local operators, we will find it more intuitive to combine it with the deformation postulate: this extends correctability from small regions to contractible loops (i.e., loops that are homotopic to a point). 
\begin{postulate}[Error correction postulate]\label{pos:errorcorrection}
If $C$ is a contractible loop, then for each $a\in\labels$, the operator~$F_a(C)$ has trivial action on the space~$\cH_\Sigma$ up to a global constant~$d_a$, that is, 
\begin{align}
\logical{F_a(C)}=d_a \id_{\cH_\Sigma}\ .
\end{align}
\end{postulate}
\noindent This postulate essentially means that we may remove certain closed loops from diagrams such as~\eqref{eq:Fstringloopcrossing}.  

An immediate consequence of these postulates is the following statement. 
\begin{proposition}[Local completeness of string operators]\label{prop:CompletenessFatStrings}
Consider an operator~$O\in\cA_\Sigma$ whose support is contained within a constant-diameter neighborhood of a simple loop $C$. Then $\logical{O}=\logical{\tilde{X}}$ for some  $\tilde{X}\in \cA(C)$.  In other words, the logical action of $O$ is identical to that of a linear combination of string-operators~$F_a(C)$.
\end{proposition}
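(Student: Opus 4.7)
The strategy is to chain the three preceding postulates together. First, apply Postulate~\ref{pos:completeness_of_strings} with region $\cR$ equal to the given constant-diameter neighborhood of $C$, which contains $\supp(O)$ by hypothesis. This yields
\begin{align*}
\logical{O}=\sum_j \beta_j \prod_k \logical{F_{a_{j,k}}(C_{j,k})}
\end{align*}
for some closed loops $C_{j,k}\subset\cR$, so it suffices to show that each factor $\logical{F_{a_{j,k}}(C_{j,k})}$ can be rewritten as an element of $\logical{\cA(C)}$ and that the resulting product collapses to a linear combination of $\logical{F_b(C)}$'s.

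Second, exploit the topology of $\cR$. Since $\Sigma$ is orientable and $C$ is a simple closed curve, a sufficiently thin tubular neighborhood $\cR$ is homeomorphic to the annulus $S^1\times[0,1]$, whose fundamental group is generated by (a curve isotopic to) $C$. Consequently every closed loop $C_{j,k}\subset\cR$ is freely homotopic to $C^m$ for some integer $m\in\mathbb{Z}$, with $m=0$ corresponding to a contractible loop. By Postulate~\ref{pos:stringdeformationpostulate}, this homotopy can be realized by local deformations that preserve the logical action up to a global phase. In the case $m=0$, Postulate~\ref{pos:errorcorrection} replaces the factor by the scalar $d_{a_{j,k}}$. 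In the case $m=\pm 1$, relation~\eqref{eq:fcinversedef} absorbs orientation reversal into the relabelling $a\mapsto\dual{a}$, so the factor equals $\logical{F_b(C)}$ for some $b\in\labels$. For $|m|\geq 2$, the loop winding $m$ times can be locally deformed inside the annular region into $|m|$ parallel copies of $C$ placed at distinct `time slices' within $\Sigma\times[0,1]$, producing a product of operators of the form $\logical{F_a(C)}$.

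Finally, collect everything. Each summand in the expression for $\logical{O}$ has been rewritten (up to scalars) as a product of operators of the form $\logical{F_b(C)}$ with $b\in\labels$; by iterated application of the fusion rule~\eqref{eq:fproductrule}, any such product is itself a linear combination of single factors $\logical{F_n(C)}$, and $\cA(C)$ is closed under linear combinations. Hence $\logical{O}=\logical{\tilde X}$ for some $\tilde X\in\cA(C)$. The main technical point to verify carefully is the handling of the time-ordered product picture: different orderings of loops at different slices of $\Sigma\times[0,1]$ can in principle produce braiding factors, but since all the relevant loops lie in the annular region $\cR$ and can be deformed independently there (subject only to Postulates~\ref{pos:stringdeformationpostulate} and~\ref{pos:errorcorrection}), such braiding ambiguities reduce to scalar prefactors that do not take us outside $\logical{\cA(C)}$. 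This isotopy-invariance of the logical action is precisely what converts the a priori complicated sum-of-products expression into a single element of the one-loop algebra $\cA(C)$.
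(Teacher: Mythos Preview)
Your proposal is correct and follows the same approach as the paper, which merely states that the proposition ``can be seen as a consequence of the completeness condition for strings (Postulate~\ref{pos:completeness_of_strings}), the string deformation Postulate~\ref{pos:stringdeformationpostulate} and~\eqref{eq:fproductrule}.'' You have simply fleshed out the topological content the paper leaves implicit (that the tubular neighborhood is an annulus, so every loop is freely homotopic to some $C^m$), and added the natural invocations of Postulate~\ref{pos:errorcorrection} for contractible loops and~\eqref{eq:fcinversedef} for orientation reversal.
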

This proposition can be seen as a consequence of the completeness condition for strings (Postulate~\ref{pos:completeness_of_strings}), the string deformation Postulate~\ref{pos:stringdeformationpostulate} and~(\ref{eq:fproductrule}).
A similar argument leads us to the following conclusion.

\begin{proposition}[Global completeness of few homology classes]\label{prop:CompletenessOfHomology}
The full logical algebra $\logical{\cA_\Sigma}$ is generated by the logical algebras ~$\logical{\cA(C)}$ associated with a finite number of inequivalent non-contractible simple loops $C$.
\end{proposition}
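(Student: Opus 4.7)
My plan is to combine the finite-dimensionality of the ground space $\cH_\Sigma$ with the completeness asserted by Postulate~\ref{pos:completeness_of_strings} in order to extract a finite algebraic generating set for $\logical{\cA_\Sigma}$, and then to reduce the loops appearing in that set to simple non-contractible ones using Postulates~\ref{pos:stringdeformationpostulate} and~\ref{pos:errorcorrection} together with Proposition~\ref{prop:CompletenessFatStrings}.

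First, I would observe that $\cH_\Sigma$ is finite-dimensional (its dimension is controlled by the genus of $\Sigma$ through the Verlinde formula), so the logical algebra $\logical{\cA_\Sigma}\subseteq\End(\cH_\Sigma)$ is itself finite-dimensional and therefore finitely generated as an associative algebra. Applying Postulate~\ref{pos:completeness_of_strings} with $\cR=\Sigma$, every element of $\logical{\cA_\Sigma}$ admits an expansion as a finite linear combination of products of logical string operators $\logical{F_a(C)}$ indexed by closed loops $C\subset\Sigma$. Taking a finite algebraic generating set for $\logical{\cA_\Sigma}$ and then collecting every loop that appears in the corresponding expansions yields a \emph{finite} list of closed loops $C_1,\ldots,C_M$ such that $\{\logical{F_a(C_i)}\}_{a\in\labels,\,i=1,\ldots,M}$ generates $\logical{\cA_\Sigma}$; by Proposition~\ref{prop:CompletenessFatStrings} each such generator already lies in $\logical{\cA(C_i)}$, so $\logical{\cA_\Sigma}$ is generated by $\bigcup_i \logical{\cA(C_i)}$.

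It remains to arrange for the $C_i$ to be simple and non-contractible. Any contractible $C_i$ can be dropped outright, since by Postulate~\ref{pos:errorcorrection} we have $\logical{F_a(C_i)}=d_a\,\id_{\cH_\Sigma}$, a scalar contained in every logical algebra. For a loop $C_i$ with transverse self-intersections, I would use Postulate~\ref{pos:stringdeformationpostulate} to locally deform the string near a crossing and then resolve the crossing diagrammatically: a figure-eight configuration splits into two closed loops of strictly smaller total self-intersection, and $\logical{F_a(C_i)}$ rewrites as a linear combination of products of string operators along these simpler loops, each of which sits in $\logical{\cA(\tilde C)}$ for a simple curve $\tilde C$ by Proposition~\ref{prop:CompletenessFatStrings}. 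Iterating terminates in finitely many steps, leaving only finitely many isotopy classes of simple non-contractible loops. The main obstacle in this plan is precisely the crossing-resolution step: Postulates~\ref{pos:completeness_of_strings}--\ref{pos:errorcorrection} as stated do not directly encode the F-move type identities needed to split a figure-eight into its two simple components, so a fully rigorous justification must invoke the broader isotopy-invariant diagrammatic calculus of the underlying modular tensor category alluded to just before the postulates.
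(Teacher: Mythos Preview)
Your argument is correct, but the paper takes a different and somewhat slicker route. Rather than first choosing a finite generating set of $\logical{\cA_\Sigma}$ and then harvesting the loops that appear in its Postulate~\ref{pos:completeness_of_strings} expansions, the paper runs a greedy construction: starting from the scalar algebra, it repeatedly adjoins a loop algebra $\logical{\cA(C)}$ not already contained in the algebra built so far. By Postulate~\ref{pos:stringdeformationpostulate} any such $C$ must be inequivalent to the loops already used, and each inclusion strictly increases the dimension of the generated subalgebra, so the process terminates after at most $(\dim\cH_\Sigma)^2$ steps. This bypasses your top-down extraction entirely.

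One comment on your second paragraph: the crossing-resolution step you flag as the main obstacle is not actually needed. The string operators $F_a(C)$ are only ever defined for \emph{simple} closed curves in Section~\ref{sec:stringlikeopsandrelations}, so the loops produced by Postulate~\ref{pos:completeness_of_strings} may be taken simple from the outset; there is nothing to resolve. Your worry about figure-eight curves and $F$-move identities is therefore extraneous, and dropping it leaves your argument just as complete as the paper's. What your approach buys is an explicit link between a chosen algebraic generating set and the loops that realize it; what the paper's approach buys is brevity and a quantitative bound on the number of loops needed.
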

\begin{proof}
That the algebra $\logical{\cA_\Sigma}$  is finite-dimensional can be seen from the finite dimensionality of the code space $\cH_\Sigma$.
By Postulate \ref{pos:completeness_of_strings}, the algebra $\logical{\cA_\Sigma}$ is generated by $\{\cA(C)\}_C$. 
Let us start from a trivial algebra and build up $\logical{\cA_\Sigma}$ from a finite number of loops.
As long as the algebra is not complete, we may include additional loops $C$ such that $\logical{\cA(C)}$ is not included in the partially generated algebra.
Such a loop $C$ must be inequivalent to the previously included loops due to Postulate \ref{pos:stringdeformationpostulate}.
After a number of steps no greater than the square of the ground space dimension, we will have constructed the complete algebra.
\end{proof}

Therefore there exists a finite, \emph{minimal} set of loops which is sufficient to span $\logical{\cA_\Sigma}$.

\subsection{The Verlinde  algebra\label{sec:verlinde}}
It is convenient to formally introduce some algebraic data defined by the  underlying anyon model. We will return to the discussion of string-operators in Section~\ref{sec:baseshilberspacehsigma} and relate them to this algebraic language.

As before, let $\labels$ be the set of particle labels (generally a finite set), and let $a\mapsto\dual{a}$ be the involution giving the antiparticle associated with particle~$a$. The {\em fusion rules} of the model are encoded in integers $N^c_{ab}$, which are called fusion multiplicites. We will restrict our attention to models where $N^c_{ab}\in\{0,1\}$ for all $a,b,c\in \labels$ for simplicity (our results generalize with only minor modifications).

The {\em Verlinde algebra}~$\Ver$ is the commutative associative $*$-algebra spanned by elements~$\{\f_a\}_{a\in \labels}$ satisfying the relations
\begin{align}
\f_a \f_b=\sum_c N^c_{ab}\f_c\qquad\textrm{ and }\qquad \f_a^\dagger=\f_{\dual{a}}\ .\label{eq:verlindealgebrarelation}
\end{align}
Note that $\f_1=\id$ is the  identity element because the numbers $\{N^c_{ab}\}$ satisfy~$N^c_{a1}=N^c_{1a}=\delta_{ac}$. 

Since every anyon model is braided by definition, one indeed has $N^c_{ab}=N^c_{ba}$ and the algebra~$\Ver$ is a finite-dimensional commutative~$C^*$-algebra. Therefore $\Ver\cong\mathbb{C}^{\oplus (\dim\Ver)}$ is a direct sum of  copies of~$\mathbb{C}$. 
The fusion multiplicity $N_{ab}^c$ may also be written in terms of the modular $S$-matrix, whose matrix elements are, in the diagrammatic calculus, given by the Hopf link and the total quantum dimension $\cD$ by
\begin{align*}
S_{ab} = \frac{1}{\cD}\ \Smatrix{a}{b}\qquad\  .
\end{align*}
We consider (and restrict our attention to) the case where the $S$-matrix is unitary: here the isomorphism~$\Ver\cong\mathbb{C}^{\oplus (\dim\Ver)}$ can be made explicit thanks to the 
{\em Verlinde formula}~\cite{Verlinde88}
\begin{align}
N^c_{ab}&=\sum_x \frac{S_{ax}S_{bx}S_{\dual{c}x}}{S_{1x}}\ ,\label{eq:verlinde}
\end{align}
as the proof of the following Proposition~\ref{prop:primitiveIdempotents} shows. 
(Note that $S_{1x}=d_x/\cD$ where $\cD=\sqrt{\sum_a d_a^2}$.)
  For this purpose, we define 
the  elements
\begin{align}
\p_a &=S_{1a}\sum_b \conjugate{S_{ba}}\f_b\qquad \textrm{ for all } a\in \labels\ .\label{eq:idempotents}
\end{align}
This relation can be inverted by making use of unitarity of the $S$-matrix
\begin{align}
\f_b &=\sum_a \frac{S_{ba}}{S_{1a}}\p_a\qquad \textrm{ for all } a\in \labels\ .\label{eq:idempotentsinverted}
\end{align}
The main statement we use is the following:
\begin{proposition}[Primitive idempotents]\label{prop:primitiveIdempotents}
The elements $\{\p_a\}_{a\in \mathbb{A}}$ 
are  the unique complete set of  orthogonal minimal idempotents 
spanning the Verlinde algebra,
\begin{align}
\Ver=\bigoplus_a \mathbb{C}\p_a\ .\label{eq:verlindealgebradecomposition}
\end{align}
Furthermore,  they satisfy
\begin{align}
\sum_a \p_a=\f_1=\id\ .\label{eq:completenessrelation}
\end{align}
\end{proposition}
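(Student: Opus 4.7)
The plan is to establish the four claims in sequence: orthogonality and idempotence of the $\p_a$, the completeness relation $\sum_a \p_a = \id$, the fact that they span~$\Ver$, and finally the uniqueness/minimality statement. The core computation is the orthogonality relation $\p_a \p_b = \delta_{ab}\,\p_a$, and everything else follows with relatively little work once this is in hand.

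First I would compute $\p_a \p_b$ directly from the definition~\eqref{eq:idempotents}. Substituting and using the fusion rule~\eqref{eq:verlindealgebrarelation} yields
\begin{align*}
\p_a \p_b = S_{1a}S_{1b}\sum_{c,d,e} \conjugate{S_{ca}}\,\conjugate{S_{db}}\, N_{cd}^{e}\,\f_e.
\end{align*}
Now I plug the Verlinde formula~\eqref{eq:verlinde} in for $N_{cd}^e$, re-arrange the sums so that the $c$- and $d$-sums become $\sum_c \conjugate{S_{ca}}S_{cx} = \delta_{ax}$ and $\sum_d \conjugate{S_{db}}S_{dx} = \delta_{bx}$ by unitarity of $S$, and conclude
\begin{align*}
\p_a \p_b = \delta_{ab}\, S_{1a}\sum_e S_{\dual{e}a}\,\f_e.
\end{align*}
Using the standard identity $S_{\dual{e}a} = \conjugate{S_{ea}}$ (which follows from symmetry of~$S$ together with $S^2$ being the charge-conjugation matrix), the right-hand side is exactly $\delta_{ab}\,\p_a$. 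This is the main technical step; the minor obstacle is justifying the identity $S_{\dual{e}a} = \conjugate{S_{ea}}$, which I would quote as a standard property of the modular $S$-matrix (implicit in the diagrammatic calculus the paper is building upon).

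Next I would verify $\sum_a \p_a = \f_1$ by expanding the definition, swapping the order of summation, and using symmetry of~$S$ together with unitarity to obtain $\sum_a S_{1a}\conjugate{S_{ba}} = \sum_a S_{a1}\conjugate{S_{ab}} = \delta_{1b}$, which collapses the sum to $\f_1 = \id$. Spanning is then immediate from the inversion formula~\eqref{eq:idempotentsinverted}, which expresses each generator $\f_b$ as an explicit linear combination of the $\p_a$; equivalently, the change-of-basis matrix from $\{\f_b\}$ to $\{\p_a\}$ is invertible because $S$ is unitary and the $S_{1a} = d_a/\cD$ are nonzero. Combined with orthogonality of the idempotents, this gives the direct sum decomposition~\eqref{eq:verlindealgebradecomposition}.

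Finally, minimality and uniqueness are essentially structural. Each $\mathbb{C}\p_a$ is a one-dimensional ideal by the idempotence relation, so it cannot be decomposed further, and the $\p_a$ are therefore primitive. Uniqueness follows from the general fact that a finite-dimensional commutative $C^*$-algebra has a unique (up to permutation) complete set of orthogonal primitive idempotents -- namely the characteristic functions of the points of its Gelfand spectrum. Since $\Ver$ is such an algebra (as noted in the paragraph preceding the proposition), this concludes the proof. The only real obstacle is the $S$-matrix identity used in the orthogonality computation; every other step is bookkeeping with the Verlinde formula and unitarity.
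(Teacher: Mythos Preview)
Your proposal is correct, and the core computation --- orthogonality of the $\p_a$ via the Verlinde formula and unitarity of $S$, followed by the identity $S_{\dual{e}a}=\conjugate{S_{ea}}$ --- is exactly what the paper does. Spanning via the inversion formula~\eqref{eq:idempotentsinverted} is likewise identical.

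The one genuine difference is in the uniqueness argument. You invoke the structure theory of finite-dimensional commutative $C^*$-algebras (uniqueness of the primitive idempotents as characteristic functions on the Gelfand spectrum), which is clean and entirely valid given that the paper has already noted $\Ver\cong\mathbb{C}^{\oplus|\labels|}$. The paper instead gives a direct elementary argument: writing any candidate idempotent as $\q_b=\sum_a\alpha_{ba}\p_a$, the relation $\q_a\q_b=\delta_{ab}\q_a$ forces $\alpha_{ac}\alpha_{bc}=\delta_{ab}\alpha_{ac}$, hence each $\alpha_{ac}\in\{0,1\}$, and completeness forces the matrix $(\alpha_{ac})$ to be a permutation. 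Your route is shorter and more conceptual; the paper's is self-contained and avoids appealing to Gelfand theory. As a minor point, you explicitly verify the completeness relation $\sum_a\p_a=\f_1$, which the paper states but does not actually prove in the displayed argument.
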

\begin{proof}
That $\{\p_a\}_{a\in \mathbb{A}}$ span the algebra $\Ver$ is evident from the fact that $\{\f_a\}_{a\in \mathbb{A}}$ span the algebra, and each $\f_a$ can be written in terms of $\{\p_a\}_{a\in \mathbb{A}}$ via Eq.~(\ref{eq:idempotentsinverted}). To show they are orthogonal idempotents
$\p_a\p_b=\delta_{a,b}\p_a$, first note that
\begin{align*}
\p_a\p_b&=S_{1a}S_{1b}\sum_{g,h}\conjugate{S_{ga}}\conjugate{S_{hb}}\f_g\f_h\\
&=S_{1a}S_{1b}\sum_{g,h,j}\conjugate{S_{ga}}\conjugate{S_{hb}}N^j_{gh}\f_j\\
&=S_{1a}S_{1b}\sum_{g,h,j,x}\conjugate{S_{ga}}\conjugate{S_{hb}}\frac{S_{gx}S_{hx}S_{\dual{\jmath}x}}{S_{1x}}\f_j
\end{align*}
where we used the Verlinde formula~\eqref{eq:verlinde} in the second step.
With the unitarity of the $S$-matrix, we then obtain
\begin{align*}
\p_a\p_b&=S_{1a}S_{1b}\sum_{j,x}\delta_{a,x}\delta_{b,x}\frac{S_{\dual{\jmath}x}}{S_{1x}}\f_j\\
&=\delta_{a,b}S_{1a}^2\sum_{j} \frac{S_{\dual{\jmath}a}}{S_{1a}}\f_j\\
&=\delta_{a,b}S_{1a}\sum_j S_{\dual{\jmath}a} \f_j\ .
\end{align*}
It  follows that $\p_a\p_b=\delta_{a,b}\p_a$ from the symmetry property $S_{\dual{\jmath}a}=\conjugate{S_{ja}}$, see e.g.,~\cite[Eq.~(224)]{Kitaev05}. It remains to verify that the set of projectors is unique. Consider $\q_b=\sum_a \alpha_{ba} \p_a$ for some constants $\alpha_{ba} \in  \mathbb{C}$, such that $\q_a\q_b=\delta_{a,b}\q_a$. This implies \begin{align*}
\q_a\q_b&=\sum_{dc} \alpha_{ac} \alpha_{bd} \p_c \p_d \\
&=\sum_{c} \alpha_{ac} \alpha_{bc} \p_c =  \delta_{a,b} \sum_{c} \alpha_{ac} \p_c,
\end{align*}
which implies $\alpha_{ac} \alpha_{bc} = \delta_{a,b} \alpha_{ac}$ for all $a,c \in \labels$ by linear independence of the $\p_a$'s. This implies $\alpha_{ac} =0,1$, and can only form a complete basis for the algebra $\Ver$ if $\alpha_{ac}$ is a permutation matrix, implying $\{\q_a\}_{a\in \mathbb{A}} \equiv \{\p_a\}_{a\in \mathbb{A}}$.

\end{proof}
As explained in Section~\ref{sec:baseshilberspacehsigma}, the string operators of anyons around a loop $C$ give rise to a representation of the Verlinde algebra. While the 
projections (introduced in Eq.~\eqref{eq:measurementoperators} below)  associated with the idempotents  
are not a basis for the logical algebra~$[\mathcal{A}_\Sigma]$, they are a basis of 
a subalgebra~$[\mathcal{A}_\Sigma(C)]$ isomorphic to the Verlinde algebra. This algebra must be respected by the locality-preserving unitaries, and this is best understood in terms of the idempotents.
 This is the origin of the non-trivial constraints we obtain on the realizable logical
operators.

\subsection{Bases of the Hilbert space~$\cH_\Sigma$\label{sec:baseshilberspacehsigma}}
Eq.~\eqref{eq:fproductrule} shows that the collection of operators $\{ \logical{F_a(C)}\}_{a\in \labels}$ form a representation of the Verlinde (fusion) algebra~$\Ver$. 
By linear independence of operators $ \{ \logical{F_a(C)} \}_{a\in \labels}$, we see that the representation is faithful, such that the logical loop algebra is isomorphic to the Verlinde algebra
\begin{align}
\logical{\cA(C)}\cong \Ver .\label{eq:loopisomorphictover}
\end{align}
This will be central in the following development.
Considering the primitive idempotents~\eqref{eq:idempotents}, it is natural to consider the corresponding operators in this representation, that is, we set
\begin{align}
\logical{P_a(C)}=S_{1a}\sum_b \conjugate{S_{ba}} \logical{F_b(C)}\ .\label{eq:measurementoperators}
\end{align}
Since the set $\{ \logical{F_a(C)} \}_{a\in \labels}$ forms a representation of the Verlinde algebra, the $\{ \logical{P_a(C)}\}_{a\in \labels}$ are orthogonal projectors as a consequence of Proposition \ref{prop:primitiveIdempotents}.
The inverse relation to~\eqref{eq:measurementoperators} is given by
\begin{align}
\logical{F_b(C)}&=\sum_a\frac{S_{ba}}{S_{1a}} \logical{P_a(C)}\ .\label{eq:inverserelationm}
\end{align}
While the projectors $\logical{P_a(C)}$ associated with a loop do not span the full logical algebra, they do span the local logical algebra of operators supported along $C$ which must be respected by locality preserving unitaries.
Intuitively, $\{P_a(C)\}_{a\in\labels}$ are projectors onto the smallest possible sectors of the Hilbert space which can be distinguished by a measurement supported on $C$.

A state in the image of $P_a(C)$ has the interpretation of carrying flux~$a$ through the loop~$C$. 
In particular, since the code space $\cH_{\Sigma}$ corresponds to the vacua of a TQFT, there are no anyons present on $\Sigma$, however, there can be flux associated to non-contractible loops.
We can use the operators $\{P_a(C)\}_{a,C}$ to define bases of the Hilbert space~$\cH_\Sigma$.

Let us first define the Hilbert space~$\cH_\Sigma$ in more detail.

\begin{definition}[DAP-decomposition]
 Consider a minimal collection $\cC=\{C_j\ |\ C_j:[0,1]\rightarrow\Sigma\}_j$ of pairwise non-intersecting non-contractible loops, which cut the surface~$\Sigma$ into a collection of surfaces homeomorphic to discs, annuli and pants. We call $\cC$ a DAP-decomposition (see Fig.~\ref{fig:dapdecompositiontor}). 
\end{definition}

  \begin{figure}
\begin{center}
\includegraphics[width=0.3\textwidth]{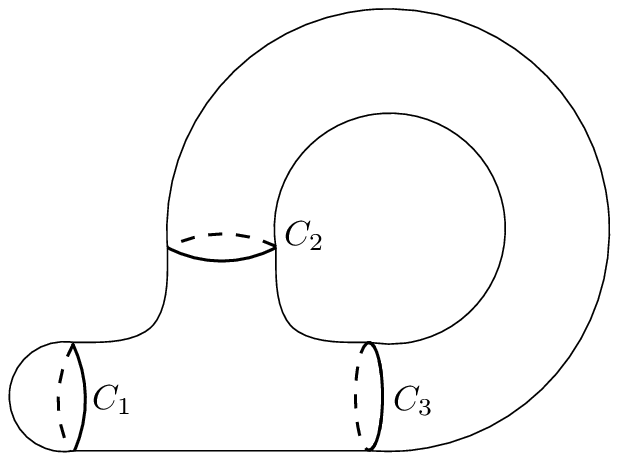}
\end{center}
\caption{A simple DAP decomposition of a torus utilizing a disc enclosed by $C_1$, an annulus enclosed by $\{C_2, C_3\}$ and a pair of pants enclosed by $\{C_1, C_2, C_3\}$.
This decomposition is not minimal in that the same manifold could have been decomposed using a single loop.\label{fig:dapdecompositiontor}}
\end{figure} 

A {\em labeling}~$\labeling{\ell}\colon\cC\mapsto \labels$  is an assignment of an anyon label $\labeling{\ell}(C)$ to every  loop $C\in\cC$ of a DAP decomposition. 
We call $\labeling{\ell}$ fusion-consistent if it satisfies the following conditions:
\begin{itemize}
\item[(i)]  for every loop $C \in \cC$ enclosing a disc on $\Sigma$, $\labeling{\ell}(C)=1$, the vacuum label of the anyon model.
\item[(ii)] for every pair of loops $\{ C_2, C_3\}\subset \cC$ defining an annulus in $\Sigma$, $\labeling{\ell}(C_2)=\dual{\labeling{\ell}(C_3)}$ assuming the loops are oriented such that the annulus is found to the left.
\item[(iii)] for every triple $\{C_1,C_2,C_3\}\subset\cC$ defining a pair of pants in $\Sigma$, the labeling~$\labeling{\ell}$ satisfies the fusion rule
\begin{equation*}
    N_{\labeling{\ell}(C_1),
       \labeling{\ell}(C_2)}^{\dual{\labeling{\ell}(C_3)}}
    \neq
     0,
\end{equation*}
where the loops are oriented such that the pair of pants is found to the left.
\end{itemize}
Here we may assume $\labeling{\ell}(C^{-1})=\dual{\labeling{\ell}(C)}$, where $C^{-1}$ denotes the loop coinciding with $C$ but with opposite orientation.  

Now fix any DAP-decomposition $\cC$ of $\Sigma$
and let $\mathsf{L}(\cC)\subset \labels^{|\cC|}$ be the set of fusion-consistent labelings. The Hilbert space~$\cH_{\Sigma}$ is the formal span of elements of $\mathsf{L}(\cC)$
\begin{align*}
\cH_\Sigma\coloneqq\sum_{\labeling{\ell}\in\mathsf{L}(\cC)} \mathbb{C}\labeling{\ell}=\sum_{\labeling{\ell}\in\mathsf{L}(\cC)} \mathbb{C}\mkern2mu\ket{\labeling{\ell}} .
\end{align*}
Any fusion-consistent labeling $\labeling{\ell}\in\mathsf{L}(\cC)$ defines an element~$\ket{\labeling{\ell}} \in \cH_\Sigma$ such that the vectors $\{\ket{\ell} \}_{\ell\in\mathsf{L}(\cC)}$ are an orthonormal basis (which we call $\cB_{\cC}$) of $\cH_\Sigma$, and  this defines the inner product.

It is important to remark that this construction of~$\cH_\Sigma$ is independent of the DAP-decom\-position $\cC$ of $\Sigma$ in the following sense: if $\cC$ and $\cC'$ are two distinct DAP-decompositions, then there is a unitary basis change between the bases $\cB_{\cC}$ and $\cB_{\cC'}$. In most cases under consideration, this basis change can be obtained as
a product of unitaries  associated with local ``moves'' connecting two DAP decompositions~$\cC$ and~$\cC'$. One such basis change is associated with a four-punctured sphere (the $F$-move), and specified by  the unitary $F$-matrix in Fig.~\ref{fig:Fmove}. Another matrix of this kind, the $S$-matrix (which also arose in our discussion of the Verlinde algebra), connects the two bases $\cB_\cC$ and $\cB_{\cC'}$ of $\cH_{\textrm{torus}}$ associated with the first and second non-trivial cycles on the torus (Fig.~\ref{fig:Fmove}).
  In this case, writing $\cB_\cC=\set{\ket{a}_\cC}_a$ and $\cB_{\cC'}=\set{\ket{a}_{\cC'}}_a$ since each basis element~$\ket{\labeling{\ell}}$ is specified by a single label $\labeling{\ell}(C),\labeling{\ell}(C')\in\labels$, we have the relation
  \begin{align}
  \ket{a}_{\cC'} &= \sum_b S_{ba}\ket{b}_\cC\ .\label{eq:smatrixbasischangetorus}
  \end{align}
Other unitary basis changes arise from the representation of the mapping class group, as discussed in Section~\ref{sec:mcg}.
All these basis changes constitute the second ingredient for the non-trivial constraints we obtain on the realizable logical operators.

A basis element $\ket{\labeling{\ell}}\in\cB_{\cC}$ associates the anyon label~$\labeling{\ell}(C)$ with each curve $C\in\cC$. The vector~$\ket{\ell}$ is the (up to a phase) unique simultaneous $+1$-eigenvector of all the projections $\{P_{\ell} (C)\}_{C\in\cC}$. It is also a simultaneous eigenvector with respect to Dehn-twists along each curve~$C\in\cC$ with eigenvalue $e^{i\theta_{\labeling{\ell}(C)}}$.   The action of Dehn-twists  along curves~$C'$ not belonging to~$\cC$ can be obtained by applying the local moves to change into a basis~$\cB_{\cC'}$ associated with a DAP-decomposition $\cC'$ containing~$C'$.
   
  \begin{figure}
\begin{center}
\includegraphics[width=0.4\textwidth]{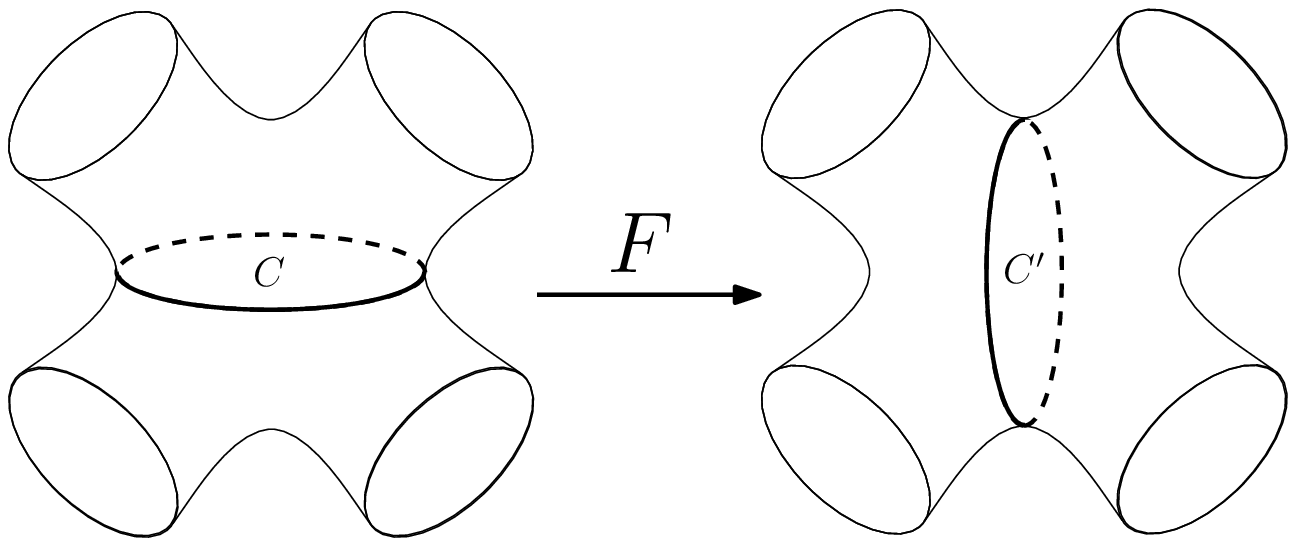}\qquad\raisebox{2ex}{\includegraphics[width=0.45\textwidth]{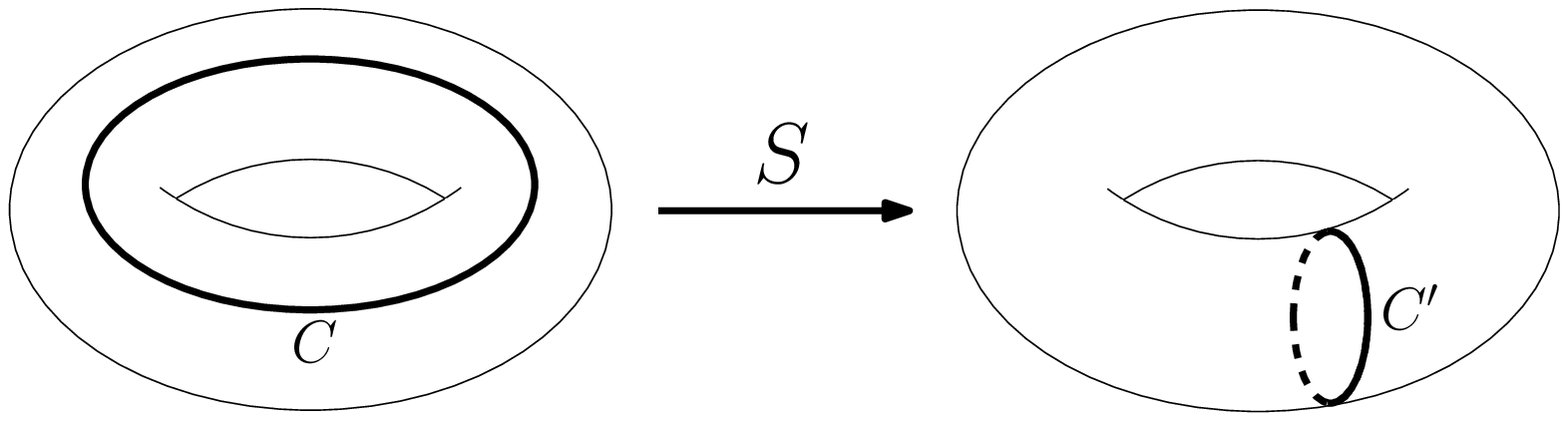}}
\end{center}
\caption{Two DAP-decompositions $\cC=\{C\}$ and $\cC'=\{C'\}$ of either the $4$-punctured sphere (left), or the torus (right), are related by an $F$-move or an $S$-move, respectively.}
\label{fig:Fmove} 
\end{figure}

  \subsection{Open surfaces: labeled boundaries\label{sec:opensurfaces}}
\newcommand*{\hC}{\hat{C}}
So far, we have been discussing the Hilbert space~$\cH_{\Sigma}$ associated with closed surfaces; this does not cover the physically important case of pinned localized excitations (which correspond to punctures/holes in the surface). Here we describe the modifications necessary to deal with surfaces with boundaries. 
We assume that the boundary~$\partial \Sigma=\bigcup_{\alpha=1}^{M} \hC_\alpha$ is the disjoint union of $M$ simple closed curves, and assume that an orientation~$\hC_\alpha:[0,1]\rightarrow\partial \Sigma$ has been chosen for each boundary component~$\hC_\alpha$ such that $\Sigma$ is found to the left. In addition, we fix a label $a_\alpha\in\labels$ for every boundary component~$\hC_\alpha$. We call this a labeling of the boundary. Let us write $\Sigma(a_1,\ldots,a_M)$ for the resulting object (i.e., the surfaces, its oriented boundary components, and the associated labels). We call $\Sigma(a_1,\ldots,a_M)$ a surface with labeled boundary components; slightly abusing notation,  we sometimes write $\Sigma=\Sigma(a_1,\ldots,a_M)$ when the presence of boundaries is understood/immaterial.

A TQFT associates to every surface~$\Sigma(a_1,\ldots,a_M)$ with labeled boundary components a Hilbert space~$\cH_{\Sigma(a_1,\ldots,a_M)}$. 
The construction is analogous to the case of closed surfaces and based on DAP-decompositions. 
The only modification compared to the case of closed surfaces is that only DAP-decompositions
including the curves $\{\hC_\alpha\}^{M}_{\alpha=1}$ are allowed; furthermore,
the labeling on these boundary components is fixed by $\{a_\alpha\}_{\alpha=1}^M$. That is,
``valid'' DAP-decompositions are of the form~$\cC=\{C_1,\ldots,C_N,\hC_1,\ldots,\hC_M\}$ with
curves $\{C_j\}_{j=1}^N$ ``complementing'' the boundary components, and valid labelings are fusion-consistent, i.e.,~$\labeling{\ell}\in\mathsf{L}(C)$ with the additional condition that they agree with the boundary labels,~$\labeling{\ell}(\hC_\alpha)=a_\alpha$ for $\alpha=1,\ldots,M$.   
To simplify the discussion, we will often omit the boundary components~$\{\hC_\alpha\}_{\alpha}$ and focus on the remaining degrees of freedom associated with the curves~$\{C_j\}_{j}$. 
It is understood that boundary labelings have to be  fusion-consistent with the labeling~$\{a_\alpha\}_{\alpha}$ of the boundary under consideration.

As a final remark, note that boundary components labeled with the trivial particle~$1\in\labels$ correspond to contractible loops in a surface without this boundary (i.e., obtained by ``gluing in a disc''). This means that they can be omitted: we have  the isomorphism
\begin{align*}
\cH_{\Sigma(1)}\cong \cH_{\Sigma'}\ ,
\end{align*}
where $\Sigma'$ is the surface with one boundary component less that of $\Sigma$.

\subsubsection*{Example: the $M$-anyon Hilbert space}
A typical example we are interested in
is the labeled surface
\begin{align*}
S^2(z^M)=S^2(\underbrace{z,\ldots,z}_{M\textrm{ times}})\ ,
\end{align*}
where $S^2( ~, ~, ~...~,~,~)$ is the punctured sphere, and $z\in\labels$ is some fixed anyon type (we assume that
each boundary component has the same orientation). 
The Hilbert space $\cH_{S^2(z^M)}$ is the space of $M$ anyons of type~$z$. 
When $M=N+3$ for some $N\in\mathbb{N}$, we can choose a `standard' DAP-decomposition~$\cC=\{C_j\}_{j=1}^N$ as shown in Fig.~\ref{fig:standardpants}. 
A fusion-consistent labeling~$\labeling{\ell}$ of the standard DAP-decomposition~$\cC$
corresponds to a sequence
$(x_1,\ldots,x_N)=(\labeling{\ell}(C_1),\dots,\labeling{\ell}(C_N))$ 
such that
\begin{align}
N_{zz}^{x_1}=N_{x_Nz}^{\dual{z}}=1\qquad\textrm{ and }\qquad
N_{x_jz}^{x_{j+1}}=1\qquad\textrm{ for all }j=1,\ldots,N-1,\label{eq:fusiontreecond}
\end{align}
as illustrated by Fig.~\ref{fig:standardpants}.

  \begin{figure}
\begin{center}
\includegraphics[width=0.5\textwidth]{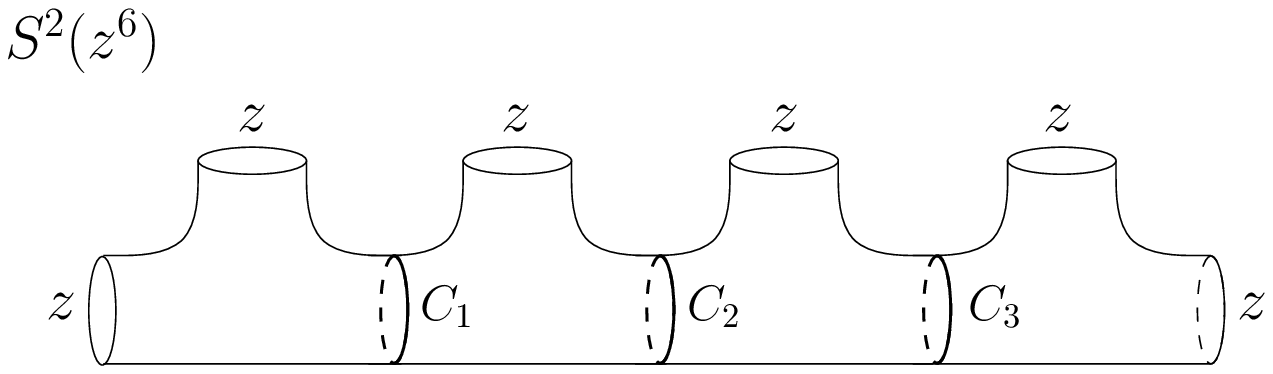}
\vspace{1.5cm}
\hspace{0.2cm}\includegraphics[width=0.5\textwidth]{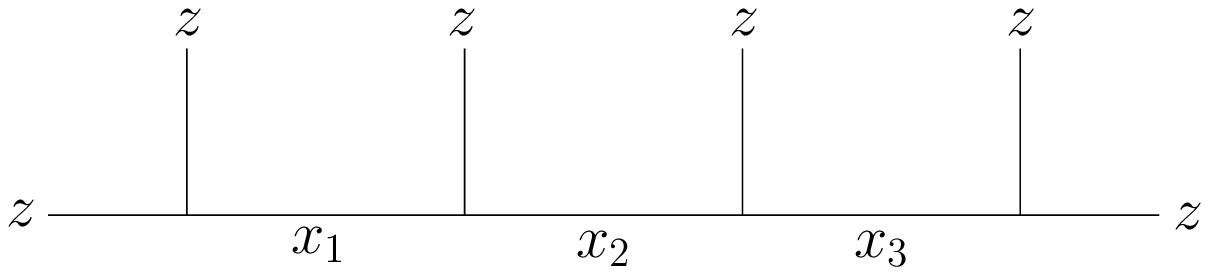}
\end{center}
\caption{\label{fig:standardpants} The `standard' DAP-decomposition of the $6$-punctured sphere, and the corresponding fusion-tree notation representing the labeling which assigns $\labeling{\ell}(C_i)=x_i$.}
\end{figure} 

 \subsection{The gluing postulate\label{sec:gluing}}
Throughout our work, we restrict our attention to models satisfying an additional property we refer to as the gluing postulate (which is often called the ``gluing axiom'' in the literature). Consider a closed curve~$C$ embedded in~$\Sigma$. We will assume that $C$ is an element of a DAP-decomposition~$\cC$; although this is not strictly necessary, it will simplify our discussion. Now consider the surface~$\Sigma'$ obtained by cutting~$\Sigma$ along~$C$. Compared to $\Sigma$, this is a surface with two boundary components~$C'_1,C'_2$ (both isotopic to~$C$) added. We will assume that these have opposite orientation. A familiar example is the case where cutting~$\Sigma$ along~$C$ results
in two disconnected surfaces $\Sigma'=\Sigma_1\cup\Sigma_2$, as depicted in Fig.~\ref{fig:4puncturecut} in the case where $\Sigma$ is the $4$-punctured sphere.

  \begin{figure}
\begin{center}
\includegraphics[width=0.2\textwidth]{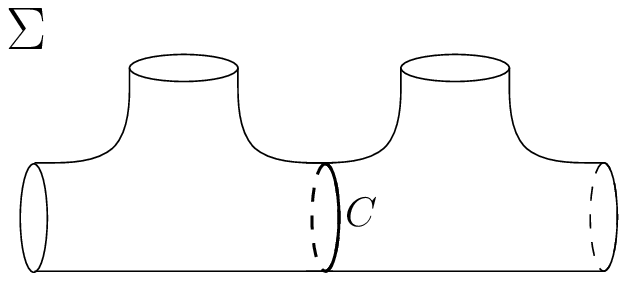}
\vspace{1cm}

\includegraphics[width=0.3\textwidth]{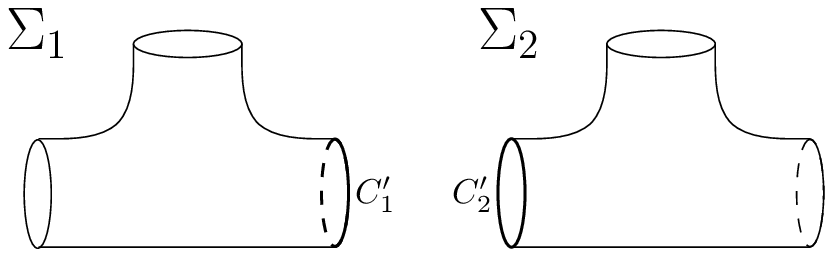}
\end{center}
\caption{Cutting a surface $\Sigma$ along some closed curve $C$ of a DAP-decomposition yields a disconnected surface $\Sigma'=\Sigma_1\cup\Sigma_2$ having additional boundary components $C'_1$ and $C'_2$.}
\label{fig:4puncturecut} 
\end{figure}

Let $a$ be a particle label. We will denote by $\cH_{\Sigma'(a,\dual{a})}$ the Hilbert space associated with the open surface~$\Sigma'$, where boundary~$C_1'$ is labeled by $a$ and boundary $C'_2$ by $\dual{a}$. The gluing postulate states that
the Hilbert space of the surface~$\Sigma$ has the form
\begin{align}
\cH_\Sigma \cong \bigoplus_a \cH_{\Sigma'(a,\dual{a})}\ \label{eq:gluingaxiom}
\end{align}
where the direct sum is over all particle labels~$a$ that occur in different fusion-consistent labelings of~$\cC$. In the  special case where cutting along~$C$ gives two components~$\Sigma_1,\Sigma_2$, we have~$\cH_\Sigma\cong\bigoplus_a \cH_{\Sigma_1(a)}\otimes \cH_{\Sigma_2(\dual{a})}$.

The isomorphism~\eqref{eq:gluingaxiom} can easily be made explicit. A first observation is that~$\cH_\Sigma$ decomposes as~$\cH_\Sigma=\bigoplus_a \cH_{a,\Sigma}(C)$,
where
\begin{align}
\cH_{a,\Sigma}(C)\coloneqq\mathsf{span}\{\ket{\labeling{\ell}} \ |\ \labeling{\ell}\in\mathsf{L}(\cC), \labeling{\ell}(C)=a\}\label{eq:fusealabelH}
\end{align}
is the space spanned by all labelings which assign the  label~$a$ to~$C$. 
It therefore suffices to argue that
\begin{align}
\cH_{a,\Sigma}(C) \cong\cH_{\Sigma'(a,\dual{a})}\ .\label{eq:restrictedisomorphismalabels}
\end{align}
To do so, observe that the DAP-decomposition~$\cC$ of~$\Sigma$ gives rise to a DAP-decomposition~$\cC'=\cC\backslash\{C\}$ of $\Sigma'$. Any labeling~$\labeling{\ell}\in\mathsf{L}(\cC)$ 
with $\labeling{\ell}(C)=a$
restricts to a labeling~$\labeling{\ell}'\in\mathsf{L}(\cC')$ of the 
labeled surface~$\Sigma'(a,\dual{a})$.  Conversely, any 
labeling~$\labeling{\ell}'\in\mathsf{L}(\cC')$ of the
surface~$\Sigma'(a,\dual{a})$ provides a labeling~$\labeling{\ell}\in\mathsf{L}(\cC)$
(by setting~$\labeling{\ell}(C)=a$). This 
defines the isomorphism~\eqref{eq:restrictedisomorphismalabels} in terms of  basis states~$\{\ket{\labeling{\ell}}\}_{\labeling{\ell}\in\mathsf{L}(\cC)}$ and~$\{\ket{\labeling{\ell}'}\}_{\labeling{\ell}'\in\mathsf{L}(\cC')}$.

\subsubsection*{Example: decomposing the $M$-anyon Hilbert space}
Consider the $M$-punctured sphere~$\Sigma=S^2(z^{M})$ with the standard 
DAP decomposition of Fig.~\ref{fig:standardpants} 
and boundary labels~$z$ (corresponding to $M$~anyons of type~$z$).  Cutting~$S^2(z^{M})$ along $C_j$
gives a surface~$\Sigma'_j$ which is the disjoint union of
two punctured spheres, with $j+2$ and $M-j$ punctures, respectively.  The resulting surface labelings
are $S^2(z^{j+1},a)$ and $S^2(\dual{a},z^{M-1-j})$. 
That is, if $\Sigma=S^2(z^{M})$ is the original surface and $\Sigma'_j(a,\dual{a})$ is the resulting one, then
\begin{align}
\cH_{\Sigma'_j(a,\dual{a})}&=\cH_{S^2(z^{j+1},a)}\otimes\cH_{S^2(\dual{a},z^{M-1-j})}\ .\label{eq:cutspheres}
\end{align}
This is illustrated in Fig.~\ref{fig:6puncturelabeled} for the case $M=6$ and $j=2$. 

  \begin{figure}
  \begin{center}
\includegraphics[width=0.5\textwidth]{6puncturedsphereDAP.eps}
\vspace{1.5cm}  
\includegraphics[width=0.6\textwidth]
{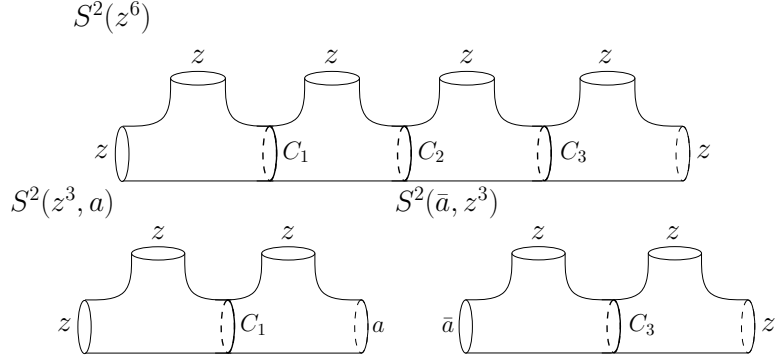}
\end{center}
\caption{The $6$-punctured sphere $S^2(z^6)$ shown with three curves $C_1,C_2,C_3\in\cC$ of a DAP-decomposition. Cutting along $C_2$ with labeling $\labeling{\ell}(C_2)=a$ results in the two surfaces $S^2(z^3,a)$ and $S^2(\dual{a},z^3)$.}
\label{fig:6puncturelabeled} 
\end{figure}

\subsection{The mapping class group\label{sec:mcg}}
In the following, we denote by~$\mcg_{\Sigma}$ the {\em mapping class group} of the surface~$\Sigma$. Physically, a mapping class group element for a surface $\Sigma$ gradually deforms the surface, but returns to the original configuration. For the $n$-punctured sphere, the mapping class group includes braiding of the punctures. For the torus, a Dehn twist is an element of the mapping class group.
%To understand the mapping class group a little more formally, first note that a TQFT comes with a notion of dynamics: If two surfaces $\Sigma$ and $\Sigma'$ are (oppositely oriented) boundaries of a 3-manifold (a so-called cobordism), the TQFT defines a linear map (sometimes called the partition function) between the spaces~$\cH_\Sigma$ and $\cH_\Sigma'$. Here we do not require this general notion, but merely consider the case where $\Sigma=\Sigma'$. In this case, every diffeomorphism $\vartheta:\Sigma\rightarrow\Sigma$ defines a 3-manifold with oppositely oriented boundaries $\Sigma$ and $\Sigma'$ via the mapping cylinder construction.
More formally, elements of $\mcg_{\Sigma}$ are isotopy classes of orientation-preserving diffeomorphisms of~$\Sigma$ preserving labels and commuting with boundary parametrization (see e.g.,~\cite{FreedmanKitaevWang02}). Slightly abusing notation, we will often simply write~$\vartheta\in\mcg_{\Sigma}$ for an equivalence class represented by a map    $\vartheta:\Sigma\rightarrow\Sigma$. If $\Sigma$ is the torus, then the mapping class group is generated by two elements, $\mcg_\Sigma=\langle s,t\rangle$ where $s$ and $t$ are the standard generators of the modular group. For the $M$-punctured sphere $S^2(z^M)$, we will also need the $M-1$  elements $\{\sigma_j\}_{j=1}^{M-1}$, where $\sigma_j$ braids holes~$j$ and~$j+1$.

%The TQFT thus provides a linear map $U(\vartheta):H_\Sigma\rightarrow H_\Sigma$. In fact, this association $\vartheta$ gives a (projective) unitary representation of the mapping class group, which implements logical gates on the groundspace of $H_\Sigma$.

The Hilbert space~$\cH_{\Sigma}$ is equipped with a projective unitary representation 
\begin{align}
\begin{matrix}
\mcg_\Sigma & \rightarrow &\mathsf{U}(\cH_{\Sigma})\\
\vartheta & \mapsto &\bV(\vartheta)
\end{matrix} \label{eq:mcgrep}
\end{align}
of the mapping class group~$\mcg_\Sigma$. For example, for the torus, $\bV(s)=S$ and $\bV(t)=T$ are the usual $S$- and $T$-matrices defined by the modular tensor category. For the $M$-punctured sphere~$S^2(z^M)$ with $M=N+3$, we again use the standard DAP-decomposition with associated basis~$\{\ket{\labelSequenceSphere{x}}\}_{\labelSequenceSphere{x}}$. Here the sequences $\labelSequenceSphere{x}=(x_1,\dots,x_N)$ are subject to the fusion rules (see~\eqref{eq:fusiontreecond}) and the action on such vectors is
\begin{align*}
    \bV(\sigma_1)
    \ket{\labelSequenceSphere{x}}
    & =R_{x_1}^{z
                z}
       \ket{\labelSequenceSphere{x}}, \\
    \bV(\sigma_k)
    \ket{\labelSequenceSphere{x}}
    & =\sum_{x'}
       B(x_{k-
            1},
         x_{k+
            1})_{x'
                 x_k}
       \ket{x_1,
            \dots,
            x_{k-
               1},
            x',
            x_{k+
               1},
            \dots,
            x_N}
            \qquad
      \text{for $k=2,\dots,N+1$}, \\
    \bV(\sigma_{N+
                2})
    \ket{\labelSequenceSphere{x}}
    & =\conjugate{R_{x_1}^{z
                           z}}
       \ket{\labelSequenceSphere{x}},
\end{align*}
where $B(a,b)=\tilde{F}^{-1}\tilde{R}\tilde{F}$ is the braid matrix. Here the matrices~$\tilde{F}$ and~$\tilde{R}$ 
are given in terms of the tensors~$F$ and~$R$ associated with the TQFT\footnote{More precisely, for $B(x_{k-1},x_{k+1})$, the relevant matrices are~$\tilde{F}_{x',x}=F^{zx_{k-1}x_k}_{x_{k+1}zx'}$ and
$\tilde{R}$ is diagonal with entries $\tilde{R}_{x,x}=R^{zz}_{x}$. Here $F$ is the $F$-matrix associated with basis changes on the four-punctured sphere (see Section~\ref{sec:fourpuncturedgeneral}), whereas~$R^{xy}_z$ determines 
an isomorphisms between certain Hilbert spaces associated with the three-punctured sphere. We refer to e.g.,~\cite[p.~48]{preskillnotes} for a derivation of these expressions.}.

\section{Constraints on locality-preserving automorphisms\label{sec:nonabelian}}

In this section, we derive restrictions on topologically protected gates for general non-abelian models. Our strategy will be to consider what happens to string-operators.  We will first consider operators associated with a single loop~$C$,  and derive restrictions on the map $F_a(C)\mapsto U F_a(C)U^\dagger$, or, more precisely, its effect on logical operators, $\logical{F_a(C)}\mapsto \logical{U F_a(C)U^\dagger}$. We will argue that this map implements an isomorphism of the Verlinde algebra and exploit this fact to derive a  constraint which is `local' to a specific loop. 
We will subsequently consider more `global' constraints arising from fusion rules, as well as basis changes.

We would like to characterize locality-preserving unitary automorphisms~$U\in\cA_\Sigma$ in terms of their logical action~$\logical{U}$. For example in the toric code, where the physical qubits are imbedded in the edges of the square lattice, the locality preserving unitaries include the well-known transversal gates of single-qubit unitaries applied to each qubit. More general examples of locality preserving unitaries in the toric code are finite depth circuits composed of gates of arbitrary unitaries applied to physical qubits in geometrically-local patches of fixed diameter.

A first goal is to characterize the map
\begin{align}
\begin{matrix}
\rho_U: & \logical{\cA_\Sigma} &\rightarrow&\logical{\cA_\Sigma}\\
&\logical{X}&\mapsto  &\logical{UXU^{-1}}\ ,
\end{matrix}\label{eq:rhoudef}
\end{align}
which determines the evolution of logical observables in the Heisenberg picture. (Clearly, this does not depend on the representative, i.e., if $[X]=[X']$, then $\rho_U([X])=\rho_U([X'])$.) In fact, the map~\eqref{eq:rhoudef} fully determines $U$ up to a global phase since~$\logical{\cA_\Sigma}$ contains an operator basis for linear maps on~$\cH_\Sigma$. However, it will often be more informative to characterize the action of~$\logical{U}$ on basis elements of~$\cH_\Sigma$. This will require additional effort.

The main observation is that the map~\eqref{eq:rhoudef} defines
an automorphism of~$\logical{\cA_\Sigma}$, since
\begin{align}
\rho_U(\logical{X})\rho_U(\logical{X'})=\rho_U(\logical{X}\logical{X'})\qquad\textrm{ for all }X,X'\in\cA_\Sigma\qquad\textrm{ and }\qquad \rho_U^{-1}=\rho_{U^{-1}}\ .\label{eq:automorphism}
\end{align}
Combined with the locality of $U$,~\eqref{eq:automorphism} severly constrains~$\rho_U$. Using this fact, we obtain a number of very general constraints, which will be worked out in more detail in the following. 

\subsection{A local constraint from a simple closed loop}
Specifying the action of~$\rho_U$ on all of $\logical{\cA_\Sigma}$ completely determines~$[U]$ up to a global phase. 
However, this is not entirely straightforward; instead, we
fix some simple closed curve~$C$ and characterize the restriction to the subalgebra~$\cA(C)\subset\cA_\Sigma$, i.e., the map
\begin{align}
\begin{matrix}
\rho_U(C): & \logical{\cA(C)} &\rightarrow&\logical{\cA(C)}\\
&\logical{X}&\mapsto  &\logical{UXU^{-1}}\ ,
\end{matrix}\label{eq:rhourestricteddef}
\end{align}
Observe that this map is well-defined since $UXU^{-1}$ is supported in a neighborhood of $C$ (by the locality-preservation of $U$), and hence
 $\logical{UXU^{-1}}=\logical{X'}$ for some operator $X'\in\cA(C)$ (here we have used Proposition \ref{prop:CompletenessFatStrings}). 
 It is also easy to see that it defines an automorphism of the subalgebra~$\logical{\cA(C)}$.
 
 As we argued above, the algebra~$\cA(C)$ is isomorphic to $\Ver$.
 This carries over to $\logical{\cA(C)}\cong \Ver \cong \mathbb{C}^{\oplus |\labels|}$. As $\Ver$ has idempotents $\p_{a\in \labels}$, the logical algebra for loop $C$ has idempotents $\{\logical{P_a(C)}\}_{a\in \labels}$. Note that the idempotents $\{\logical{P_a(C)}\}_{a\in \labels}$ in the logical algebra are unique, in that there is no linear combination of these idempotents which yields a distinct, complete set of idempotents. At the physical level however, there can be huge redundency, with many different physical operators corresponding to the same logical operator, i.e. $\logical{P_a(C)} =\logical{P'_a(C)}$, for $P_a(C) \neq P'_a(C)$.
 We use the following fact:
 \begin{lemma} \label{thm:permutingprojectors}
 The set of automorphisms of the algebra $\Ver$
 is in one-to-one correspondence  with the permutations~$S_{|\labels|}$. For $\pi\in S_{|\labels|}$, the associated automorphism   $\rho_\pi:\Ver \rightarrow \Ver$ is defined by its action on the  central idempotents $\p_a$
 \begin{align}
 \rho_\pi(\p_a)=\p_{\pi(a)}\qquad\textrm{ for }a\in \labels\ \label{eq:rhopiargument}
 \end{align}
  \end{lemma}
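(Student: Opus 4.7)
The plan is to show this by exploiting the decomposition $\Ver \cong \bigoplus_a \mathbb{C}\p_a$ from Proposition~\ref{prop:primitiveIdempotents}, which identifies a distinguished basis of primitive idempotents that any automorphism must respect. The forward map (automorphism $\mapsto$ permutation) will come from showing $\phi$ must permute the $\p_a$'s; the reverse map (permutation $\mapsto$ automorphism) will be immediate by linear extension.

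For the forward direction, let $\phi\in\Aut(\Ver)$ be any algebra automorphism. First I would observe that $\phi$ preserves idempotents (if $e^2=e$ then $\phi(e)^2=\phi(e)$), preserves orthogonality ($ef=0\Rightarrow\phi(e)\phi(f)=0$), and preserves the unit $\phi(\f_1)=\f_1$ (since $\f_1$ is the unique multiplicative identity). Hence $\{\phi(\p_a)\}_{a\in\labels}$ is a complete set of orthogonal idempotents summing to $\f_1$ by \eqref{eq:completenessrelation}. Next, I would argue primitivity (minimality) is purely algebraic: an idempotent $e$ is primitive iff it cannot be written as $e=e_1+e_2$ with $e_1,e_2$ nonzero orthogonal idempotents, a property preserved by any algebra automorphism. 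So $\{\phi(\p_a)\}_a$ is also a complete set of orthogonal \emph{minimal} idempotents in $\Ver$. By the uniqueness statement in Proposition~\ref{prop:primitiveIdempotents}, this set must coincide with $\{\p_a\}_a$, which means there exists a unique permutation $\pi\in S_{|\labels|}$ such that $\phi(\p_a)=\p_{\pi(a)}$ for all $a$. Since $\phi$ is determined by its action on a basis, this permutation uniquely determines $\phi$.

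For the reverse direction, given any $\pi\in S_{|\labels|}$, I define $\rho_\pi$ by linear extension of \eqref{eq:rhopiargument}. This is an algebra homomorphism because the multiplicative structure on the primitive idempotents is simply $\p_a\p_b=\delta_{a,b}\p_a$, and
\begin{equation*}
\rho_\pi(\p_a\p_b)=\delta_{a,b}\p_{\pi(a)}=\p_{\pi(a)}\p_{\pi(b)}=\rho_\pi(\p_a)\rho_\pi(\p_b).
\end{equation*}
It is invertible with inverse $\rho_{\pi^{-1}}$, hence an automorphism. The two constructions are inverse to each other by construction, establishing the bijection.

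I do not foresee a genuine obstacle; the only subtle point is justifying that primitivity of an idempotent is an invariant of the algebra structure alone (so that automorphisms permute primitive idempotents), but this follows from the characterization of primitivity as indecomposability within the idempotent lattice. Everything else is a direct consequence of Proposition~\ref{prop:primitiveIdempotents}, so the proof should be short.
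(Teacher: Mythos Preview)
Your proposal is correct and follows essentially the same route as the paper: both directions rely on Proposition~\ref{prop:primitiveIdempotents}, with the forward direction using that an automorphism sends the $\p_a$'s to another complete family of orthogonal idempotents and then invoking the uniqueness clause of that proposition. Your extra care about preserving primitivity is harmless but in fact unnecessary here, since the uniqueness statement in Proposition~\ref{prop:primitiveIdempotents} already applies to any complete orthogonal family of idempotents forming a basis (no minimality hypothesis is needed), which is exactly what $\{\phi(\p_a)\}_a$ is.
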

  \begin{proof}
  It is clear that~\eqref{eq:rhopiargument} defines an automorphism for every $\pi\in S_{|\labels|}$. Also, from Eq.~(\ref{eq:automorphism}) we see that $\p_a \p_b = \delta_{ab}\p_b$ implies  $\rho(\p_a) \rho(\p_b) = \delta_{ab} \rho(\p_b)$, such that $\rho(\p_a) \in \Ver$ are a complete set of projectors (Proposition~\ref{prop:primitiveIdempotents}). As there is a unique set of complete projectors for $\Ver$, we conclude that $\rho(\p_a) = \p_{\pi(a)}$ for some permutation $\pi \in  S_{|\labels|}$.
  \end{proof}
 Applying this to $\logical{\cA(C)}$ shows that a locality-preserving unitary automorphism realizes, up to\emph{important} phases, a a permutation of labelings.
 Let us emphasize that it is the projectors (idempotents) $[P_a(C)]$ which are being permuted, and not the string operators $[F_a(C)]$.
 
 \begin{proposition}[Local constraint]\label{prop:mainsingleloop}
 Let $U$ be a locality-preserving automorphism of the code, and let $\rho_U(\logical{X})=\logical{UXU^{-1}}$. 
 \begin{enumerate}[(i)]
 \item\label{it:automorphismpermute}
 For each simple closed loop $C$ on $\Sigma$, there is a permutation $\permLoop{C}\colon\labels\rightarrow \labels$ of the particle labels such that
 \begin{align}
 \begin{matrix}
\rho_U: & \logical{\cA(C)} &\rightarrow&\logical{\cA(C)}\\
&\logical{P_a(C)}&\mapsto  &\logical{P_{\permLoop{C}(a)}(C)}&\qquad\textrm{ for all }a\in \labels\  ,
\end{matrix}\label{eq:automorphismpermutation}
\end{align}
(and linearly extended to all of $\logical{\cA(C)}$).
 \item\label{it:trivialstringop}
For some anyon model $\labels$ with an associated $S$ matrix, let $D_{a,b}=\delta_{a,b}\cdot d_a$ be the diagonal matrix with the quantum dimensions on the diagonal. Let $\permLoop{C}\colon\labels\rightarrow \labels$ be a permutation associated with a loop~$C$ as in~\eqref{it:automorphismpermute}, and let
$\Pi$ be the matrix defined by~$\Pi_{x,y}\coloneqq\delta_{x,\permLoop{C}(y)}$.
Define the matrix
\begin{align}
\Lambda\coloneqq S\Pi^{-1}D\Pi D^{-1}\Pi^{-1}S^{-1}\ .\label{eq:lambdamatrixexpr}
\end{align}
Then
\begin{align}
\rho_U(\logical{F_b(C)})&=\sum_{b'}\Lambda_{b,b'}\logical{F_{b'}(C)}\ .\label{eq:rhoufbmappingprop}
\end{align}
  \end{enumerate}
 \end{proposition}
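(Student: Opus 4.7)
The plan is to first verify that the map $\rho_U(C)$ of~\eqref{eq:rhourestricteddef} is a well-defined automorphism of $[\mathcal{A}(C)]$, and then invoke Lemma~\ref{thm:permutingprojectors}. To check well-definedness, I would argue as follows. Given any $X\in\mathcal{A}(C)$, the operator $X$ is supported in a constant-diameter neighborhood of $C$; since $U$ is locality-preserving, $UXU^{-1}$ is supported in a (slightly enlarged) constant-diameter neighborhood of $C$. By Proposition~\ref{prop:CompletenessFatStrings} (local completeness of string operators), the logical action of $UXU^{-1}$ is therefore equal to $[\tilde{X}]$ for some $\tilde{X}\in\mathcal{A}(C)$, so $\rho_U(C)$ indeed maps $[\mathcal{A}(C)]$ into itself. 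That $\rho_U(C)$ is an algebra automorphism is immediate from $\rho_U(C)(\logical{X}\logical{Y})=\rho_U(C)(\logical{X})\rho_U(C)(\logical{Y})$ and from applying the same argument to the locality-preserving unitary $U^{-1}$ (which yields the inverse). Combining the isomorphism~\eqref{eq:loopisomorphictover}, namely $[\mathcal{A}(C)]\cong\Ver$, with Lemma~\ref{thm:permutingprojectors}, the automorphism $\rho_U(C)$ must permute the primitive idempotents. Translating back through the isomorphism~\eqref{eq:loopisomorphictover}, these primitive idempotents correspond precisely to $\{[P_a(C)]\}_{a\in\labels}$ by construction~\eqref{eq:measurementoperators} and Proposition~\ref{prop:primitiveIdempotents}, which produces the desired permutation $\pi^C$ and yields~\eqref{eq:automorphismpermutation}.

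\textbf{Plan for part (ii).} This reduces to an algebraic manipulation using the change of basis between $\{[F_b(C)]\}$ and $\{[P_a(C)]\}$. Starting from the inverse relation~\eqref{eq:inverserelationm}, I would write
\begin{align*}
\rho_U(\logical{F_b(C)}) = \sum_a \frac{S_{ba}}{S_{1a}}\,\rho_U(\logical{P_a(C)}) = \sum_a \frac{S_{ba}}{S_{1a}}\,\logical{P_{\pi^C(a)}(C)},
\end{align*}
then re-index $c=\pi^C(a)$ and expand each $[P_c(C)]$ back in terms of $\{[F_{b'}(C)]\}_{b'}$ using~\eqref{eq:measurementoperators}. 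This gives
\begin{align*}
\Lambda_{b,b'} = \sum_c \frac{S_{b,\pi^{-1}(c)}\,S_{1c}}{S_{1,\pi^{-1}(c)}}\,\overline{S_{b'c}}.
\end{align*}
Using $S_{1a}=d_a/\mathcal{D}$ and the matrix definition $\Pi_{x,y}=\delta_{x,\pi(y)}$ (so that $(\Pi^{-1})_{x,y}=\delta_{x,\pi^{-1}(y)}$), one verifies by direct multiplication that this coincides with the $(b,b')$ entry of $S\Pi^{-1}D\Pi D^{-1}\Pi^{-1}S^{-1}$. The unitarity of $S$ (assumed in Section~\ref{sec:verlinde}) is used to identify $S^{-1}_{y,b'}$ with $\overline{S_{b'y}}$.

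\textbf{Main obstacle.} Part (ii) is essentially a bookkeeping exercise once part (i) is in hand. The only conceptual point deserving care is in part (i): ensuring that the map $\rho_U(C)$ genuinely restricts to an automorphism of $[\mathcal{A}(C)]$ rather than merely landing in the larger algebra $[\mathcal{A}_\Sigma]$. This hinges critically on the locality-preservation assumption together with Proposition~\ref{prop:CompletenessFatStrings}, which together guarantee that the image of a string operator around $C$ is still expressible as a linear combination of string operators on $C$. Once this is secured, the rigidity of the Verlinde algebra (which has a unique complete set of primitive idempotents, Proposition~\ref{prop:primitiveIdempotents}) forces the automorphism to act by a permutation and leaves no freedom beyond relabeling.
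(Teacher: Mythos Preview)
Your proposal is correct and follows essentially the same route as the paper: part~(i) is exactly the argument the paper gives just before the proposition (locality-preservation plus Proposition~\ref{prop:CompletenessFatStrings} to land in $[\mathcal{A}(C)]$, then Lemma~\ref{thm:permutingprojectors} via the isomorphism~\eqref{eq:loopisomorphictover}), and part~(ii) is the same change-of-basis computation using~\eqref{eq:inverserelationm} and~\eqref{eq:measurementoperators}, differing only in that you reindex by $c=\pi^C(a)$ whereas the paper keeps the sum over~$a$ and identifies the matrix factors directly.
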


 \begin{proof}
 We have already argued that~\eqref{it:automorphismpermute} holds.
 For the proof of~\eqref{it:trivialstringop}, we use the relationship between $\{P_a(C)\}_a$ and $\{F_a(C)\}_a$ (cf.~\eqref{eq:measurementoperators}
and~\eqref{eq:inverserelationm}) to get
   (suppressing the dependence on the loop~$C$)
  \begin{align*}
  \rho_U(\logical{F_b})=\sum_a \frac{S_{b,a}}{S_{1,a}}\logical{P_{\permLoop{C}(a)}}=\sum_{b'}\left(\sum_a\frac{S_{b,a}}{S_{1,a}} S_{1,\permLoop{C}(a)}\conjugate{S_{b',\permLoop{C}(a)}}\right) \logical{F_{b'}}\ .
  \end{align*}
  The claim~\eqref{eq:rhoufbmappingprop} follows from this using $(\Pi^{-1}S^{-1})_{a,b'}=(S^{-1})_{\permLoop{C}(a),b'}=\conjugate{S_{b',\permLoop{C}(a)}}$ by the unitarity of $S$, as well as  the fact that $S_{1,a}=d_a/\mathcal{D}$ and hence 
    $\frac{S_{b,a}}{S_{1,a}}S_{1,\permLoop{C}(a)}=(S\Pi^{-1}D\Pi D^{-1})_{b,a}$.
 \end{proof}
 
\subsection{Global constraints from DAP-decompositions, fusion rules and  the gluing postulate\label{sec:globalconstraintsgluingaxiom}}
 For higher-genus surfaces, we can obtain information by applying Proposition~\ref{prop:mainsingleloop} to all loops of a DAP-decomposition; these  must then satisfy the following consistency condition. 
 \begin{proposition}[Global constraint from fusion rules]\label{prop:globalfusionconstraint}
 Let $U$ be a locality-preserving automorphism of the code. 
 Let $\cC$ be a DAP-decomposition of $\Sigma$, and consider the family
 of permutations~$\perms=\{\permLoop{C}\}_{C\in\cC}$ defined by Proposition~\ref{prop:mainsingleloop}. Then this defines a permutation $\perms\colon\mathsf{L}(\cC)\to\mathsf{L}(\cC)$ of the set of fusion-consistent labelings via
 \begin{equation}
     \perms(\labeling{\ell})(C)
     \coloneqq
      \permLoop{C}[\labeling{\ell}(C)]
 \end{equation}
 for all $C\in\cC$.
 We have
 \begin{align}
 U\ket{\labeling{\ell}}&=\phaseExp{}{\labeling{\ell}}\ket{\perms(\labeling{\ell})}\qquad\text{for all $\labeling{\ell}\in L(\cC)$}\label{eq:upsielltransformation}
 \end{align}
 with some phase $\phaseExp{}{\labeling{\ell}}$ depending on~$\labeling{\ell}$.
 \end{proposition}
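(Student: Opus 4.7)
The plan is to reduce the statement to the already established local result (Proposition~\ref{prop:mainsingleloop}) by exploiting the fact that a basis vector $\ket{\labeling{\ell}}\in\cB_\cC$ is uniquely characterized (up to a phase) by being a joint $+1$-eigenvector of the commuting family of projectors $\{\logical{P_{\labeling{\ell}(C)}(C)}\}_{C\in\cC}$. More precisely, since $\logical{P_a(C)}\ket{\labeling{\ell}'}=\delta_{a,\labeling{\ell}'(C)}\ket{\labeling{\ell}'}$ for every $\labeling{\ell}'\in\mathsf{L}(\cC)$, the simultaneous $+1$-eigenspace of a family $\{\logical{P_{a(C)}(C)}\}_{C\in\cC}$ indexed by some function $a\colon\cC\to\labels$ is one-dimensional if $a\in\mathsf{L}(\cC)$ (spanned by $\ket{a}$) and trivial otherwise.

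First I would apply Proposition~\ref{prop:mainsingleloop} to each loop $C\in\cC$ to obtain the permutations~$\permLoop{C}$ of~$\labels$ satisfying $U\logical{P_a(C)}U^{-1}=\logical{P_{\permLoop{C}(a)}(C)}$. Using these, I would define the candidate map $\perms\colon\mathsf{L}(\cC)\to\labels^{|\cC|}$ by $\perms(\labeling{\ell})(C)\coloneqq\permLoop{C}[\labeling{\ell}(C)]$, and then apply $U$ to a basis vector~$\ket{\labeling{\ell}}$. For each $C\in\cC$, since $U$ preserves the code space and~$\ket{\labeling{\ell}}$ is a $+1$-eigenvector of~$\logical{P_{\labeling{\ell}(C)}(C)}$,
\begin{equation*}
    \logical{P_{\perms(\labeling{\ell})(C)}(C)}\,U\ket{\labeling{\ell}}
    = U\logical{P_{\labeling{\ell}(C)}(C)}\ket{\labeling{\ell}}
    = U\ket{\labeling{\ell}},
\end{equation*}
so $U\ket{\labeling{\ell}}$ lies in the joint $+1$-eigenspace of $\{\logical{P_{\perms(\labeling{\ell})(C)}(C)}\}_{C\in\cC}$.

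Since $U$ is unitary, $U\ket{\labeling{\ell}}\neq 0$, which by the eigenspace description above forces $\perms(\labeling{\ell})\in\mathsf{L}(\cC)$ (i.e.\ the image labeling is automatically fusion-consistent, with no direct fusion-rule check needed), and furthermore $U\ket{\labeling{\ell}}=\phaseExp{}{\labeling{\ell}}\ket{\perms(\labeling{\ell})}$ for some phase~$\phaseExp{}{\labeling{\ell}}$ whose unit modulus again follows from unitarity. Finally, to verify that $\perms\colon\mathsf{L}(\cC)\to\mathsf{L}(\cC)$ is a \emph{permutation}, I would apply the same reasoning to the locality-preserving automorphism $U^{-1}$; Proposition~\ref{prop:mainsingleloop} applied to $U^{-1}$ yields the permutations $(\permLoop{C})^{-1}$, and the resulting map on $\mathsf{L}(\cC)$ is a two-sided inverse of~$\perms$.

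The conceptually non-trivial point, and thus the step deserving the most care, is verifying fusion-consistency of~$\perms(\labeling{\ell})$: a priori there is no reason that independently permuting the label on each loop should preserve the local fusion constraints at pairs of pants. The argument above circumvents a direct combinatorial check by using unitarity of $U$ together with the joint-eigenspace characterization of basis states, so the non-trivial fusion-consistency of $\perms(\labeling{\ell})$ is forced by the existence of the nonzero vector $U\ket{\labeling{\ell}}$ in the corresponding joint eigenspace.
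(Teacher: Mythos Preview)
Your proposal is correct and follows essentially the same approach as the paper: both arguments use that $\ket{\labeling{\ell}}$ is a joint $+1$-eigenvector of the projectors $\{\logical{P_{\labeling{\ell}(C)}(C)}\}_{C\in\cC}$, apply Proposition~\ref{prop:mainsingleloop} loop by loop to conclude that $U\ket{\labeling{\ell}}$ lies in the joint $+1$-eigenspace of $\{\logical{P_{\perms(\labeling{\ell})(C)}(C)}\}_{C\in\cC}$, and infer fusion-consistency of $\perms(\labeling{\ell})$ from the fact that $U\ket{\labeling{\ell}}\in\cH_\Sigma$ is nonzero. Your version is somewhat more explicit than the paper's (you spell out the eigenspace characterization and separately verify that $\perms$ is a bijection via $U^{-1}$, which the paper leaves implicit), but the underlying idea is identical.
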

 \begin{proof}
 Let us  fix some basis element $\ket{\labeling{\ell}}\in\cB_\cC$.
The vector~$\ket{\labeling{\ell}}$ is a $+1$-eigenvector of $P_{\labeling{\ell}(C)}(C)$ for each $C\in \cC$; hence according to~\eqref{eq:automorphismpermutation}, the vector $U\ket{\labeling{\ell}}$ is a $+1$-eigenvector of $P_{\permLoop{C}[\labeling{\ell}(C)]}(C)=P_{\perms(\labeling{\ell})(C)}(C)$ for every $C\in\cC$. This implies that it is proportional to $\ket{\perms(\labeling{\ell})}$, hence we obtain~\eqref{eq:upsielltransformation}. Fusion-consistency of $\perms(\labeling{\ell})$ follows because $U\ket{\labeling{\ell}}$ must be an element of $\cH_\Sigma$.
 \end{proof}

Proposition~\eqref{prop:globalfusionconstraint}
expresses the requirement that  a locality-preserving automorphism~$U$
 maps the set of fusion-consistent labelings into itself. 

In fact, we can say more: it must be an isomorphism between the subspaces of~$\cH_\Sigma$
arising from the gluing postulate (i.e., Eq.~\eqref{eq:gluingaxiom}). This allows us to 
constrain the set of allowed permutations~$\perms=\{\permLoop{C}\}_{C\in\cC}$  arising from locality-preserving automorphisms even further:

 \begin{proposition}[Global constraint from gluing]\label{prop:gluingconstraint}
Let $C$ be an element of a DAP-decomposition of $\Sigma$. 
Recall that \begin{align}
\cH_\Sigma=\bigoplus_a \cH_{a,\Sigma}(C)\ ,\label{eq:directsumainprop}
\end{align} where the  subspaces in the direct sum are defined by labelings associating~$a$ to~$C$.  Let $U$ be a locality-preserving automorphism of the code  and let $\permLoop{C}\colon\labels\rightarrow\labels$ be the 
permutation associated with~$C$  by Proposition~\ref{prop:mainsingleloop}. 
Then for every~$a\in\labels$ occuring in~Eq.~\eqref{eq:directsumainprop},
the restriction of $U$ to $\cH_{a,\Sigma}(C)$ defines an isomorphism
\begin{align}
\cH_{a,\Sigma}(C)\cong \cH_{\permLoop{C}(a),\Sigma}(C)\ .\label{eq:isomorphismUrestriction}
\end{align}
In particular, if $\Sigma'$ is the surface obtained by cutting~$\Sigma$ along~$C$, then
\begin{align}
\cH_{\Sigma'(a,\dual{a})}\cong\cH_{\Sigma'(\permLoop{C}(a), \dual{\permLoop{C}(a)})}\label{eq:isomorphismcutspaces} 
\end{align}
for every $a\in\labels$ occuring in the sum~\eqref{eq:directsumainprop}.
 \end{proposition}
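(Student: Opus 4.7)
The plan is to combine the preceding proposition (global constraint from fusion rules), which pins down how $U$ permutes the basis $\cB_{\cC}$, with the gluing postulate's concrete identification $\cH_{a,\Sigma}(C)\cong \cH_{\Sigma'(a,\dual{a})}$.

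First, I would fix a DAP-decomposition $\cC$ of $\Sigma$ containing the loop $C$ (which we may assume without loss of generality, as stated in the set-up of the gluing postulate). By Proposition~\ref{prop:globalfusionconstraint}, $U$ acts on the basis $\cB_\cC$ as
\begin{align*}
U\ket{\labeling{\ell}}=\phaseExp{}{\labeling{\ell}}\ket{\perms(\labeling{\ell})}\qquad\textrm{with}\qquad\perms(\labeling{\ell})(C')=\permLoop{C'}[\labeling{\ell}(C')]\text{ for all }C'\in\cC.
\end{align*}
In particular, if $\labeling{\ell}(C)=a$, then $\perms(\labeling{\ell})(C)=\permLoop{C}(a)$, so $U$ sends the basis vector $\ket{\labeling{\ell}}\in\cH_{a,\Sigma}(C)$ to a phase times a basis vector in $\cH_{\permLoop{C}(a),\Sigma}(C)$.

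Second, I would observe that, by Eq.~\eqref{eq:fusealabelH}, the subspaces $\cH_{a,\Sigma}(C)$ are spanned by exactly those basis vectors $\ket{\labeling{\ell}}$ with $\labeling{\ell}(C)=a$. The step above therefore shows $U(\cH_{a,\Sigma}(C))\subseteq\cH_{\permLoop{C}(a),\Sigma}(C)$. Since $U$ is unitary on all of $\cH_\Sigma$ and the decomposition~\eqref{eq:directsumainprop} is orthogonal with $\perms$ restricting to a bijection on the set of labels appearing in the sum (as a permutation of $\labels$), the inclusion must be an equality, and $U$ restricts to a unitary isomorphism $\cH_{a,\Sigma}(C)\cong\cH_{\permLoop{C}(a),\Sigma}(C)$, establishing~\eqref{eq:isomorphismUrestriction}.

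Third, for~\eqref{eq:isomorphismcutspaces} I would invoke the gluing postulate~\eqref{eq:restrictedisomorphismalabels}, which identifies $\cH_{a,\Sigma}(C)\cong\cH_{\Sigma'(a,\dual{a})}$ and $\cH_{\permLoop{C}(a),\Sigma}(C)\cong\cH_{\Sigma'(\permLoop{C}(a),\dual{\permLoop{C}(a)})}$. Chaining these isomorphisms with the one from the previous step gives the desired conclusion.

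There is no real obstacle here: the argument is essentially bookkeeping on top of Propositions~\ref{prop:mainsingleloop} and~\ref{prop:globalfusionconstraint} together with the gluing postulate. The only subtlety worth checking carefully is the second step, namely that the image of $U\big|_{\cH_{a,\Sigma}(C)}$ equals (and not just lies in) $\cH_{\permLoop{C}(a),\Sigma}(C)$; this uses the unitarity of $U$ together with the fact that the action on the orthogonal sum permutes the summands according to the bijection $\permLoop{C}$, forcing equality of dimensions.
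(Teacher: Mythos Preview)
Your proof is correct and follows essentially the same line as the paper's: use Proposition~\ref{prop:globalfusionconstraint} to obtain the inclusion $U\cH_{a,\Sigma}(C)\subset\cH_{\permLoop{C}(a),\Sigma}(C)$, upgrade to equality via unitarity of~$U$ on the orthogonal decomposition~\eqref{eq:directsumainprop}, and then invoke the gluing isomorphism~\eqref{eq:restrictedisomorphismalabels} for the second claim. Your write-up is slightly more explicit about the bookkeeping, but the argument is the same.
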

The reason we refer to
 Proposition~\eqref{prop:gluingconstraint} as a global constraint (even though it 
superficially only concerns a single curve~$C$) is that the surface~$\Sigma'$ and hence the spaces~\eqref{eq:isomorphismcutspaces}
depend on the global form of the surface~$\Sigma$ outside the support of~$C$.
 \begin{proof}
Proposition~\eqref{prop:globalfusionconstraint} implies that $U\cH_{a,\Sigma}(C)\subset \cH_{\permLoop{C}(a),\Sigma}(C)$
for any $a$ in expression~\eqref{eq:directsumainprop}. 
Since~$U$  acts unitarily on the whole space~$\cH_\Sigma$,
this is compatible with~\eqref{eq:directsumainprop}
only if  $U\cH_{a,\Sigma}(C)=\cH_{\permLoop{C}(a),\Sigma}(C)$ for any such $a$. This proves~\eqref{eq:isomorphismUrestriction}. Statement~\eqref{eq:isomorphismcutspaces} then immediately follows from~\eqref{eq:restrictedisomorphismalabels}. 
 \end{proof}
A simple but useful implication of Proposition~\ref{prop:gluingconstraint}
is that
\begin{align}
\dim\left(\cH_{\Sigma'(a, \dual{a})}\right)=\dim\left(\cH_{\Sigma'(\permLoop{C}(a), \dual{\permLoop{C}(a)})}\right)\label{eq:dimisomorphismcutspaces} 
\end{align}
is a necessary condition that~$\permLoop{C}$ has to satisfy.

\subsection{Global constraints from basis changes}
 
Eq.~\eqref{eq:automorphismpermutation} essentially tells us that a locality-preserving protected gate~$U$ can only permute particle labels; it indicates that such a gate~$U$ is related to certain symmetries of the anyon model. But~\eqref{eq:automorphismpermutation} does not tell us what phases basis states may acquire. We show how to obtain constraints on these phases by considering basis changes.
 This  also further constrains the allowed permutations on the labels of the idempotents.

Consider two DAP-decompositions $\cC$ and $\cC'$.
Expressed in the first basis~$\cB_{\cC}$, we have
\begin{align}
U\ket{\labeling{\ell}}&=\phaseExp{}{\labeling{\ell}}\ket{\perms(\labeling{\ell})}\ \label{eq:uappliedtopsi}
\end{align}
for some unknown phase~$\phaseArg{}{\labeling{\ell}}$ depending only on the labeling $\labeling{\ell}\in\mathsf{L}(\cC)$. 
This means that with respect to the basis elements of $\cB_{\cC}$,
the operator~$U$ is described by a matrix~${\bf U}={\bf \Pi} {\bf D}(\{\phaseArg{}{\labeling{\ell}}\}_{\labeling{\ell}})$, where 
${\bf \Pi}$ is a permutation matrix (acting on the fusion-consistent labelings~$\mathsf{L}(\cC)$), and ${\bf D}$ is a diagonal matrix with 
entries $\{\phaseExp{}{\labeling{\ell}}\}_{\labeling{\ell}}$ on the diagonal.

Analogously, we can consider the operator $U$ expressed as a matrix~${\bf U'}$ in terms of the basis elements of $\cB_{\cC'}$. We conclude that  ${\bf U'}={\bf \Pi'} {\bf D}(\{\phaseArg{'}{\labeling{\ell}}\}_{\labeling{\ell}})$, for $\labeling{\ell}\in\mathsf{L}(\cC')$, with  a (potentially different) permutation matrix~${\bf \Pi'}$, and (potentially different) phases~$\{\phaseArg{'}{\labeling{\ell}}\}_{\labeling{\ell}}$.

Let ${\bf V}$ be the unitary change-of-basis matrix for going from $\cB_{\cC}$ to $\cB_{\cC'}$. Then we must have
\begin{equation}
    \label{eq:consistencycondition}
    {\bf V}
    {\bf U}
    ={\bf U'}
     {\bf V}.
\end{equation}
We show below that this equation strongly constrains the phases  as well as the permutations in~\eqref{eq:upsielltransformation}. More specifically, we will examine constraints arising when using basis changes~$\bV$ defined by $F$-moves in Section~\ref{sec:globalconstraintsfmove}. In Section~\ref{sec:glblmcg}, we consider  basis changes~$\bV$ defined by elements of the mapping class group.

\section{Global constraints from the mapping class group\label{sec:glblmcg}}
The following is based on the simple observation that we must have consistency conditions of the form~\eqref{eq:consistencycondition} for more general basis changes (in particular, basis changes not made up of $F$-moves only). We are particularly interested in the case where the basis change is the result of applying a mapping class group element.

\subsection{Basis changes defined by the mapping class group\label{sec:basischangesmcg}}
A key property of the representation~\eqref{eq:mcgrep} of the mapping class group~$\mcg_\Sigma$ is that it maps idempotents according to
\begin{align}
V(\vartheta) P_a(C)V(\vartheta)^\dagger &= P_a\bigl(\vartheta(C)\bigr)\ .\label{eq:mcgidempotents}
\end{align}
Let us fix a `standard' DAP-decomposition~$\cC$, and let $\cB_{\cC}=\{\ket{\labeling{\ell}}_{\cC}\}_{\labeling{\ell}}$ be the corresponding standard basis.

Let $\vartheta$ be an arbitrary element of $\mcg_\Sigma$. Consider the basis
\begin{equation*}
    \cB_{\vartheta(\cC)}\coloneqq\{V(\vartheta)\ket{\labeling{\ell}}\}_{\labeling{\ell}}.
\end{equation*}
Because
of~\eqref{eq:mcgidempotents}, this basis is a simultaneous eigenbasis
of the complete set of commuting observables associated with the DAP decomposition $\vartheta(\cC)\coloneqq\{\vartheta(C_j)\}_{j=1}^M$. The change of basis from $\cB_\cC$ to $\cB_{\vartheta(\cC)}$ is given by the image $V(\vartheta)$ of the mapping class group element~$\vartheta$.

In particular, if $\bV(\vartheta)$ is the matrix representing $V(\vartheta)$  in the standard basis, then \eqref{eq:consistencycondition} implies
\begin{align}
\bV(\vartheta)\bPi\bD&=\bPi(\vartheta)\bD(\vartheta)\bV(\vartheta)\ \label{eq:compatibilitycond}
\end{align}
for some permutation matrix $\bPi(\vartheta)$ and a diagonal matrix $\bD(\vartheta)$ consisting of phases.

Some terminology will be useful: Let
$\phiperm$ be the set of matrices
of the form $\bPi\bD$, where $\bPi$ is a permutation of fusion-consistent labelings, and $\bD$ is a diagonal matrix with phases (these are sometimes called \emph{unitary monomial matrices}).
For $\bU\in\phiperm$  and $\vartheta\in\mcg_\Sigma$, we say that $\bU$ intertwines with $\vartheta$ if
\begin{align*}
\bV(\vartheta)\bU \bV(\vartheta)^\dagger\in\phiperm\ .
\end{align*}
Let $\phiperm_\vartheta\subset \phiperm$ be the set of matrices that intertwine with~$\vartheta$, and let
\begin{align*}
\phiperm_{\mcg_\Sigma}=\bigcap_{\vartheta\in\mcg_\Sigma} \phiperm_\vartheta\ 
\end{align*}
be the matrices that are intertwiners of the whole mapping class group representation. We have shown the following:
\begin{theorem}\label{thm:mainmappinglcassgroup}
Let $\bU$ be the matrix representing a protected gate~$U$ in the standard basis. Then~$\bU\in\phiperm_{\mcg_\Sigma}$.
\end{theorem}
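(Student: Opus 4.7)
\medskip

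\noindent\textbf{Proof proposal.}
The plan is to apply Proposition~\ref{prop:globalfusionconstraint} twice, once to the standard DAP-decomposition~$\cC$ and once to its image $\vartheta(\cC)$ under an arbitrary mapping class group element~$\vartheta$, and then to extract the intertwining property from the compatibility condition~\eqref{eq:consistencycondition} between the two resulting matrix representations of~$U$.

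First I would observe that, by Proposition~\ref{prop:globalfusionconstraint} applied to~$\cC$, the locality-preserving automorphism~$U$ acts on the standard basis~$\cB_\cC$ as
$U\ket{\labeling{\ell}}=\phaseExp{}{\labeling{\ell}}\ket{\perms(\labeling{\ell})}$, and hence its matrix~$\bU$ in this basis has the form $\bPi\bD$ with $\bPi$ a permutation of fusion-consistent labelings and $\bD$ diagonal with phase entries; that is, $\bU\in\phiperm$. The next ingredient is to identify $\cB_{\vartheta(\cC)}=\{V(\vartheta)\ket{\labeling{\ell}}\}_\labeling{\ell}$ as the standard basis associated with the DAP-decomposition~$\vartheta(\cC)$. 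This is a consequence of~\eqref{eq:mcgidempotents}: conjugation by $V(\vartheta)$ maps the idempotents $P_a(C)$ to $P_a(\vartheta(C))$, so each $V(\vartheta)\ket{\labeling{\ell}}$ is a simultaneous $+1$-eigenvector of $\{P_{\labeling{\ell}(C)}(\vartheta(C))\}_{C\in\cC}$, and hence coincides (up to an unimportant phase) with the corresponding basis vector of $\cB_{\vartheta(\cC)}$.

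Given this identification, I would apply Proposition~\ref{prop:globalfusionconstraint} a second time, now to the DAP-decomposition~$\vartheta(\cC)$. This yields that the matrix~$\bU'$ representing the same operator~$U$ in the basis~$\cB_{\vartheta(\cC)}$ also lies in~$\phiperm$: it is of the form $\bPi(\vartheta)\bD(\vartheta)$ for some permutation $\bPi(\vartheta)$ of fusion-consistent labelings of~$\vartheta(\cC)$ and some diagonal phase matrix~$\bD(\vartheta)$. The compatibility condition~\eqref{eq:consistencycondition}, specialized to the basis change from $\cB_\cC$ to~$\cB_{\vartheta(\cC)}$ implemented by~$\bV(\vartheta)$, reads $\bV(\vartheta)\bU=\bU'\bV(\vartheta)$, which rearranges to $\bV(\vartheta)\bU\bV(\vartheta)^{-1}=\bU'\in\phiperm$. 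This is precisely the statement that $\bU$ intertwines with~$\vartheta$, so $\bU\in\phiperm_\vartheta$. Since $\vartheta\in\mcg_\Sigma$ was arbitrary, we conclude $\bU\in\bigcap_{\vartheta\in\mcg_\Sigma}\phiperm_\vartheta=\phiperm_{\mcg_\Sigma}$.

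In effect, the theorem is essentially a repackaging of the local monomial structure of Proposition~\ref{prop:globalfusionconstraint} together with covariance of the basis under the mapping class group representation; most of the substantive work has already been carried out before the theorem statement. The only step that requires any care, and which I regard as the main (though mild) obstacle, is the identification of $\cB_{\vartheta(\cC)}$ as a standard DAP-basis via~\eqref{eq:mcgidempotents}. Once that is granted, Proposition~\ref{prop:globalfusionconstraint} can be reused verbatim and the intertwining statement is immediate from~\eqref{eq:consistencycondition}.
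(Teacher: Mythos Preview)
Your proposal is correct and follows essentially the same approach as the paper: the paper presents precisely this argument in the paragraphs of Section~\ref{sec:basischangesmcg} leading up to the theorem (identifying $\cB_{\vartheta(\cC)}$ via~\eqref{eq:mcgidempotents}, applying the monomial structure of Proposition~\ref{prop:globalfusionconstraint} in both bases, and invoking the compatibility condition~\eqref{eq:consistencycondition} to obtain~\eqref{eq:compatibilitycond}), and then simply states the theorem as ``We have shown the following.'' Your write-up makes the logical dependencies explicit, but there is no substantive difference in strategy.
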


As an example, consider the torus: since $T=\bV(t)$ is diagonal,  it is easy to see that for any $\bPi\bD\in\Delta$, we have 
$T\bPi\bD T^{-1}=\bPi\bD'$ for some $\bD'$. This implies that $\phiperm_t=\phiperm$ is generally not interesting, i.e., $\bU\in\phiperm_t$ does not impose an additional constraint. In contrast, mapping class group elements such as~$s$ and $st$ generally give different non-trivial constraints.

\subsection{Density of the mapping class group representation and absence of protected gates}
The following statement directly links computational universality of
the mapping class group representation to the non-existence of protected gates.
\begin{corollary}\label{cor:density}
Suppose the representation of~$\mcg_\Sigma$ is dense in the projective unitary group $\mathsf{PU}(\cH_\Sigma)$. Then there is no non-trivial protected gate.
\end{corollary}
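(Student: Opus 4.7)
The plan is to combine Theorem~\ref{thm:mainmappinglcassgroup} with the density hypothesis to promote the constraint ``conjugation sends $\bU$ into the monomial unitaries'' from mapping class group elements to \emph{all} unitaries, and then show that a matrix whose full unitary conjugation orbit lies in $\phiperm$ must be a scalar.

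First, I would observe that $\phiperm$ is a \emph{closed} subset of the unitary group on $\cH_\Sigma$: it is a finite union, indexed by permutations of $\mathsf{L}(\cC)$, of cosets of the form $\bPi\cdot\{\bD : \bD\text{ diagonal phase matrix}\}$, each a compact torus. Since the conjugation action $\bU \mapsto W\bU W^\dagger$ is invariant under rescaling $W$ by a global phase, density of the projective representation in $\mathsf{PU}(\cH_\Sigma)$ yields the following: for every $W \in \mathsf{U}(\cH_\Sigma)$ there exist $\vartheta_n \in \mcg_\Sigma$ and phases $\alpha_n$ with $\alpha_n \bV(\vartheta_n) \to W$, and hence $\bV(\vartheta_n) \bU \bV(\vartheta_n)^\dagger \to W \bU W^\dagger$. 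By Theorem~\ref{thm:mainmappinglcassgroup} each prelimit term lies in the closed set $\phiperm$, so $W \bU W^\dagger \in \phiperm$ for every $W \in \mathsf{U}(\cH_\Sigma)$.

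It then remains to prove a purely linear-algebraic lemma: any unitary $\bU$ satisfying $W\bU W^\dagger \in \phiperm$ for all $W \in \mathsf{U}(\cH_\Sigma)$ must be a scalar multiple of the identity. I would argue by contradiction, supposing $\bU$ has two distinct eigenvalues $\lambda_1 \neq \lambda_2$ with unit eigenvectors $v_1, v_2$. Choosing $W$ that maps $v_1 \mapsto \cos\theta\,\ket{e_1} + \sin\theta\,\ket{e_2}$ and $v_2 \mapsto -\sin\theta\,\ket{e_1}+\cos\theta\,\ket{e_2}$, and arbitrarily on the orthogonal complement, the upper-left $2\times 2$ block of $W\bU W^\dagger$ has diagonal entries $\lambda_1\cos^2\theta + \lambda_2\sin^2\theta$ and $\lambda_1\sin^2\theta + \lambda_2\cos^2\theta$, and off-diagonal entries $\sin\theta\cos\theta(\lambda_1-\lambda_2)$. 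For $\theta = \pi/4$ the off-diagonal equals $(\lambda_1-\lambda_2)/2 \neq 0$; if the diagonal $(\lambda_1+\lambda_2)/2$ happens to vanish, a small perturbation of $\theta$ restores it while keeping the off-diagonal nonzero. Hence the first row of $W\bU W^\dagger$ has two nonzero entries, contradicting the monomial property. We conclude $\bU = \alpha\,\id$, so $\logical{U}$ is the trivial element of $\mathsf{PU}(\cH_\Sigma)$.

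The main technical subtlety is verifying that conjugation and projective approximation interchange cleanly, which reduces to the observation that the global phases $\alpha_n$ cancel in $\bigl(\alpha_n\bV(\vartheta_n)\bigr)\bU\bigl(\alpha_n\bV(\vartheta_n)\bigr)^\dagger$. Once the full conjugation orbit of $\bU$ is trapped in the closed set $\phiperm$, the spectral-mixing contradiction is a short direct calculation.
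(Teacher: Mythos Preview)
Your proof is correct and follows essentially the same approach as the paper's: both use density to move from the $\mcg_\Sigma$-conjugation constraint to an arbitrary unitary conjugation, and both derive a contradiction by rotating two eigenvectors with distinct eigenvalues into the standard basis so that a single row acquires both a nonzero diagonal and a nonzero off-diagonal entry (the paper uses $\bV_1$ to diagonalize and then the Hadamard, or a $\pi/3$ rotation when $\lambda_2=-\lambda_1$; your $\theta$-rotation with the perturbation away from $\pi/4$ is the same device). The only cosmetic difference is that you package the approximation step as ``$\phiperm$ is closed, so limits of $\bV(\vartheta_n)\bU\bV(\vartheta_n)^\dagger$ stay in $\phiperm$,'' whereas the paper phrases it as ``$\bV\bU\bV^\dagger\notin\phiperm$ is an open condition, so it persists under approximation of $\bV$''; these are contrapositive formulations of the same continuity argument.
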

\begin{proof}
Let $U$ be an arbitrary protected gate and let $\bU\in\phiperm$ be the matrix representing it in the standard basis. Assume for the sake of contradiction that $U$ is non-trivial. Then $\bU$ is a unitary with at least two different eigenvalues~$\lambda_1$, $\lambda_2\in\unitaryGroup{1}$. In particular, there is a diagonalizing unitary ${\bV_1}$ such that
$\bV_1 \bU \bV_1^\dagger=\mathsf{diag}(\lambda_1,\lambda_2)\oplus \tilde{\bU}$ for some matrix $\tilde{\bU}$.  Setting $\bV_2=H\oplus I$, where $H$ is the Hadamard matrix
\begin{equation*}
    H
    =\frac{1}
          {\sqrt{2}}
     \left(\begin{matrix}
               1 & 1\\
               1 & -1
           \end{matrix}\right),
\end{equation*}
and $\bV=\bV_2\bV_1$, we obtain that
\begin{align}
\bV \bU\bV^\dagger &\not\in\phiperm
\label{eq:notelementphiperm}
\end{align}
because this matrix contains both diagonal and off-diagonal elements. Note that if $\lambda_2=-\lambda_1$ one may use the matrix
\begin{equation*}
    \frac{1}
         {2}
    \left(\begin{matrix}
                  1    & -\sqrt{3} \\
              \sqrt{3} & 1
          \end{matrix}\right)
\end{equation*}
instead of~$H$.

Observe also that~\eqref{eq:notelementphiperm} stays valid if we replace $\bV$ by a sufficiently close approximation (up to an irrelevant global phase) $\tilde{\bV}\approx \bV$.
In particular, by the assumed density, we may approximate $\bV$ by a product
$\tilde{\bV}=\bV(\vartheta_1)\cdots\bV(\vartheta_m)$
of images of $\vartheta_1,\ldots,\vartheta_m\in\mcg_\Sigma$.
But then  we have
\begin{align*}
\bU\not\in\Delta_{\vartheta_1\cdots \vartheta_m}\ ,
\end{align*}
which contradicts Theorem~\ref{thm:mainmappinglcassgroup}.
\end{proof}

Note that in general, the mapping class group is only dense on a subspace $\cH_0\subset \cH_\Sigma$. This is the case for example when the overall system allows for configurations where anyons can be present or absent (e.g., a boundary may or may not carry a topological charge). In such a situation,~$\cH_\Sigma$ decomposes into superselection sectors which are defined by the gluing postulate (i.e., having fixed labels associated with certain closed loops associated). Corollary~\ref{cor:density} can be adapted to this situation, e.g., as explained in Appendix~\ref{sec:appendixsuperselection} (Lemma~\ref{lem:generalizeddensity}).

\subsection{Characterizing diagonal protected gates\label{sec:diagonalgates}}
 Fix  a DAP-decomposition~$\cC$ and let $\vartheta\in\mcg_\Sigma$. Let us  call two (fusion-consistent) labelings $\labeling{\ell}_1$, $\labeling{\ell}_2$ {\em connected by~$\vartheta$} (denoted $\labeling{\ell}_1\connected{\vartheta}\labeling{\ell}_2$) if there is a labeling~$\labeling{\ell}$ such that
\begin{align*}
0 & \neq \langle \labeling{\ell}\vert V(\vartheta)\vert\labeling{\ell}_m\rangle\qquad\textrm{ for }m=1,2\ .
\end{align*}
(Here $\ket{\labeling{\ell}}$ is the associated basis element of $\cB_\cC$.)  
More generally, let us say $\labeling{\ell}_1$, $\labeling{\ell}_2$ are connected (written $\labeling{\ell}_1\Leftrightarrow\labeling{\ell}_2$) if there exists an element $\vartheta\in\mcg_\Sigma$  such that $\labeling{\ell}_1\connected{\vartheta}\labeling{\ell}_2$. Clearly, this notion is symmetric in $\labeling{\ell}_1,\labeling{\ell}_2$, and furthermore, it is reflexive, i.e., $\labeling{\ell}_1\Leftrightarrow\labeling{\ell}_1$ since $\labeling{\ell}_1\connected{\id}\labeling{\ell}_1$.  We can therefore define an equivalence relation on the set of labelings: we write $\labeling{\ell}_1\sim\labeling{\ell}_2$ if there are labelings $\labeling{k}_1,\dots,\labeling{k}_m$ such that
$\labeling{\ell}_1\Leftrightarrow\labeling{k}_1\Leftrightarrow\dots\Leftrightarrow\labeling{k}_m\Leftrightarrow\labeling{\ell}_2$.  We point out (for later use) that we can always find a finite collection~$\{\vartheta_k\}_{k=1}^M\subset\mcg_\Sigma$ that generates the relation~$\sim$ in the sense that $\labeling{\ell}_1\sim\labeling{\ell}_2$ if and only if $\labeling{\ell}_1\connected{\vartheta_k}\labeling{\ell}_2$ for some $k$ (after all, we only have a finite set of labelings~$\labeling{\ell}$).

Observe that if the representation of~$\mcg_\Sigma$ has a non-trivial invariant subspace, then  there is more than one equivalence class. We discuss an example of this below (see Section~\ref{sec:equivalenceclasses}). However, in important special cases such as the Fibonacci or Ising models, there is only one equivalence class for the relation~$\sim$, i.e., any pair of labelings are connected (see Lemma~\ref{lem:connectednessfib} and Lemma~\ref{lem:connectednessising} below). 

\begin{lemma}\label{lem:trivialphasegatecharact}
Consider a protected gate $U$ acting diagonally in the basis~$\cB_\cC$
as $U\ket{\labeling{\ell}}=\phaseExp{}{\labeling{\ell}}\ket{\labeling{\ell}}$. 
\begin{enumerate}[(i)]
\item\label{it:diagonalthetaaction}
Suppose that
$U$ also acts diagonally in the basis $\cB_{\vartheta(\cC)}$. 
Then~$\phaseArg{}{\labeling{\ell}_1}=\phaseArg{}{\labeling{\ell}_2}$ for any pair $\labeling{\ell}_1\connected{\vartheta}\labeling{\ell}_2$ connected by~$\vartheta$. 
\item\label{it:simtrivialproperty}
Suppose that 
$\{\vartheta_k\}_{k=1}^M\subset\mcg_\Sigma$ generates the relation $\sim$, and $U$ acts diagonally in each basis~$\cB_{\vartheta_k(\cC)}$. Then $\varphi$ assigns the same value to every element of
the same equivalence class under~$\sim$.
\end{enumerate}
\end{lemma}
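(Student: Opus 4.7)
The plan is to derive both parts from the consistency relation~\eqref{eq:consistencycondition} specialized to the mapping class group basis changes of Section~\ref{sec:basischangesmcg}, exploiting that when $U$ acts diagonally in both $\cB_\cC$ and $\cB_{\vartheta(\cC)}$, the permutation matrices are trivial so \eqref{eq:consistencycondition} reduces to an equation between two diagonal matrices conjugated by $\bV(\vartheta)$.

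For part~\eqref{it:diagonalthetaaction}, I would write $U\ket{\labeling{\ell}}_\cC=e^{i\phaseArg{}{\labeling{\ell}}}\ket{\labeling{\ell}}_\cC$ in $\cB_\cC$ and $U\ket{\labeling{\ell}}_{\vartheta(\cC)}=e^{i\phaseArg{'}{\labeling{\ell}}}\ket{\labeling{\ell}}_{\vartheta(\cC)}$ in $\cB_{\vartheta(\cC)}$, where $\ket{\labeling{\ell}}_{\vartheta(\cC)}=V(\vartheta)\ket{\labeling{\ell}}_\cC$. The matrix form of $U$ in the two bases is then $\bD=\mathsf{diag}(\{e^{i\phaseArg{}{\labeling{\ell}}}\})$ and $\bD'=\mathsf{diag}(\{e^{i\phaseArg{'}{\labeling{\ell}}}\})$, and \eqref{eq:consistencycondition} gives $\bV(\vartheta)\bD=\bD'\bV(\vartheta)$. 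Reading off the $(\labeling{\ell},\labeling{\ell}_1)$ matrix entry yields
\begin{equation*}
    \bV(\vartheta)_{\labeling{\ell},\labeling{\ell}_1}\bigl(e^{i\phaseArg{}{\labeling{\ell}_1}}-e^{i\phaseArg{'}{\labeling{\ell}}}\bigr)=0,
\end{equation*}
so whenever $\bV(\vartheta)_{\labeling{\ell},\labeling{\ell}_1}\neq 0$ one has $e^{i\phaseArg{'}{\labeling{\ell}}}=e^{i\phaseArg{}{\labeling{\ell}_1}}$. Now if $\labeling{\ell}_1\connected{\vartheta}\labeling{\ell}_2$, by definition there exists a common $\labeling{\ell}$ with $\bV(\vartheta)_{\labeling{\ell},\labeling{\ell}_m}\neq 0$ for $m=1,2$, and hence $e^{i\phaseArg{}{\labeling{\ell}_1}}=e^{i\phaseArg{'}{\labeling{\ell}}}=e^{i\phaseArg{}{\labeling{\ell}_2}}$, which is the claim.

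For part~\eqref{it:simtrivialproperty}, I would simply iterate~\eqref{it:diagonalthetaaction}. By definition of $\sim$ generated by $\{\vartheta_k\}_{k=1}^M$, any two labelings $\labeling{\ell}_1\sim\labeling{\ell}_2$ are connected by a finite chain $\labeling{\ell}_1=\labeling{k}_0\Leftrightarrow_{\vartheta_{i_1}}\labeling{k}_1\Leftrightarrow_{\vartheta_{i_2}}\cdots\Leftrightarrow_{\vartheta_{i_r}}\labeling{k}_r=\labeling{\ell}_2$ where each step uses one of the given generators. Since $U$ is assumed diagonal in every $\cB_{\vartheta_k(\cC)}$, part~\eqref{it:diagonalthetaaction} applies to each link of the chain and yields $\phaseArg{}{\labeling{k}_{i}}=\phaseArg{}{\labeling{k}_{i+1}}$; composing the equalities gives $\phaseArg{}{\labeling{\ell}_1}=\phaseArg{}{\labeling{\ell}_2}$.

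The argument is essentially bookkeeping once one recognizes that diagonality in both bases collapses \eqref{eq:consistencycondition} to an intertwining relation between diagonal matrices under $\bV(\vartheta)$. The only subtle point to be careful about is that the phases $\phaseArg{'}{\labeling{\ell}}$ appearing in the intermediate step are the eigenvalues in the rotated basis and need not coincide with the $\phaseArg{}{\labeling{\ell}}$'s; the argument bypasses this by using $\phaseArg{'}{\labeling{\ell}}$ only as a bridge between two values of $\phaseArg{}{\cdot}$, which is precisely what the relation $\labeling{\ell}_1\connected{\vartheta}\labeling{\ell}_2$ provides. I do not anticipate a genuine obstacle.
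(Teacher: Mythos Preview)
Your proof is correct and follows the same overall strategy as the paper: both reduce to the intertwining relation $\bV(\vartheta)\bD=\bD'\bV(\vartheta)$ between the diagonal representations of $U$ in the two bases, and then extract the equality of phases on $\vartheta$-connected labelings. Your extraction is in fact slightly more direct than the paper's: you read off the single matrix entry $(\labeling{\ell},\labeling{\ell}_1)$ of $\bV\bD=\bD'\bV$ to get $\bV_{\labeling{\ell},\labeling{\ell}_1}(e^{i\phaseArg{}{\labeling{\ell}_1}}-e^{i\phaseArg{'}{\labeling{\ell}}})=0$ immediately, whereas the paper instead looks at the diagonal entry $(\labeling{\ell},\labeling{\ell})$ of $\bV\bD\bV^\dagger=\bD'$, obtaining the averaged identity $\sum_{\labeling{k}}e^{i(\phaseArg{}{\labeling{k}}-\phaseArg{'}{\labeling{\ell}})}|\bV_{\labeling{\ell},\labeling{k}}|^2=1$ and then combines it with unitarity and a real-part argument to reach the same conclusion. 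The two arguments are equivalent in content; yours avoids the convexity step at the cost of nothing, while the paper's version makes the role of unitarity of $\bV(\vartheta)$ more explicit. Part~\eqref{it:simtrivialproperty} is handled identically in both.
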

We will refer to a protected gate~$U$ 
with property~\eqref{it:simtrivialproperty} as a $\sim$-trivial gate. One implication of Lemma~\ref{lem:trivialphasegatecharact} is that any protected gate which is close to the identity acts as a $\sim$-trivial gate (see the proof of Theorem~\ref{cor:finitegroup}). In Section~\ref{sec:finitenessprot}, we will show how to use this statement to prove that the set of protected gates is finite up to irrelevant phases.

\begin{proof}
Consider two labelings $\labeling{\ell}_1,\labeling{\ell}_2$ satisfying 
$\labeling{\ell}_1\connected{\vartheta}\labeling{\ell}_2$.
Then, writing $\bV=\bV(\vartheta)$, we know that
\begin{equation} 
    \bV_{\labeling{\ell},
         \labeling{\ell}_1}
    \neq
     0
    \qquad
    \text{and}
    \qquad
    \bV_{\labeling{\ell},
         \labeling{\ell}_2}
    \neq
     0
    \label{eq:assumptionconnected}
\end{equation}
for some labeling~$\labeling{\ell}$, where 
$\bV_{\labeling{\ell},\labeling{k}}=\langle\labeling{\ell}\vert V(\vartheta)\vert\labeling{k}\rangle$. Since~$U$ acts diagonally in both bases~$\cB_{\cC}$ and $\cB_{\vartheta(\cC)}$ by assumption, \eqref{eq:compatibilitycond} becomes simply
\begin{align}
\bV\bD\bV^\dagger &=\bD(\vartheta)\  \label{eq:commutativity}
\end{align}
when written in the standard basis.
Here the diagonal matrices are given by $\bD=\diag(\set{\phaseArg{}{\labeling{\ell}}}_{\labeling{\ell}})$ and~$\bD(\vartheta)=\diag(\set{\phaseArg{'}{\labeling{\ell}}}_{\labeling{\ell}})$.
Taking
 the diagonal entry at position $(\labeling{\ell},\labeling{\ell})$ in the matrix equation~\eqref{eq:commutativity}, we get the identity
\begin{equation}
    \sum_{\labeling{k}}
    \mathrm{e}^{\mathrm{i}
                (\phaseArg{}{\labeling{k}}-
                 \phaseArg{'}{\labeling{\ell}})}
    |\bV_{\labeling{\ell},
          \labeling{k}}|^2
    =1.
    \label{eq:angleequation}
\end{equation}
By unitarity of the mapping class group representation, we also have
\begin{equation}
    \sum_{\labeling{k}}
    |\bV_{\labeling{\ell},
          \labeling{k}}|^2
    =1.
    \label{eq:vunitarityequation}
\end{equation}
By taking the real  part of~\eqref{eq:angleequation}, it is straightforward to see that compatibility with~\eqref{eq:vunitarityequation} imposes that
$\cos\bigl(\phaseArg{}{\labeling{k}}-\phaseArg{'}{\labeling{\ell}}\bigr)=1$ whenever $|\bV_{\labeling{\ell},\labeling{k}}|\neq 0$ or 
\begin{equation*}
    \phaseArg{}{\labeling{k}}
    \equiv
     \phaseArg{'}{\labeling{\ell}}\mod
     2
     \pi
    \qquad
    \text{for all~$\labeling{k}$ with $|\bV_{\labeling{\ell},\labeling{k}}|\neq 0$}.
\end{equation*}
With~\eqref{eq:assumptionconnected}, we conclude that 
$\phaseArg{}{\labeling{\ell}_1}=\phaseArg{'}{\labeling{\ell}}=\phaseArg{}{\labeling{\ell}_2}$, which proves claim~\eqref{it:diagonalthetaaction}.

The claim~\eqref{it:simtrivialproperty} immediately follows from~\eqref{it:diagonalthetaaction}. 
 \end{proof}
 We will show how to apply this result to the Fibonacci model in Section~\ref{sec:fibnpunctured}.  Note that Lemma~\ref{lem:trivialphasegatecharact}
does not generally rule out the existence of
non-trivial diagonal protected gates in the standard basis (an example is a Pauli-$Z$  in the Ising model, see Section~\ref{sec:isingfourpuncturedcase}):  it is important that the protected gate is diagonal in 
\emph{several} different bases~$\{\cB_{\vartheta_k(\cC)}\}_k$.

A simple consequence of Lemma~\ref{lem:trivialphasegatecharact} is that any protected gate has a finite order up to certain phases:
\begin{lemma}
There is a finite $n_0$ (depending only on the dimension of $\cH_\Sigma$) such that for every protected gate $U$, there is an $n\leq n_0$ such that $U^n$ is a $\sim$-trivial phase gate.
\end{lemma}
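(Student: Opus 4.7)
\medskip
\noindent\textbf{Proof proposal.}
My plan is to combine Theorem~\ref{thm:mainmappinglcassgroup} with an elementary uniform bound on the orders of permutations on a finite set. First I would note that, by Proposition~\ref{prop:globalfusionconstraint}, any protected gate~$U$ acts in the standard basis $\cB_\cC$ as a monomial matrix $\bU=\bPi_0\bD_0$, with $\bPi_0$ a permutation of the finite set~$\mathsf{L}(\cC)$ of fusion-consistent labelings and $\bD_0$ diagonal. Next I would fix a finite generating family $\{\vartheta_k\}_{k=1}^M\subset\mcg_\Sigma$ of the relation~$\sim$ (whose existence is the remark made just before Lemma~\ref{lem:trivialphasegatecharact}). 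By Theorem~\ref{thm:mainmappinglcassgroup}, $\bU\in\phiperm_{\mcg_\Sigma}$, and consequently for each $k$ the conjugate $\bV(\vartheta_k)\bU\bV(\vartheta_k)^{\dagger}$ is again monomial, of the form $\bPi_k\bD_k$ with $\bPi_k$ a permutation of $\mathsf{L}(\cC)$.

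The second ingredient is the observation that every permutation $\bPi_k$ (for $k=0,1,\ldots,M$) acts on a set of cardinality $D\coloneqq\dim\cH_\Sigma$, so its order divides the uniform constant
\begin{equation*}
n_0\coloneqq \operatorname{lcm}(1,2,\ldots,D),
\end{equation*}
which depends only on $D$, and not on the particular gate~$U$ nor on the chosen generators $\vartheta_k$. A short direct calculation --- using that conjugation of a diagonal matrix by a permutation is again diagonal --- gives, for any monomial matrix $\bPi\bD$ and any $n\in\bbN$, an identity of the form $(\bPi\bD)^n=\bPi^n\bD^{(n)}$ with some diagonal $\bD^{(n)}$. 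Setting $n=n_0$ yields that $\bU^{n_0}$ is diagonal in the standard basis, and that $\bV(\vartheta_k)\bU^{n_0}\bV(\vartheta_k)^{\dagger}$ is diagonal for each $k$. Equivalently, the protected gate $U^{n_0}$ is simultaneously diagonal in $\cB_\cC$ and in every $\cB_{\vartheta_k(\cC)}$.

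The final step is to invoke Lemma~\ref{lem:trivialphasegatecharact}(\ref{it:simtrivialproperty}): since $U^{n_0}$ is diagonal in $\cB_\cC$ and in each $\cB_{\vartheta_k(\cC)}$, and the $\vartheta_k$ generate $\sim$, its diagonal phases must be constant on each $\sim$-equivalence class. Hence $U^{n_0}$ is a $\sim$-trivial phase gate, and we may take $n=n_0$, which depends only on $\dim\cH_\Sigma$. I expect the only real subtlety to be the \emph{uniformity} of the bound across all protected gates~$U$ and across all generators $\vartheta_k$; this is handled cleanly by replacing individual orders with $\operatorname{lcm}(1,\ldots,D)$, which upper-bounds the order of every permutation on at most $D$ symbols.
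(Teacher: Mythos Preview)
Your proof is correct and follows essentially the same route as the paper's: show $U$ is monomial in the standard basis and in each $\cB_{\vartheta_k(\cC)}$, raise to a power killing all the associated permutations, and invoke Lemma~\ref{lem:trivialphasegatecharact}(\ref{it:simtrivialproperty}). The only difference is cosmetic---where the paper takes $n=\mathsf{lcm}(n_{\vartheta_1(\cC)},\ldots,n_{\vartheta_M(\cC)})$ and then appeals to finiteness of the permutation group to get a uniform $n_0$, you bypass this by taking $n_0=\mathsf{lcm}(1,\ldots,D)$ directly, which is slightly cleaner and makes the bound explicit.
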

\begin{proof}
Consider an arbitrary DAP-decomposition~$\cC$ and suppose $U$ acts as~\eqref{eq:upsielltransformation} in the basis~$\cB_{\cC}$.
Since the permutation~$\perms$ acts on the finite set~$\mathsf{L}(\cC)$ of fusion-consistent labelings, it has finite order~$n_\cC$.
This means that $U^{n_\cC}$ acts diagonally in the basis~$\cB_{\cC}$.

Assume $\{\vartheta_k\}_{k=1}^M\subset\mcg_\Sigma$ generate the relation~$\sim$. Setting $n=\mathsf{lcm}(n_{\vartheta_1(\cC)},\ldots,n_{\vartheta_M(\cC)})$, we can apply
Lemma~\ref{lem:trivialphasegatecharact} to $U^n$ to reach the conclusion that 
$U^n$ is $\sim$-trivial. Furthermore, since the number $n$ depends only on
the permutation~$\perms$, and there are only finitely many such permutations, there is 
a finite $n_0$ with the claimed property. 
\end{proof}

\subsection{Finiteness of the set of protected gates\label{sec:finitenessprot}}
In the following, we will ignore phase differences that are ``global'' to subspaces of vectors defined by the equivalence classes of~$\sim$. That is, we will call two protected gates $U_1$ and $U_2$ equivalent (written $U_1\sim U_2$) if
\begin{align*}
\begin{matrix}
\bU_1&=&\bPi \bD_1\\
\bU_2&=&\bPi \bD_2
\end{matrix}\qquad\textrm{ and }\qquad (\bD_2)_{\labeling{\ell},\labeling{\ell}}=\phaseExp{}{[\labeling{\ell}]}(\bD_1)_{\labeling{\ell},\labeling{\ell}}\ ,
\end{align*}
i.e., they encode the same permutation of fusion-consistent labels, and their phases only differ by a phase~$\phaseArg{}{[\labeling{\ell}]}$  depending on the equivalence class~$[\labeling{\ell}]$ that $\labeling{\ell}$ belongs to. This is equivalent to the statement that~$\bU_1^{-1}\bU_2=\bD_1^{-1}\bD_2$ acts as a phase  dependent only on the equivalence class, i.e., $U_1^{-1}U_2$ is a $\sim$-trivial phase gate. 

We obtain an Eastin and Knill~\cite{EastinKnill2009} type statement, which is one of our main conclusions.
 
\begin{theorem}[Finite group of protected gates]\label{cor:finitegroup}
The number of equivalence classes of protected gates is finite. 
\end{theorem}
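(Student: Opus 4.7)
Write $\Gamma\subseteq U(\cH_\Sigma)$ for the group of protected gates (closed under composition and inversion) and $T\subseteq U(\cH_\Sigma)$ for the set of $\sim$-trivial phase gates, i.e., the diagonal unitaries in the standard basis $\cB_\cC$ whose phases depend only on the $\sim$-equivalence class of the labeling. The relation $U_1\sim U_2\iff U_1U_2^{-1}\in T$ shows that equivalence classes of protected gates are exactly the fibers of the map $\Gamma\to U(\cH_\Sigma)/T$, so the goal reduces to bounding the cardinality of its image.

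The plan is to factor this map through the permutation homomorphism $\Gamma\to\mathrm{Sym}(\mathsf{L}(\cC))$, $U\mapsto\perms_U$, which is well-defined by Proposition~\ref{prop:globalfusionconstraint} and takes values in a finite group; hence its image $\cP$ is finite. Choosing a representative $U_\sigma\in\Gamma$ for each $\sigma\in\cP$, every other preimage of $\sigma$ has the form $U_\sigma k$ for a diagonal protected gate $k\in K\coloneqq\ker(U\mapsto\perms_U)$, and $U_\sigma k_1\sim U_\sigma k_2$ iff $k_1 k_2^{-1}\in T$. Thus the number of equivalence classes equals $|\cP|\cdot|K/(K\cap T)|$, and it suffices to show $|K/(K\cap T)|<\infty$.

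The group $K$ is an abelian subgroup of the full diagonal torus $D\cong U(1)^{|\mathsf{L}(\cC)|}$, and $T$ is a closed subtorus of~$D$, cut out by the finitely many linear relations $\varphi(\labeling{\ell}_1)=\varphi(\labeling{\ell}_2)$ for $\labeling{\ell}_1\sim\labeling{\ell}_2$ on the angle coordinates. Consequently $D/T$ is a compact abelian Lie group, and $K/(K\cap T)$ embeds into it. Here the preceding lemma enters decisively: it yields a fixed integer $N$ (depending only on $\dim\cH_\Sigma$) such that $U^N\in T$ for every $U\in\Gamma$, in particular for every $U\in K$. Therefore the image of $K$ in $D/T$ lies in the $N$-torsion subgroup of~$D/T$. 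Writing $D/T\cong\mathbb{T}^m\times F$ for some finite abelian group~$F$, this $N$-torsion is $(\mathbb{Z}/N\mathbb{Z})^m\times F[N]$, which is finite; hence $|K/(K\cap T)|<\infty$.

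The one structural ingredient beyond the earlier lemmas is the standard fact that a compact abelian Lie group has finite $N$-torsion for each fixed~$N$; this is what converts the uniform bound on exponents (from the previous lemma) into honest finiteness. Everything else is bookkeeping of the permutation/phase decomposition $\bU=\bPi\bD$ of a protected gate. The main conceptual obstacle, and the reason the equivalence relation $\sim$ is introduced in the first place, is precisely that diagonal protected gates can form a continuous family of phases along each $\sim$-equivalence class; quotienting by $T$ removes exactly that continuous freedom, leaving only the compact-torsion piece controlled by $N$.
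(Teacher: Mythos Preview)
Your proof is correct and takes a genuinely different route from the paper's. The paper argues by contradiction: assuming infinitely many equivalence classes, it extracts (via pigeonhole on permutations and Bolzano--Weierstrass on the diagonal phases) a convergent subsequence of protected gates all having the same permutation part, forms the ``differences'' $\tilde U_j = U^{-1}U_{n_j}$ which converge to the identity, and then uses continuity to argue that for large~$j$ these act diagonally in every basis $\cB_{\vartheta_k(\cC)}$; Lemma~\ref{lem:trivialphasegatecharact} then forces each such $\tilde U_j$ to be $\sim$-trivial, contradicting the assumed pairwise inequivalence. Your argument is instead structural and direct: you factor the count as $|\cP|\cdot |K/(K\cap T)|$, invoke the finite-exponent lemma to place the image of $K$ inside the $N$-torsion of the compact abelian Lie group $D/T$, and use that this torsion subgroup is finite. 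Your route avoids sequences and the contradiction structure, at the price of a small Lie-theoretic input (that a quotient of tori has finite $N$-torsion), whereas the paper's proof stays entirely within elementary analysis but is less transparent about \emph{why} finiteness holds. One minor point worth making explicit in your write-up: the preceding lemma literally gives, for each $U$, some $n\le n_0$ with $U^n\in T$; to obtain a \emph{single} $N$ with $U^N\in T$ for all $U\in\Gamma$ you pass to $N=\mathrm{lcm}(1,\dots,n_0)$ and use that $T$ is closed under powers---this step is implicit in what you wrote but should be stated.
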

In particular, this means that locality-preserving automorphisms on their own do not provide quantum computational universality. 

\begin{proof}
Assume that there are infinitely many equivalence classes of protected gates.
Then we can choose a sequence~$\{U_n\}_{n\in\mathbb{N}}$ of
protected gates indexed by integers and belonging to different equivalence classes each.
Since the number of permutations of fusion-consistent labels is finite,
there exists at least one permutation matrix~$\bPi$ such that
there is an infinite subsequence of protected gates $U_n$ with
$\bU_n=\bPi \bD_n$, i.e., they act with the same permutation.
Applying the Bolzano-Weierstrass theorem to this subsequence, 
we conclude that there is a convergent subsequence of protected gates $\{U_{n_j}\}_{j\in\mathbb{N}}$ such that~$\bU_{n_j}=\bPi\bD_{n_j}$ for all $j$.
 Let
$U=\lim_{j\rightarrow\infty} U_{n_j}$ be the corresponding limit, and let us define
$\Ut_j\coloneqq U^{-1}U_{n_j}$. Clearly, each $\Ut_j$ is a protected gate and
\begin{align}
\bUt_j &= \bD^{-1}\bD_{n_j}\ \label{eq:elementsdef}
\end{align}
acts non-trivially on subspaces defined by equivalence classes,
i.e., $\Ut_j$ is a $\sim$-non-trivial phase gate. This is because of the assumption that
the original sequence~$\{U_n\}_{n\in\mathbb{N}}$ has elements belonging to different equivalence classes. Furthermore, we have that
\begin{align}
\lim_{j\rightarrow\infty} \bUt_j&=\bI\ ,\label{eq:limitidentity}
\end{align}
where $\bI$ is the identity matrix.

For a mapping class group element~$\vartheta\in\mcg_\Sigma$, the matrix
expressing the action of $\Ut_j$ in the basis~$\cB_{\vartheta(\cC)}$ is given by~$\bV(\vartheta)\bUt_j \bV(\vartheta)^\dagger$. Because $\Ut_j$ is a protected gate, we get
\begin{align}
\bV(\vartheta)\bUt_j \bV(\vartheta)^\dagger&=\tilde{\bPi}_{j}\tilde{\bD}_j\label{eq:standardformdiag}
\end{align}
for some permutation matrix $\tilde{\bPi}_j$ and a diagonal matrix~$\tilde{\bD}_j$ of phases. Combining~\eqref{eq:limitidentity},~\eqref{eq:standardformdiag}, using the unitarity of $\bV(\vartheta)$ and continuity, we conclude that
there exists some $N_0=N_0(\vartheta)$ such that~$\tilde{\bPi}_{j}=\bI$ for all $j\geq N_0$, i.e.,  
$\bV(\vartheta)\bUt_j \bV(\vartheta)^\dagger$ is diagonal for sufficiently large~$j$. Equivalently, for all $j\geq N_0$, $\Ut_j$ acts diagonally in the basis $\cB_{\vartheta(\cC)}$, as well as in the basis~$\cB_{\cC}$ (by~\eqref{eq:elementsdef}).

The latter conclusion can be extended uniformly to
a finite collection~$\{\vartheta_k\}_{k=1}^M\subset\mcg_\Sigma$ of mapping class group elements: there is a constant $N=N(\vartheta_1,\ldots,\vartheta_M)$ such that
for all $j\geq N$,  the protected gate~$\Ut_j$ acts as a diagonal matrix in all bases $\cB_{\cC}$, $\cB_{\vartheta_1(\cC)}$, ..., $\cB_{\vartheta_M(\cC)}$. 
Taking a  finite collection~$\{\vartheta_k\}_{k=1}^M\subset\mcg_\Sigma$ that generates the relation $\sim$ 
and
applying Lemma~\ref{lem:trivialphasegatecharact}, we reach the conclusion that
$\Ut_j$ is a~$\sim$-trivial phase gate for all $j\geq N$. This contradicts the fact that each $\Ut_j$ is a $\sim$-non-trivial phase gate, as argued above. 
\end{proof}

\subsection{Necessity of restricting to equivalence classes\label{sec:equivalenceclasses}}
Here we briefly argue that without imposing $\sim$-equivalence on protected gates, one can end up with infinitely many protected gates (that are, however, not very interesting).

Concretely, consider a model such as the toric code, with local commuting projector Hamiltonian~$H_{top}=-\sum_{j}\Pi_j$ acting
on spins which we collectively denote by~$A$. Let $\cH_\Sigma$ be its ground space. We introduce a local spin-degree of freedom  $B_j$ associated with each term in the Hamiltonian, and let $B=\bigotimes_j B_j$ the space of these auxiliary degrees of freedom. Define an Ising-like Hamiltonian $H_{I}=-\sum_{\langle j,j'\rangle}Z_j Z_{j'}$ coupling all nearest neighbors in $B$ (according to some notion). Finally, consider the following Hamiltonian:
\begin{align}
H&= J\cdot H_I-\sum_j \Pi_j\otimes \proj{0}_{B_j}-\sum_j \Pi_j\otimes \proj{1}_{B_j}\ .\label{eq:hivbf}
\end{align}
This Hamiltonian is local, and for large $J$, has a ground space
of the form~
\begin{align}
(\cH_{\Sigma}\otimes\ket{00\cdots 0})\oplus (\cH_{\Sigma}\otimes\ket{11\cdots 1})\ .\label{eq:directsumgroundspace}
\end{align}
In other words, the ground space (and similarly the low-energy subspace) splits as~$\cH^{(0)}_{\Sigma}\oplus\cH^{(1)}_{\Sigma}$ into two isomorphic copies of the space~$\cH_\Sigma$.

Now take two arbitrary protected gates $U^{(0)},U^{(1)}$ for $H_{top}$ (these may be global phases, i.e., trivial), implementing logical operations $\overline{U}^{(0)}$, $\overline{U}^{(1)}$. Let us assume that they are implemented by circuits acting locally, i.e., they can be written (arbitrarily -- the details do not matter) in the form
\begin{align*}
U^{(m)}&=U^{(m)}_{j_1}U^{(m)}_{j_2}\cdots U^{(m)}_{j_{M_m}}
\end{align*}
with each unitary $U_{j}$ local near the support of $\Pi_j$.
Then we can define the unitary
\begin{align*}
U&=\prod_{k=1}^{M_0}\left(U^{(0)}_{j_k}\otimes\proj{0}_{B_{j_k}}+\id\otimes\proj{1}_{B_{j_k}}\right)\prod_{k=1}^{M_1}\left(\id\otimes\proj{0}_{B_{j_k}}+U^{(1)}_{j_k}\otimes\proj{1}_{B_{j_k}}\right)
\end{align*}
    on $A\otimes B$. It is easy to check that $U$ is a protected gate and its logical action is 
    \begin{align*}
    \overline{U}&=\overline{U}^{(0)}\oplus\overline{U}^{(1)}\ .
    \end{align*}
In particular, such a unitary can introduce an arbitrary relative phase between the ``superselection'' sectors $\cH_\Sigma^{(0)}$, $\cH_\Sigma^{(1)}$: we can choose $U^{(0)}=I$ and $U^{(1)}=e^{i\varphi}I$.  The construction here corresponds to the direct sum of two TQFTs; the mapping class group representation is reducible and basis elements belonging to different sectors are inequivalent. Imposing the relation~$\sim$ on the set of protected gates renders all such relative-phase gates equivalent.

A small caveat is in order here concerning
the given microscopic example. The Hamiltonian~\eqref{eq:hivbf} indeed has~\eqref{eq:directsumgroundspace}
as its ground space. However, the latter is not an error-correcting code: whether  a state belongs to~$\cH_{\Sigma}^{(0)}$ or~$\cH_{\Sigma}^{(1)}$ can be determined by a local measurement.  Thus
information should only be encoded in either one of the superselection sectors,
and this renders the introduction of (arbitrary) relative phases between two superselection sectors computationally trivial. The example given here is mainly intended to give a concrete realization of the space~\eqref{eq:directsumgroundspace} as the ground space of a local Hamiltonian, and to illustrate the fact that reducibility of the mapping class group representation has important consequences on the form of protected gates.

\section{Global constraints from $F$-moves on the $n$-punctured sphere\label{sec:globalconstraintsfmove}}
We first consider the four-punctured sphere, where there are
two inequivalent DAP-decompositions related by an $F$-move (i.e., the basis change~$\bV$ is the $F$-matrix). More generally (e.g., for the $5$-punctured sphere), we need to consider several  different $F$-moves and obtain a constraint of the form~\eqref{eq:consistencycondition} for every pair of bases related by such moves. We describe such global constraints in Section~\ref{sec:localizationoffphases}. The results obtained by considering $F$-moves are summarized in Section~\ref{sec:summaryprotected}: there we outline a general procedure for
characterizing protected gates.

The consideration of/restriction to $n$-punctured spheres  is motivated by the fact that they correspond to $n-1$~anyons situated on a disc. Realizing such a system appears to be more feasible experimentally than designing e.g., a higher-genus surface. For this reason, the $n$-punctured sphere is most commonly  considered in the context of topological quantum computation. We point out, however, that our techniques immediately generalize to other (higher-genus) surfaces with or without punctures (although  basis changes  other than those given by the $F$-matrix need to be considered).

\subsection{Determining phases for the four-punctured sphere: fixed boundary labels\label{sec:fourpuncturedgeneral}}
For a four-punctured sphere~$\Sigma$, we can fix the labels on the punctures to $i,j,k,l\in \labels$. The corresponding space~$\cH_{\Sigma(i,j,k,l)}$ associated to this open surface with labeled boundary components is the fusion space~$V^{ij}_{kl}$. (In the non-abelian case, this space can have dimension larger than $1$.) 
We have two bases $\cB_{\cC}$, $\cB_{\cC'}$ of this fusion space, corresponding to two different DAP-decompositions differing by one loop (Fig.~\ref{fig:Fmove}).
We can enumerate basis elements by the label assigned to this loop. Let $\set{\ket{a}_\cC}_a$ and $\set{\ket{a}_{\cC'}}_a$ be the elements of the bases~$\cB_{\cC}$ and $\cB_{\cC'}$, respectively. Note that $a$ ranges over all elements consistent with the fusion rules.
 
For the models considered in this article, these are $N_{ij}^{a}=N_{kl}^{a}=1$.
Let $Q=Q(i,j,k,l)$ be the set of such elements. The basis change is given by the $F$-matrix
\begin{equation*}
    \ket{m}_{\cC'}
    =\sum_n
     F^{ijm}_{kln}
     \ket{n}_\cC.
\end{equation*}
Considering a locality-preserving automorphism which preserves the boundary labels (this is reasonable if we think of them as certain boundary conditions of the system), we can apply the procedure explained above to find the action
\begin{equation*}
    U
    \ket{a}_\cC
    =\phaseExp{}{a}
     \ket{\permLoop{C}(a)}_\cC
\end{equation*} 
 on basis states. Here $\permLoop{C}\colon Q\rightarrow Q$ permutes fusion-consistent labels.  To apply the reasoning above, we have to use the $|Q\times Q|$-basis change matrix ${\bf V}$ defined by ${\bf V}_{m,n}=F^{ijm}_{kln}$.

Solving the consistency relation~\eqref{eq:consistencycondition}
(for the permutations $\permLoop{C}$, $\permLoop{C'}$ and phases $\{\phaseArg{}{a}\}_a,\{\phaseArg{'}{a}\}_a$)
shows that for any permutation~$\permLoop{C}$  that is part of a solution, the function~$\varphi$ takes the form
\begin{align}
\phaseArg{}{a}&=\eta+f(a)\ ,\label{eq:etafaphi}
\end{align}
where $\eta$ is a global phase and $f$ belongs to a certain set of functions
which we denote 
\begin{align}
\Fset{\fourdot{i}{j}{k}{l}}{\four{i}{j}{k}{l}{\permLoop{C}(\cdot)}}\ .\label{eq:permutationset}
\end{align}
(The reason for this notation will become clearer when we discuss isomorphisms in Section~\ref{sec:fourpuncturedchange}; here we are concerned with relative phases arising from automorphisms.)  
In summary, we have
\begin{align}
U\ket{a}_\cC&=e^{i\eta}e^{if(a)}\ket{\permLoop{C}(a)}_\cC\qquad\textrm{ where }f\in \Fset{\fourdot{i}{j}{k}{l}}{\four{i}{j}{k}{l}{\permLoop{C}(\cdot)}}\ .\label{eq:Uetafa}
\end{align}
Here the set~\eqref{eq:permutationset} can be computed by
solving the 
consistency relation
\begin{align}
{\bf V}{\bf \Pi}{\bf D}(\{\phaseArg{}{a}\}_a)&={\bf \Pi'}{\bf D}(\{\phaseArg{'}{a}\}_a){\bf V} \label{eq:auto4punctureconsistencyeq}
\end{align}
with ${\bf V}_{m,n}=F^{ijm}_{kln}$. This scenario is a special case of the commutative diagram displayed in Fig.~\ref{fig:Fmovesquaretilde}.

\subsection{Determining phases for the four-punctured sphere in general\label{sec:fourpuncturedchange}}
Consider the four-punctured sphere~$\Sigma$ with fixed labels $i,j,k,l\in \labels$ on the punctures. Let  $\z{\imath},\z{\jmath},\z{k},\z{l}$ be another set of labels such that
the spaces~$\cH_{\Sigma(i,j,k,l)}$  and~$\cH_{\Sigma(\z{\imath},\z{\jmath},\z{k},\z{l})}$ are isomorphic.
In this situation, we can try to characterize locality-preserving isomorphisms
between two systems defined on~$\Sigma(i,j,k,l)$ and~$\Sigma(\z{\imath},\z{\jmath},\z{k},\z{l})$, respectively.
This situation is slightly more general than what we considered before (automorphisms of the same system), but it is easy to see that all arguments applied so far extend to this situation.  Note that we could have phrased our whole discussion in terms of isomorphisms between different spaces. However, we chose not to do so to minimize the amount of notation required; instead, we only consider this situation in this section. This generalization for the $4$-punctured sphere is all we need to treat automorphisms on higher-genus surfaces.

For $\cH_{\Sigma(i,j,k,l)}$, we have two bases $\cB_{\cC}$, $\cB_{\cC'}$, corresponding to two different DAP-decompositions differing by one loop. Similarly,
for  $\cH_{\Sigma(\z{\imath},\z{\jmath},\z{k},\z{l})}$, we have two bases $\z{\cB}_{\cC}$, $\z{\cB}_{\cC'}$, corresponding to two different DAP-decompositions differing by one loop.  We can enumerate the basis elements by the label assigned to this loop. Let $\set{\ket{a}_\cC}_a$ and $\set{\ket{a}_{\cC'}}_a$ be the elements of the basis $\cB_{\cC}$ and $\cB_{\cC'}$, respectively. Here $a$ ranges over the set $Q=Q(i,j,k,l)\subset\labels$ of 
all elements consistent with the fusion rules, i.e., we must have $N_{ij}^{a}=N_{kl}^{a}=1$.
Similarly, let $\set{\ket{\tilde{a}}_\cC}_{\tilde{a}}$ and $\set{\ket{\tilde{a}}_{\cC'}}_{\tilde{a}}$ be the elements of the basis $\z{\cB}_{\cC}$ and $\z{\cB}_{\cC'}$, respectively, where now $\z{a}\in
\z{Q}=Q(\z{\imath},\z{\jmath},\z{k},\z{l})$.

In this situation, we have two basis changes, 
\begin{align*}
\ket{m}_{\cC'} = \sum_n {\bf V}_{m,n} \ket{n}_\cC \textrm{ where } {\bf V}_{m,n}=F^{ijm}_{kln}\qquad\textrm{ and }\qquad \ket{\tilde{m}}_{\cC'} = \sum_{\z{n}} {\bf \z{V}}_{\z{m},\z{n}} \ket{\tilde{n}}_\cC \textrm{ where } \z{\bf V}_{\z{m},\z{n}}=F^{\z{\imath}\z{\jmath}\z{m}}_{\z{k}\z{l}\z{n}} .
\end{align*}
Now consider a locality-preserving isomorphism~$U$ which takes
the boundary labels $(i,j,k,l)$  to $(\z{\imath},\z{\jmath},\z{k},\z{l})$. We can then apply
the framework above to find the action
\begin{align*}
U \ket{a}_\cC =\phaseExp{}{a} \ket{\permLoop{C}(a)}_\cC \qquad\textrm{ or }\qquad U \ket{a}_{\cC'} =\phaseExp{'}{a} \ket{\permLoop{C'}(a)}_{\cC'}
 \end{align*} 
 on basis states. Here $\permLoop{C},\permLoop{C'}\colon Q\rightarrow \z{Q}$ take
fusion-consistent labels on $\Sigma(i,j,k,l)$ to fusion-consistent labels on~$\Sigma(\z{\imath},\z{\jmath},\z{k},\z{l})$.  
Because the spaces are isomorphic, we must have~$|Q|=|\z{Q}|$, hence $\permLoop{C},\permLoop{C'}$ can be represented by permutation matrices~${\bf \Pi}, {\bf \Pi'}$ 
in the basis pairs $(\cB_\cC,\tilde{\cB}_\cC)$ or $(\cB_{\cC'},\tilde{\cB}_{\cC'})$, respectively.  
Proceeding similarly with ${\bf U}$, we get the consistency equation
${\bf\tilde{V}}\bU=\bU'\bV$
or 
\begin{align}
{\bf \z{V}}{\bf \Pi}{\bf D}(\{\phaseArg{}{a}\}_a)&={\bf \Pi'}{\bf D}(\{\phaseArg{'}{a}\}_a){\bf V},\label{eq:consistencyequationpermphases}
\end{align}
which is expressed in the form of a commutative diagram as in Fig.~\ref{fig:Fmovesquaretilde}.
Equation \eqref{eq:consistencyequationpermphases} only differs from equation \eqref{eq:consistencycondition} in allowing boundary labels to change and the basis transformation matrix ${\bf \z{V}}$ must change accordingly.

   \begin{figure}
\begin{center}
\includegraphics[width=0.4\textwidth]{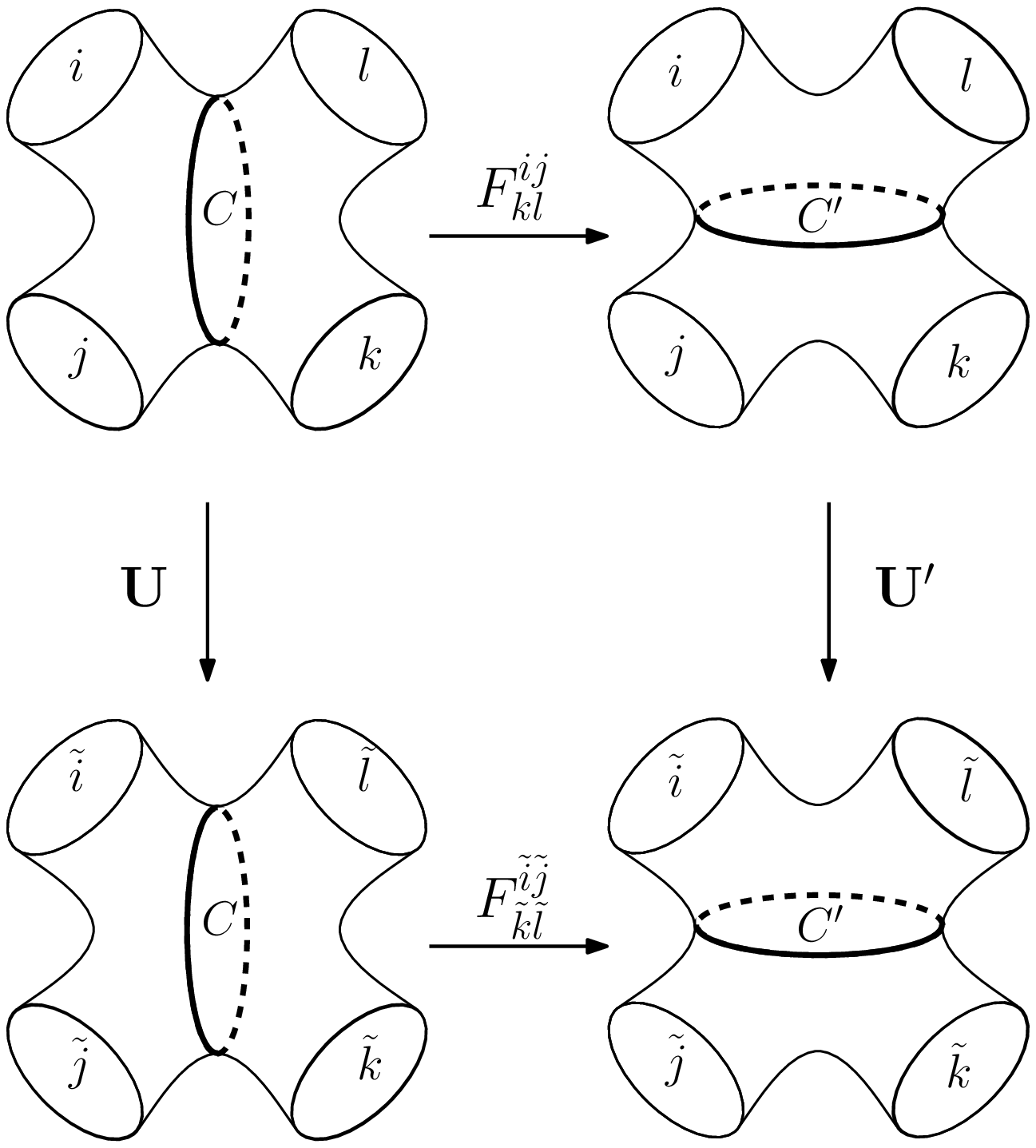}
\end{center}
\caption{ An isomorphism $\cH_{\Sigma(i,j,k,l)}\to \cH_{\Sigma(\z{\imath},\z{\jmath},\z{k},\z{l})}$ of two $4$-punctured spheres can be given as either $\mathbf{U}$, which relates the bases $\cB_C$ of $\cH_{\Sigma(i,j,k,l)}$ to $\tilde{\cB}_{C}$ of $\cH_{\Sigma(\z{\imath},\z{\jmath},\z{k},\z{l})}$, or as $\mathbf{U'}$ relating different bases $\cB_{C'}$ of $\cH_{\Sigma(i,j,k,l)}$ to $\tilde{\cB}_{C'}$ of $\cH_{\Sigma(\z{\imath},\z{\jmath},\z{k},\z{l})}$. The bases of $\cH_{\Sigma(i,j,k,l)}$ and $\cH_{\Sigma(\z{\imath},\z{\jmath},\z{k},\z{l})}$ are related through the $F$-moves $F^{ij}_{kl}$ and $F^{\z{\imath}\z{\jmath}}_{\z{k}\z{l}}$, respectively. The consistency equation \eqref{eq:consistencyequationpermphases} can be expressed as a commutative diagram. In the case where $\Sigma(i,j,k,l)=\Sigma(\z{\imath},\z{\jmath},\z{k},\z{l})$ have identical boundary labels such an isomorphism becomes an automorphism, and this reduces to the consistency equation \eqref{eq:auto4punctureconsistencyeq}.    }
\label{fig:Fmovesquaretilde} 
\end{figure} 

For a given set of boundary labels~$(i,j,k,l)$, $(\z{\imath},\z{\jmath},\z{k},\z{l})$, and a fixed choice of~$\permLoop{C}$ (which fixes~${\bf \Pi}$),  any solution
$({\bf \Pi'}, \{\phaseArg{}{a}\}_a, \{\phaseArg{'}{a}\}_a)$ of~\eqref{eq:consistencyequationpermphases} has phases~$\{\phaseArg{}{a}\}_a$  of the ``universal'' form
\begin{align}
\phaseArg{}{a} &=\eta+f(a)\qquad\textrm{ for all }a\in Q(i,j,k,l)\ ,
\end{align}
where $\eta\in [0,2\pi)$ is an arbitrary global phase independent of~$a$, 
and $f$ belongs to
a set $\Fset{\fourdot{i}{j}{k}{l}}{\four{\z{\imath}}{\z{\jmath}}{\z{k}}{\z{l}}{\permLoop{C}(\cdot)}}$ of functions that can be computed from~\eqref{eq:consistencyequationpermphases} as discussed below.

In summary, we have shown that~$U$ acts as
\begin{align}
U \ket{a}_\cC &=e^{i\eta}e^{if(a)} \ket{\permLoop{C}(a)}_\cC \qquad\textrm{ with }\qquad f\in\Fset{\fourdot{i}{j}{k}{l}}{\four{\z{\imath}}{\z{\jmath}}{\z{k}}{\z{l}}{\permLoop{C}(\cdot)}}\ ,\label{eq:psiactionphaser}
\end{align} 
and where the latter set can be determined by
solving the consistency relation~\eqref{eq:consistencyequationpermphases}.

\subsection{Localization of  phases for higher-genus surfaces\label{sec:localizationoffphases}}
We now argue that the phases appearing in Eq.~\eqref{eq:upsielltransformation} of Proposition~\ref{prop:globalfusionconstraint} also factorize into certain essentially local terms, similar to how the overall permutation~$\perms$ of fusion-consistent labelings decomposes into a collection~$\perms=\{\permLoop{C}\}_{C\in\cC}$  of permutations of labels. 
More precisely, we will argue that conclusion~\eqref{eq:psiactionphaser} 
can be extended to more general surfaces.

\newcommand*{\neighbors}[1]{N(#1)}
Consider a fixed DAP-decomposition~$\cC$ of~$\Sigma$. 
We call a curve~$C\in\cC$ {\em internal} if the intersection of~$\Sigma$ with a ball containing~$C$ has the form of a $4$-punctured sphere with
boundary components~$C_1,C_2,C_3,C_4$
consisting of curves `neighboring'~$C$ in the DAP decomposition. 
We call $\neighbors{C}=\{C_1,C_2,C_3,C_4\}$  the neighbors (or neighborhood) of $C$ as illustrated in Fig.~\ref{fig:Npunctureneighbors}. Key to the following observations is that
a basis vector~$\ket{\labeling{\ell}}$ whose restriction to these neighbors
is given by $\labeling{\ell}\restriction\neighbors{C}=\bigl(\labeling{\ell}(C_1),\dots,\labeling{\ell}(C_4)\bigr)$
gets mapped under $U$ to a vector proportional to~$\ket{\perms(\labeling{\ell})}$, which assigns the labels $\perms(\labeling{\ell})\restriction \neighbors{C}=\bigl(\permLoop{C_1}[\labeling{\ell}(C_1)],\dots,\permLoop{C_4}[\labeling{\ell}(C_4)]\bigr)$
to the same curves. This means that the restriction of $U$ to this subspace
satisfies similar consistency conditions as the  isomorphisms between
Hilbert spaces associated with the $4$-punctured spheres  $\Sigma(\labeling{\ell}\restriction N(C))$ and $\Sigma\bigl(\perms(\labeling{\ell})\restriction N(C)\bigr)$ studied in Section~\ref{sec:fourpuncturedgeneral}. In particular, for a fixed labeling~$\labeling{\ell}$ the dependence of the phase~$\phaseArg{}{\labeling{\ell}}$ on the label~$\labeling{\ell}(C)$ is given by
a function from the set
$\Fset{\fourdot{i}{j}{k}{l}}{\four{\z{\imath}}{\z{\jmath}}{\z{k}}{\z{l}}{\permLoop{C}(\cdot)}}$,
where $(i,j,k,l)=\labeling{\ell}\restriction N(C)$ and
$(\z{\imath},\z{\jmath},\z{k},\z{l})=\perms(\labeling{\ell})\restriction N(C)$.
In the following, we simply write
$ \fset{\labeling{\ell}\restriction N(C)}{\perms(\labeling{\ell})\restriction N(C)}{\permLoop{C}}$ for this set.

\begin{figure}
\begin{center}
\includegraphics[width=0.3\textwidth]{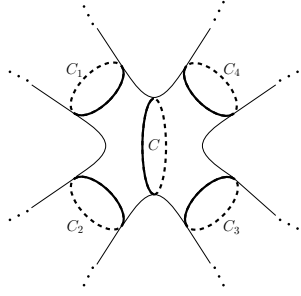}
\end{center}
\caption{For some DAP-decomposition $\cC$ of a surface $\Sigma$, a curve $C\in\cC$ is considered internal if its neighbors $N(C)=\{C_1,C_2,C_3,C_4\}$ define the boundaries of a $4$-punctured sphere.  }
\label{fig:Npunctureneighbors} 
\end{figure}

 \begin{proposition}[Localization of internal phases]\label{prop:internalphases}
 Let $U$ be a locality-preserving automorphism.   Let $\cC$ be a DAP-decomposition of $\Sigma$, and let~$\perms=\{\permLoop{C}\}_{C\in\cC}$ be the family of permutations defined by Proposition~\ref{prop:mainsingleloop}. Let $\phaseArg{}{\labeling{\ell}}$ for $\labeling{\ell}\in\mathsf{L}(\cC)$ be defined by~\eqref{eq:upsielltransformation}.
If $C\in\cC$ is internal, then
\begin{align*}
\phaseArg{}{\labeling{\ell}}&=\eta(\labeling{\ell}{\restriction \cC\backslash\{C\}} )+
f_{\perms\restriction N(C)}(\labeling{\ell}\restriction N(C),\labeling{\ell}(C))
\end{align*}
for some functions~$\eta$ and $f$. Furthermore, we have
\begin{align*}
f_{\perms\restriction N(C)}(\labeling{\ell}\restriction N(C), \cdot )\in \fset
{\labeling{\ell}\restriction N(C)}{\perms(\labeling{\ell})\restriction N(C)}{\permLoop{C}}\ .
\end{align*}
  In particular, the dependence of~$\phaseArg{}{\labeling{\ell}}$ on $\labeling{\ell}(C)$ is ``local'' and ``controlled'' by the 
labeling $\labeling{\ell}\restriction N(C)$ of the neighbors.
 \end{proposition}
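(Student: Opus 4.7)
The plan is to reduce the statement to the four-punctured sphere analysis already carried out in Section~\ref{sec:fourpuncturedchange}, by fixing the labels on all loops other than $C$ and invoking the gluing postulate to isolate a local $4$-punctured sphere around~$C$. Concretely, let $C\in\cC$ be internal with neighborhood $N(C)=\{C_1,C_2,C_3,C_4\}$, and fix a labeling $\labeling{m}$ of $\cC\setminus\{C\}$. I consider the subspace
\begin{align*}
\cH_{\labeling{m}} \coloneqq \mathsf{span}\set{\ket{\labeling{\ell}} \ |\ \labeling{\ell}\in\mathsf{L}(\cC),\ \labeling{\ell}\restriction\cC\setminus\{C\}=\labeling{m}},
\end{align*}
whose dimension equals the number of fusion-consistent values of $\labeling{\ell}(C)$, which by the gluing postulate (applied iteratively along the loops in $\cC\setminus\{C,C_1,\dots,C_4\}$) is in canonical bijection with the fusion space $V^{ij}_{kl}=\cH_{S^2(i,j,k,l)}$ of a $4$-punctured sphere with boundary labels $(i,j,k,l)=\labeling{m}\restriction N(C)$.

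Next, I would use Proposition~\ref{prop:gluingconstraint} (applied to each loop in $\cC\setminus\{C\}$) together with Proposition~\ref{prop:globalfusionconstraint} to conclude that $U$ carries $\cH_{\labeling{m}}$ isomorphically onto $\cH_{\perms(\labeling{m})}$, where $\perms(\labeling{m})$ is the induced labeling on $\cC\setminus\{C\}$. Under the gluing identification, this is a locality-preserving isomorphism
\begin{align*}
U\restriction\cH_{\labeling{m}}\colon\cH_{S^2(i,j,k,l)}\longrightarrow\cH_{S^2(\z{\imath},\z{\jmath},\z{k},\z{l})}
\end{align*}
with $(\z{\imath},\z{\jmath},\z{k},\z{l})=\perms(\labeling{m})\restriction N(C)$. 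The key point here is that the basis change relating $\cB_\cC$ to the basis $\cB_{\cC'}$, where $\cC'$ is obtained from $\cC$ by replacing $C$ by a curve $C'$ in $N(C)$'s interior in the other channel, factorizes as the identity on the complement times the $F$-matrix $F^{ij}_{kl}$ (and similarly $F^{\z{\imath}\z{\jmath}}_{\z{k}\z{l}}$ on the image side). Indeed, this $F$-move is a local move supported inside the ball $\Sigma_0$ bounded by $N(C)$, so by Postulate~\ref{pos:stringdeformationpostulate} its action on $\cH_\Sigma$ restricts to the stated $F$-move on each summand $\cH_{\labeling{m}}$.

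Having established this, I can directly apply the analysis from Section~\ref{sec:fourpuncturedchange}: writing $U\ket{a}_\cC=e^{i\phaseArg{}{\labeling{m},a}}\ket{\permLoop{C}(a)}_\cC$ for $a=\labeling{\ell}(C)$, the consistency equation~\eqref{eq:consistencyequationpermphases} applied to $U\restriction\cH_{\labeling{m}}$ gives
\begin{align*}
\phaseArg{}{\labeling{m},a}=\eta(\labeling{m})+f_{\perms\restriction N(C)}\bigl(\labeling{m}\restriction N(C),a\bigr),
\end{align*}
where $\eta(\labeling{m})$ is the global phase of the restriction, absorbing all dependence on the labels outside $C$, and $f_{\perms\restriction N(C)}(\labeling{m}\restriction N(C),\cdot)\in\fset{\labeling{m}\restriction N(C)}{\perms(\labeling{m})\restriction N(C)}{\permLoop{C}}$ by~\eqref{eq:psiactionphaser}. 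Since $\labeling{m}=\labeling{\ell}\restriction\cC\setminus\{C\}$, this is exactly the claimed factorization.

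The main obstacle I anticipate is justifying rigorously that the basis change inside $\cH_\Sigma$ corresponding to replacing $C$ by $C'$ acts as $F^{ij}_{kl}$ on each fiber $\cH_{\labeling{m}}$ with no additional $\labeling{m}$-dependent phase mixing into the fiber coordinate $a$. This requires using Postulate~\ref{pos:stringdeformationpostulate} to confine the move to the ball $\Sigma_0$, and appealing to MacLane's theorem (as cited after Postulate~\ref{pos:completeness_of_strings}) to guarantee that the order in which local moves are performed does not affect the logical action; only with this confinement can one safely separate the $\labeling{m}$-dependent prefactor from the $F$-matrix element depending on $a$. Once this separation is in place, the conclusion is immediate from Section~\ref{sec:fourpuncturedchange}.
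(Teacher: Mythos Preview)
Your proposal is correct and follows essentially the same approach as the paper: fix all labels in $\cC\setminus\{C\}$, identify the resulting subspace with a four-punctured sphere via gluing, observe that $U$ restricts to an isomorphism of four-punctured sphere Hilbert spaces, and then invoke the analysis of Section~\ref{sec:fourpuncturedchange}. The paper handles your ``main obstacle'' simply by asserting that the $F$-move consistency relation is entirely local (a standard feature of the TQFT basis-change framework of Section~\ref{sec:baseshilberspacehsigma}, rather than a consequence of Postulate~\ref{pos:stringdeformationpostulate}, which concerns string operators).
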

In other words, if we fix a family of permutations~$\perms$, 
and the labels on the neighbors~$N(C)$, then the dependence on the label~$\labeling{\ell}(C)$ of the internal edge is essentially fixed.

\begin{proof}
We will focus our attention on the subspace~$\cH_{(i,j,k,l,\star)}\subseteq\cH_\Sigma$ spanned by labelings~$\labeling{\ell}$
 with $(\labeling{\ell}(C_1),\labeling{\ell}(C_2),\labeling{\ell}(C_3),\labeling{\ell}(C_4))=(i,j,k,l)$ and $\labeling{\ell}\restriction \mathcal{C}\backslash\{C,C_1,C_2,C_3,C_4\}=\star$ fixed (arbitrarily).  For the purpose of this proof, it will be convenient to  represent basis vectors~$\ket{\labeling{\ell}}$ associated with such a labeling~$\labeling{\ell}\in\mathsf{L}(C)$ as a vector
\begin{align*}
\ket{\labeling{\ell}}=\ket{\labeling{\ell}(C),\labeling{\ell}(C_1),\labeling{\ell}(C_2),\labeling{\ell}(C_3),\labeling{\ell}(C_4),\star}&=\ket{a,i,j,k,l,\star}\ .
\end{align*}
Defining $\z{\imath}=\permLoop{C_1}(i)$, $\z{\jmath}=\permLoop{C_2}(j)$, $\z{k}=\permLoop{C_3}(k)$, $\z{l}=\permLoop{C_4}(l)$, we can rewrite~\eqref{eq:upsielltransformation} in the form\begin{align*}
U\ket{a,i,j,k,l,\star}&=\phaseExp{}{a,i,j,k,l,\star} \ket{\permLoop{C}(a),\z{\imath},\z{\jmath},\z{k},\z{l},\z{\star}}\ ,
\end{align*}
where $\z{\star}=\perms_\restriction(\star)$ for some map~$\perms_\restriction$ taking labelings of the set $\cC\backslash\{C, C_1,C_2,C_3,C_4\}$ consistent with $(i,j,k,l)$ to those consistent with $(\z{\imath},\z{\jmath},\z{k},\z{l})$. We conclude that
the restriction of $U$ to $\cH_{(i,j,k,l,\star)}$ implements
an isomorphism $\cH_{(i,j,k,l,\star)}\cong \cH_{(\z{\imath},\z{\jmath},\z{k},\z{l},\z{\star})}$. 
Since these spaces are isomorphic to~$\cH_{\Sigma(i,j,k,l)}$ and 
$\cH_{\Sigma(\z{\imath},\z{\jmath},\z{k},\z{l})}$, respectively, we can apply the result of
Section~\ref{sec:fourpuncturedchange}. Indeed, the consistency relation
imposed by the $F$-move is entirely local, not affecting 
labels associated with curves not belonging to~$\{C,C_1,C_2,C_3,C_4\}$. We conclude
from~\eqref{eq:psiactionphaser} that
\begin{align*}
\phaseArg{}{a,i,j,k,l,\star}&=\eta(i,j,k,l,\star)+f(a), \textrm{ where }f\in \Fset{\fourdot{i}{j}{k}{l}}{\four{\z{\imath}}{\z{\jmath}}{\z{k}}{\z{l}}{\permLoop{C}(\cdot)}}\ .
\end{align*}
Since $(a,i,j,k,l,\star)$ were arbitrary, this proves the claim. 
\end{proof}

For example, for $S^2(z^{N+3})$ (as described above), we can apply Proposition~\ref{prop:internalphases} to the $j$-th internal edge~$C_j$ to obtain
\begin{align}
\phaseArg{}{\labelSequenceSphere{x}}&=\eta_j (x_1,\ldots,\widehat{x}_j,\ldots,x_N)+
f_j(x_{j-1},x_j,x_{j+1}),
\end{align}
where
\begin{align} f_j(x_{j-1},\cdot,x_{j+1})\in \Fset{\fourdot{z}{x_{j-1}}{x_{j+1}}{z}}{\four{z}{\z{x}_{j-1}}{\z{x}_{j+1}}{z}{\permLoop{C_j}(\cdot) }},\nonumber
\end{align}
and
\begin{equation*}
   \z{x}_{j-
          1}
   =\permLoop{C_{j-
                 1}}(x_{j-
                        1}),
   \qquad
   \z{x}_{j+
          1}
   =\permLoop{C_{j+
                 1}}(x_{j+
                        1}).
\end{equation*}
Here, we use $\hat{x}_j$ to indicate that this argument is omitted.

\subsection{Characterizing protected gates on the $M$-punctured sphere using $F$-moves\label{sec:summaryprotected}}
The results in this section give the following procedure for characterizing protected gates
associated with $\cH_{S^2(z^M)}$, the Hilbert space of $M=N+3$ anyons of type~$z$. We know from
 Proposition~\ref{prop:globalfusionconstraint} 
that  the action~$U\ket{\labeling{\ell}}=\phaseExp{}{\labeling{\ell}}\ket{\perms(\labeling{\ell})}$ on fusion-consistent labelings is parametrized by certain families $\perms=\{\permLoop{C}\}_{C\in\cC}$ of permutations, as well as a function~$\varphi$ describing the phase-dependence. To characterize the latter, we  
\begin{enumerate}[(i)] 
\item
determine the set of allowed `local' permutations~$\permLoop{C}$ and associated phases~$f$ for any occuring internal curve~$C$. This amounts to solving the consistency equation~\eqref{eq:consistencyequationpermphases} for the four-punctured sphere, with appropriate boundary labels. 
For the standard pants decomposition of the $N+3$-punctured sphere, 
this means finding all pairs 
\begin{align*}
\left(\permLoop{C_j}, f_j\right)\qquad\textrm{ where }f_j\in \Fset{\fourdot{z}{x_{j-1}}{x_{j+1}}{z}}{\four{z}{\z{x}_{j-1}}{\z{x}_{j+1}}{z}{\permLoop{C_j}(\cdot) }}\ .
\end{align*}
These correspond to isomorphisms between the Hilbert spaces associated with the labeled surfaces  $S^2(z,x_{j-1},x_{j+1},z)$
and $S^2(z,\z{x}_{j-1},\z{x}_{j+1},z)$, where $x_{j-1},\z{x}_{j-1}\in Q(j-1),x_{j+1},\z{x}_{j+1}\in Q(j+1)$.
\item
we constrain the family $\perms=\{\permLoop{C}\}_{C\in\cC}$ of allowed permutations by using
the global constraints arising from fusion rules and gluing (Proposition~\ref{prop:gluingconstraint}). In the case of $N+3$ Fibonacci anyons on the sphere with standard pants decomposition~$\cC$, dimensional arguments show that all $\permLoop{C_j}=\id$ are equal to the identity permutation. For Ising anyons, the fusion rules imply that every permutation with even index is equal to the identity permutation,~$\permLoop{C_{2j}}=\id$ (in fact, there is only a single allowed label). 
\item
we determine the phases~$\phaseArg{}{\labeling{\ell}}$ by using
the localization property of Proposition~\ref{prop:internalphases} for internal curves~$C$. For $N+3$ anyons of type $z$ on the sphere, this results in the consistency conditions
\begin{align}\phaseArg{}{\labelSequenceSphere{x}}&=\eta_j (x_1,\ldots,\widehat{x}_j,\ldots,x_N)+
f_j(x_{j-1},x_j,x_{j+1})\qquad\textrm{ where }\nonumber\\
 f_j(x_{j-1},\cdot,x_{j+1})&\in \Fset{\fourdot{z}{x_{j-1}}{x_{j+1}}{z}}{\four{z}{\z{x}_{j-1}}{\z{x}_{j+1}}{z}{\permLoop{C_j}(\cdot) }}\qquad\textrm{ for } j=1, \ldots, N\ .\label{eq:systemfjphases}
\end{align}
\end{enumerate}
In Section~\ref{sec:isingnpunctured}, we apply this procedure to Ising anyons; in this case, the system of equations~\eqref{eq:systemfjphases} can be solved explicitly.

\section{The Fibonacci and Ising models\label{sec:examplesnonabelian}}
In what follows, we apply the results of the previous sections to the Fibonacci and Ising models. These can be considered as representative examples of non-abelian anyon models.  We illustrate the use of the developed constraints in different scenarios:

In Section~\ref{sec:fibtorus}, we show that there is no non-trivial gate for the Fibonacci model on the torus. This derivation uses the characterization of protected gates in terms of matrices intertwining with the mapping class group representation obtained in Section~\ref{sec:basischangesmcg}.
Note that we cannot apply Corollary~\ref{cor:density} because the representation of the mapping class group on the torus is finite for the Fibonacci model. 

In Section~\ref{sec:fibnpunctured}, we then consider  a system with $M$~Fibonacci anyons (where $M\geq 4$ so that the space $\cH_{S^2(\tau^{M})}$ has non-zero dimension). We establish the following statement:
\begin{theorem}[Fibonacci anyon model]\label{thm:fib}
For $M\geq 4$, any  locality-preserving automorphism $U$ on the $M$-punctured sphere~$S^2(\tau^{M})$ is trivial (i.e., proportional to the identity).
\end{theorem}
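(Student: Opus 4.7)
The plan is to invoke Corollary~\ref{cor:density}, which says that if the projective representation of $\mcg_{\Sigma}$ on $\cH_\Sigma$ is dense in $\mathsf{PU}(\cH_\Sigma)$, then every protected gate acts trivially. For $M\geq 4$ Fibonacci anyons on the sphere, the mapping class group $\mcg_{S^2(\tau^M)}$ contains the braid group $B_M$ (via the generators $\sigma_1,\dots,\sigma_{M-1}$ of Section~\ref{sec:mcg}), and the classical universality theorem for Fibonacci braiding asserts exactly that the image of this representation is dense in $\mathsf{PU}(\cH_{S^2(\tau^M)})$. Given this density, the theorem follows in one line from Corollary~\ref{cor:density}.

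Concretely, I would first verify density from the explicit formulas of Section~\ref{sec:mcg}. Using the known Fibonacci data---in particular the $R$-symbols $R^{\tau\tau}_1=e^{-4\pi i/5}$ and $R^{\tau\tau}_\tau=e^{3\pi i/5}$ together with the Fibonacci $F$-matrix---I would check that the braid group image on $\cH_{S^2(\tau^M)}$ is irreducible, contains elements of infinite order, and is not contained in any proper closed subgroup of $\mathsf{PU}(\cH_{S^2(\tau^M)})$. Once density is in hand, Corollary~\ref{cor:density} immediately yields that any locality-preserving automorphism is proportional to the identity.

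The main obstacle is the density step itself, which is the nontrivial classical input; it is especially delicate for $M=4$, where $\dim\cH_{S^2(\tau^4)}=2$ and density in $\mathsf{PU}(2)$ must be checked directly from the four-strand braid representation. A more self-contained alternative, using only the structural results derived in Sections~\ref{sec:nonabelian}--\ref{sec:glblmcg}, would first apply the dimensional form of the gluing constraint (Eq.~\eqref{eq:dimisomorphismcutspaces}) together with the Fibonacci identities $\dim\cH_{S^2(\tau^n,1)}=F_{n-1}$ and $\dim\cH_{S^2(\tau^n,\tau)}=F_n$ to force $\pi^{C_j}=\mathsf{id}$ on every internal loop of the standard DAP-decomposition (the two dimensions differ whenever at least one side of the cut has three or more anyons, so this handles $M\geq 5$ cleanly), and then invoke Lemma~\ref{lem:trivialphasegatecharact}(ii) together with the Fibonacci connectedness statement (Lemma~\ref{lem:connectednessfib}, asserting that every pair of fusion-consistent labelings lies in a single $\sim$-class) to deduce that the remaining diagonal phases $\varphi(\ell)$ are all equal, whence $U$ is a global phase. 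Either way, the case $M=4$ requires some direct input beyond the naive dimensional bookkeeping, most naturally the density/universality statement.
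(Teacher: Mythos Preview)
Your proposal is correct and mirrors the paper's own treatment: the paper likewise states that the theorem follows immediately from Corollary~\ref{cor:density} together with the known density of Fibonacci braiding, and then offers the same self-contained alternative (the dimensional gluing constraint~\eqref{eq:dimisomorphismcutspaces} forces $\permLoop{C_j}=\id$ for $M>4$, after which Lemma~\ref{lem:trivialphasegatecharact}\eqref{it:simtrivialproperty} combined with Lemma~\ref{lem:connectednessfib} finishes the argument), with $M=4$ handled by the density input. The one point you left implicit in the alternative route is that Lemma~\ref{lem:trivialphasegatecharact}\eqref{it:simtrivialproperty} requires $U$ to be diagonal not just in the standard basis but in each basis $\cB_{\sigma_j(\cC)}$; the paper closes this by observing that the dimensional argument depends only on how a curve separates the punctures and therefore applies to \emph{any} DAP-decomposition, in particular to each $\sigma_j(\cC)$.
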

This proof is a direct consequence of Corollary~\ref{cor:density} and the known  density of braiding~\cite{freedmanlarsenwang02,freedmankitaevlarsenwang03}. We additionally provide an independent proof not relying on this result.

Finally, we consider systems with $M$ Ising anyons; the associated Hilbert space~$\cH_{S^2(\sigma^{M})}$ has non-zero dimension if and only if $M\geq 4$ is even. 
In this case, there is a natural isomorphism~$\cH_{S^2(\sigma^{M})}\cong(\mathbb{C}^2)^{\otimes M/2-1}$ (described below, see Eq.~\eqref{eq:isingisomorphism}). 
Defining the $(M/2-1)$-qubit Pauli group on the latter space in the usual way, we get the following statement:
\begin{theorem}[Ising anyon model]\label{thm:ising}
 Any  locality-preserving automorphism $U$ of $S^2(\sigma^{M})$, where $M\geq 4$ is even,  belongs to the $(M/2-1)$-qubit Pauli group.  
\end{theorem}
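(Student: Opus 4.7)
My plan is to combine the local $F$-move analysis of Section~\ref{sec:globalconstraintsfmove} with the structural Proposition~\ref{prop:internalphases}. I first fix the standard pants decomposition $\cC=\{C_1,\ldots,C_N\}$ of $S^2(\sigma^M)$ with $N=M-3$ (odd, since $M$ is even). The Ising fusion rules $\sigma\times\sigma=1+\psi$ and $\sigma\times 1=\sigma\times\psi=\sigma$ force every fusion-consistent labeling to satisfy $x_j\in\{1,\psi\}$ for odd $j$ and $x_j=\sigma$ for even $j$; this is the source of the identification $\cH_{S^2(\sigma^M)}\cong(\mathbb{C}^2)^{\otimes M/2-1}$, with one qubit per odd-indexed edge.

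Next I solve the local consistency equation~\eqref{eq:auto4punctureconsistencyeq} at each internal edge. For odd $j$, the four-punctured neighborhood of $C_j$ has boundary labels $(\sigma,\sigma,\sigma,\sigma)$ and the basis-change matrix $\bV=F^{\sigma\sigma\sigma}_{\sigma\sigma\sigma}$ is the $2\times 2$ Hadamard $H$. I enumerate the four cases in which each of $\bPi,\bPi'$ is either the identity or the swap of $\{1,\psi\}$ and solve $H\bPi\bD=\bPi'\bD'H$ by matching entries; the solutions correspond, up to a global phase, precisely to the four single-qubit Paulis $\{I,X,Y,Z\}$ acting on the qubit at $C_j$. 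For even $j$, the four-punctured neighborhood has boundary $(x_{j-1},\sigma,\sigma,x_{j+1})$ with $x_{j-1},x_{j+1}\in\{1,\psi\}$; this fusion space is one-dimensional, so the $F$-move is a trivial $1\times 1$ matrix and imposes no constraint. A short check confirms that any combination of the allowed permutations $\permLoop{C_j}$ preserves fusion-consistency at every pair of pants, so Propositions~\ref{prop:globalfusionconstraint} and~\ref{prop:gluingconstraint} yield no further obstruction.

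I then assemble the local information using Proposition~\ref{prop:internalphases}. Applied to each odd $C_j$, it gives $\phaseArg{}{\labeling{\ell}}=\eta_j(\labeling{\ell}\restriction\cC\backslash\{C_j\})+g_j(x_j)$, with $g_j(\psi)-g_j(1)\in\{0,\pi\}$ from the local analysis. A short inductive argument (for each pair of odd $j,j'$, subtracting the two decompositions and using that each side is a function of disjoint variables forces the cross-dependence to be a constant) collapses these into the factorized form $\phaseArg{}{\labeling{\ell}}=\eta_0+\sum_{j\text{ odd}}g_j(x_j)$. Combined with the factorized label permutation $\perms(\labeling{\ell})(C_j)=\permLoop{C_j}(x_j)$ from Proposition~\ref{prop:globalfusionconstraint}, this yields $U=e^{i\eta_0}\bigotimes_{j\text{ odd}}X^{s_j}Z^{t_j}$ for some bits $s_j,t_j\in\{0,1\}$, manifestly an element of the $(M/2-1)$-qubit Pauli group.

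The main technical obstacle is the $F$-move calculation at odd edges: one must verify carefully that the four $(\bPi,\bPi')$ cases yield exactly the four Paulis and no extra monomial solutions. The phase-factorization step is routine but relies crucially on $g_j$ being a function of $x_j$ alone, which holds because every neighbor of an odd $C_j$ carries the fixed label $\sigma$ in the Ising model.
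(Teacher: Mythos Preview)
Your proposal is correct and follows essentially the same route as the paper's proof in Section~\ref{sec:isingnpunctured}: the paper also solves the $F$-move consistency equation~\eqref{eq:auto4punctureconsistencyeq} on the $4$-punctured sphere $S^2(\sigma^4)$ with the Hadamard $F$-matrix to obtain exactly the four single-qubit Paulis, then uses the fixed even-index labels $x_{2m}=\sigma$ together with Proposition~\ref{prop:internalphases} to localize the phases, and finally proves the factorization $\phaseArg{}{\labelSequenceSphere{x}}=\eta+\sum_m f_{2m+1}(x_{2m+1})$ by the same ``the residual $\eta$ depends only on even-index labels, which are all $\sigma$'' argument you sketch. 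Your treatment of even-index edges via one-dimensionality of the local fusion space is a harmless variant of the paper's observation that $\permLoop{C_{2j}}$ must fix $\sigma$.
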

Our derivation of this result relies on the use of $F$-moves, as discussed in Section~\ref{sec:globalconstraintsfmove}.

\subsection{The  Fibonacci model}
For the Fibonacci model,  we have $\labels=\{1,\tau\}$ and the only non-trivial fusion rule is $\tau\times\tau=1+\tau$ with $d_\tau=\phi=(1+\sqrt{5})/2$.

\subsubsection{On the torus\label{sec:fibtorus}}
We first consider
the torus~$\Sigma$ and show that every protected gate is trivial. We do so by computing some of the sets~$\Delta_{\vartheta}$, $\vartheta\in\mcg_{\Sigma}$ defined in Section~\ref{sec:basischangesmcg}. Recall (see Section~\ref{sec:mcg}) that the mapping class group of the torus is generated by two elements~$s$, $t$. 

The matrix $\bV(s)=S$ representing $s$ is the usual $S$-matrix
(expressed with respect to the ordering $(1,\tau)$)
\begin{equation*}
    S
    =\frac{1}
          {\sqrt{\phi+
                 2}}
     \left(\begin{matrix}
                 1  & \phi \\
               \phi &  -1
           \end{matrix}\right).
\end{equation*}
In particular, the consistency condition~\eqref{eq:compatibilitycond} becomes
\begin{equation*}
    S
    \bPi
    \bD
    S^{-1}
    \in
     \phiperm,
\end{equation*}
where $\bD=\diag(\phase{}{1},\phase{}{\tau})$ and $\phase{}{a}\in\unitaryGroup{1}$.
We consider the two cases:
\begin{enumerate}
\item
For $\bPi=I$, we get (using $\phi^2=\phi+1$)
\begin{equation*}
    S
    \bPi
    \bD
    S^{-1}
    =\frac{1}
          {\phi+
           2}
     \left(\begin{matrix}
               \phase{}{1}+\phase{}{\tau}(\phi+1) & (\phase{}{1}-\phase{}{\tau})\phi \\
               (\phase{}{1}-\phase{}{\tau})\phi   & \phase{}{1}(\phi+1)+\phase{}{\tau}
           \end{matrix}\right).
\end{equation*}
For this to be a unitary monomial matrix, all entries must have modulus~$0$ or~$1$. 
Since $\phi/(\phi+2)<1/2$, the off-diagonal elements always have modulus less than~$1$, and hence must be zero.
That is, we must have $\phase{}{1}=\phase{}{\tau}\eqqcolon \phaseGlobal$, and it follows that the right hand side is in~$\phiperm$.
This implies that $\bPi\bD=\phaseGlobal I$.

\item
For ${\bf \Pi}=\left(\begin{matrix}
0 & 1\\
1 & 0
\end{matrix}\right)$, we get
\begin{equation*}
    S
    \bPi
    \bD
    S^{-1}
    =\frac{1}
          {\phi+
           2}
     \begin{pmatrix}
         (\phase{}{1}+\phase{}{\tau})\phi   & \phase{}{1}(\phi+1)-\phase{}{\tau} \\
         \phase{}{\tau}(\phi+1)-\phase{}{1} & -(\phase{}{1}+\phase{}{\tau})\phi
     \end{pmatrix}.
\end{equation*}
To have the absolute value of the first entry equal to $0$ (see above), we must have
$\phase{}{\tau}=-\phase{}{1}$ and we get
\begin{equation*}
    S
    \bPi
    \bD
    S^{-1}
    =\phase{}{1}
     \begin{pmatrix}
          0 & 1 \\
         -1 & 0
     \end{pmatrix},
\end{equation*}
which is a unitary monomial matrix.
That is, we have $\bPi\bD=\phaseGlobal\left(\begin{matrix}
0 &1 \\
-1 & 0
\end{matrix}\right)$.
\end{enumerate}
Summarizing, we conclude that
\begin{equation}
    \label{eq:deltasfib}
    \phiperm_s
    =\setcomprehension[\bigg]{\phaseGlobal
                              I,
                              \phaseGlobal
                              \begin{pmatrix}
                                   0 & 1 \\
                                  -1 & 0
                              \end{pmatrix}}
                             {\phaseGlobal
                              \in
                               \unitaryGroup{1}}.
\end{equation}

\bigskip

The element $t\in\mcg_\Sigma$ defined by twisting along one of the homologically non-trivial cycles is represented by the matrix~$\bV(t)=T=\diag(1,e^{4\pi i/5})$. We consider the  consistency condition~\eqref{eq:compatibilitycond} for the composition~$st\in\mcg_\Sigma$:
\begin{equation*}
    (S
     T)
    \bPi
    \bD
    (S
     T)^{-1}
    \in
     \phiperm,
\end{equation*}
where $\bD=\diag(\phase{}{1},\phase{}{\tau})$ and $\phase{}{a}\in\unitaryGroup{1}$.
Again, we consider the following two cases:
\begin{enumerate}
\item
For $\bPi=I$, we get
\begin{equation*}
    (S
     T)
    \bPi
    \bD
    (S
     T)^{-1}
    =\frac{1}
          {\phi+
           2}
     \begin{pmatrix}
         \phase{}{1}+\phase{}{\tau}(\phi+1) & (\phase{}{1}-\phase{}{\tau})\phi\\
         (\phase{}{1}-\phase{}{\tau})\phi   & \phase{}{1}(\phi+1)+\phase{}{\tau}
     \end{pmatrix}.
\end{equation*}
This is identical to the first case above, thus $\bPi\bD=\phaseGlobal I$.
\item
For $\bPi=\left(\begin{matrix}0 & 1\\
1 & 0\end{matrix}\right)$, we get
\begin{equation*}
    (S
     T)
    \bPi
    \bD
    (S
     T)^{-1}
    =\frac{\zeta}
          {\phi+
           2}
     \begin{pmatrix}
         (\zeta^3\phase{}{1}-\phase{}{\tau})\phi    & \zeta^3\phase{}{1}(\phi+1)+\phase{}{\tau} \\
         -\zeta^3\phase{}{1}-\phase{}{\tau}(\phi+1) & -(\zeta^3\phase{}{1}-\phase{}{\tau})\phi
     \end{pmatrix},
\end{equation*}
where $\zeta=\mathrm{e}^{\mathrm{i}\pi/5}$.
Since $\phi/(\phi+2)<1/2$, the diagonal elements must vanish, that is, we have $\phase{}{\tau}=\zeta^3\phase{}{1}$.
This indeed then gives an element of~$\phiperm$, and
$\bPi\bD=
\phaseGlobal
\begin{pmatrix}
0 & \mathrm{e}^{3\pi\mathrm{i}/5}\\
1 & 0
\end{pmatrix}$.
\end{enumerate}
In summary, we have shown that
\begin{equation}
    \phiperm_{s
              t}
    =\setcomprehension[\bigg]{\phaseGlobal
                              I,
                              \phaseGlobal
                              \begin{pmatrix}
                                  0 & \mathrm{e}^{3\pi\mathrm{i}/5} \\
                                  1 &               0
                              \end{pmatrix}}
                             {\phaseGlobal
                              \in
                               \unitaryGroup{1}}.
    \label{eq:deltastfib}
\end{equation}
Combining~\eqref{eq:deltasfib} and~\eqref{eq:deltastfib}, we conclude that
\begin{equation*}
    \phiperm_s\cap
    \phiperm_{st}
    =\setcomprehension{\phaseGlobal
                       I}
                      {\phaseGlobal
                       \in
                        \unitaryGroup{1}},
\end{equation*}
and this means that
$\phiperm_{\mcg_\Sigma}\subset\phiperm_s\cap\phiperm_{st}=\setcomprehension{\phaseGlobal I}{\phaseGlobal\in\unitaryGroup{1}}$. According to Theorem~\ref{thm:mainmappinglcassgroup}, this implies that there is no non-trivial protected gate on the torus.

 Note that this  conclusion is consistent with the form of a Dehn twist, given by the logical unitary~$U=\diag(1,e^{4\pi i/5})$
 (with the `topological' phases or twists on the diagonal): Dehn twists do \emph{not} preserve locality!
 For example, for a Dehn twist along~$C_1$, an operator supported on~$C_2$ may end up with support in the neighborhood of the union $C_1\cup C_2$ under conjugation by the unitary realizing the Dehn twist.

\subsubsection{On the $M$-punctured sphere\label{sec:fibnpunctured}}
We now provide a proof of Theorem~\ref{thm:fib}. As already mentioned,
braiding of $M\geq 4$~Fibonacci anyons is known to be universal~\cite{freedmanlarsenwang02,freedmankitaevlarsenwang03}, hence we could invoke Corollary~\ref{cor:density}. Instead, we give a different proof by exploiting the equivalence relation introduced in Section~\ref{sec:diagonalgates} and analyzing the dimension of the associated spaces (i.e., using the constraints arising from the gluing postulate, see Section~\ref{sec:globalconstraintsgluingaxiom}).

Consider the $M$-punctured sphere $\Sigma=S^2(\tau^{M})$ corresponding to $M$ Fibonacci anyons.  We will use as our `standard' basis the one arising from the standard DAP decomposition~$\cC$ of the $M$-punctured sphere introduced in Section~\ref{sec:gluing} (see Fig.~\ref{fig:standardpants}). We then have the following statement:

\begin{lemma}\label{lem:connectednessfib}
There is only one equivalence class under the relation~$\sim$.
Furthermore, the set of braids~$\{\sigma_{j}\}_{j=1}^{M-1}$ generates the relation~$\sim$. 
\end{lemma}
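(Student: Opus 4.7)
The plan is to identify fusion-consistent labelings with certain binary strings, show that the braid generators realize single-site label flips whenever the fusion rules permit two choices, and then verify that the resulting ``flip graph'' on labelings is connected.

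Concretely, for the Fibonacci model with $\tau\times\tau=1+\tau$ and $N^c_{1\tau}=\delta_{c\tau}$, the fusion-consistent labelings of the standard DAP-decomposition of $S^2(\tau^M)$ (with $M=N+3$) are exactly the binary sequences $\labelSequenceSphere{x}=(x_1,\ldots,x_N)\in\{1,\tau\}^N$ containing no two consecutive $1$'s. Indeed, the conditions~\eqref{eq:fusiontreecond} force $x_j\in\{1,\tau\}$ and forbid $x_j=x_{j+1}=1$. I encode a labeling $\labeling{\ell}$ by the set $S(\labeling{\ell})\subset\{1,\ldots,N\}$ of positions carrying the label $1$; admissibility amounts to $S(\labeling{\ell})$ having no two consecutive elements.

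The core step is a single-flip lemma: if $\labeling{\ell},\labeling{\ell}'$ are admissible labelings differing only at position $k$ with $x_{k-1}=x_{k+1}=\tau$ in their common restriction (with the implicit boundary label $\tau$ playing the role of the missing neighbor at $k=1$ or $k=N$), then $\labeling{\ell}\connected{\sigma_{k'}}\labeling{\ell}'$ for the braid generator $\sigma_{k'}$ that acts on $x_k$. This reduces to checking that the $2\times 2$ braid matrix $B(\tau,\tau)=\tilde F^{-1}\tilde R\tilde F$ has both non-zero diagonal and non-zero off-diagonal entries in the Fibonacci model: with $\tilde R=\diag(R^{\tau\tau}_1,R^{\tau\tau}_\tau)$ carrying two distinct unit-modulus eigenvalues and $\tilde F$ the standard non-diagonal Fibonacci $F$-matrix, a direct computation gives off-diagonal entries proportional to $R^{\tau\tau}_1-R^{\tau\tau}_\tau\neq 0$ and diagonal entries of the form $\alpha R^{\tau\tau}_1+\beta R^{\tau\tau}_\tau$ with positive $\alpha\neq\beta$, which cannot vanish since $|R^{\tau\tau}_1/R^{\tau\tau}_\tau|=1\neq\beta/\alpha$. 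Choosing the intermediate labeling $\labeling{m}=\labeling{\ell}$ then yields $\langle \labeling{m}|\bV(\sigma_{k'})|\labeling{\ell}\rangle\neq 0$ (diagonal) and $\langle \labeling{m}|\bV(\sigma_{k'})|\labeling{\ell}'\rangle\neq 0$ (off-diagonal), which is the required connection.

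To finish, I reduce every admissible labeling to the all-$\tau$ labeling $\labeling{\ell}^\star$ (with $S(\labeling{\ell}^\star)=\varnothing$) by a chain of single-flips. Writing $S(\labeling{\ell})=\{k_1<\cdots<k_r\}$, I form a chain $\labeling{\ell}^\star=\labeling{\ell}^{(0)},\labeling{\ell}^{(1)},\ldots,\labeling{\ell}^{(r)}=\labeling{\ell}$, where $\labeling{\ell}^{(j)}$ is obtained from $\labeling{\ell}^{(j-1)}$ by switching position $k_j$ from $\tau$ to $1$. Non-adjacency of $S(\labeling{\ell})$ ensures that at every step the neighbors of $k_j$ still equal $\tau$ (positions $k_j\pm 1$ lie outside $S(\labeling{\ell})$ and are never modified), so the single-flip lemma applies to each step. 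Transitivity of $\sim$ then yields $\labeling{\ell}\sim\labeling{\ell}^\star$, so all admissible labelings share a single equivalence class, and this class is generated by the braid generators $\{\sigma_j\}_{j=1}^{M-1}$.

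The main obstacle is the bookkeeping: tracking the exact index shift between the braid generator $\sigma_{k'}$ and the position $k$ of the label it flips, and handling the boundary positions $k=1$ and $k=N$. Since the boundary braids $\sigma_1$ and $\sigma_{N+2}$ act by pure phases (by the formulas in Section~\ref{sec:mcg}), only the ``interior'' generators $\sigma_2,\ldots,\sigma_{N+1}$ can realize flips; with the convention of Section~\ref{sec:mcg} that the implicit boundary $\tau$-labels play the role of missing internal neighbors at $k=1,N$, the interior braids cover all positions $x_1,\ldots,x_N$ and the single-flip lemma applies uniformly, so the reduction argument goes through.
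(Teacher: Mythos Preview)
Your proof is correct and follows essentially the same strategy as the paper's: both exploit that the braid matrix $B(\tau,\tau)$ has all nonzero entries to show that any single-site flip (changing one $x_j$ between $1$ and $\tau$, with neighbors necessarily $\tau$) connects two labelings via some $\sigma_{k'}$, and then reduce every labeling to the all-$\tau$ sequence by a chain of such flips. Your version is simply more explicit about the combinatorial characterization of admissible labelings, the index offset between the braid generator and the label position, and the verification that $B(\tau,\tau)$ has no zero entries.
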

\begin{proof}
Let $\labelSequenceSphere{x}$ and $\labelSequenceSphere{x}'$ be two fusion-consistent labelings that are related by interchanging $\tau=x_j$ and $1=x_j'$ (or vice versa) in the $j$-th entry (but are otherwise the same). Fusion-consistency implies that
$x_{j-1}=x'_{j-1}=x_{j+1}=x'_{j+1}=\tau$. In particular, the relevant braid matrix describing the action of $V(\sigma_j)$ is $B(\tau,\tau)$ which has non-zero entries everywhere. We conclude that
\begin{equation*}
    \langle
     \labelSequenceSphere{x}'
    \vert
     V(\sigma_j)
    \vert
     \labelSequenceSphere{x}\rangle
    \neq
     0
    \qquad
    \text{and}
    \qquad
    \langle
     \labelSequenceSphere{x}'
    \vert
     V(\sigma_j)
    \vert
     \labelSequenceSphere{x}'\rangle\neq 0\ .
\end{equation*}
This implies that $\labelSequenceSphere{x}\connected{\sigma_j}\labelSequenceSphere{x}'$.
Since any fusion-consistent labeling can be obtained from the sequence~$\tau^N=(\tau,\ldots,\tau)$ by such interchanges, 
we conclude that any two fusion-consistent labelings are equivalent. That is, there is only one equivalence class under~$\sim$.
\end{proof}

We will now argue that the conditions of Lemma~\ref{lem:trivialphasegatecharact}~\eqref{it:simtrivialproperty} apply in this situation: that is, any protected gate~$U$ acts diagonally in any of the bases~$\cB_{\sigma_j(\cC)}$ obtained from the standard DAP-decomposition by applying a braid group generator~$\sigma_j$. In fact, we will argue more generally that~$U$ acts diagonally in any basis defined by a DAP-decomposition.

To do so, consider first the standard DAP-decomposition and the  spaces~$\cH_{\Sigma'_j(a,a)}$ for $j\in\{1, \dots, M-3\}$ and $a\in\{1,\tau\}$ (cf.~\eqref{eq:cutspheres}), where $\Sigma'_j$ is obtained from~$\Sigma$ by cutting along the curve~$C_j$ which leaves a $j+2$-punctured and a $(M-j)$-punctured sphere, respectively.  Note that $\tau$ is its own antiparticle ($\dual{\tau}=\tau$), and hence it suffices to consider~$\Sigma'_j(\tau,\tau)$ and $\Sigma'_j(1,1)$. 
 Our goal is to identify pairs $(a,\z{a})$ such that  $\cH_{\Sigma'_j(a,a)}\cong \cH_{\Sigma'_j(\z{a},\z{a})}$ are isomorphic, this being a necessary condition for a permutation satisfying~$\permLoop{C_j}(a)=\z{a}$ (see Proposition~\eqref{prop:gluingconstraint} and Eq.~\eqref{eq:dimisomorphismcutspaces}).  
To compute $\dim \cH_{\Sigma'_j(a,a)}$ for $a\in\{1,\tau\}$, we make use of the general fact that $\dim\cH_{S^2(\tau^M)}=\Phi_{M-1}$ where $\Phi_M$ denotes the $M$-th Fibonacci number, 
starting with $\Phi_0=0$ and $\Phi_1=1$ and satisfying the recurrence relation $\Phi_{M+1}=\Phi_M+\Phi_{M-1}$. 
From~\eqref{eq:cutspheres}, we obtain 
$\dim \cH_{\Sigma'_j(1,1)}=\Phi_{j}\Phi_{M-j-2}$ and $\dim \cH_{\Sigma'_j(\tau,\tau)}=\Phi_{j+1}\Phi_{M-j-1}$,  excluding the case
$j=1=M-3$ which satisfies $\dim \cH_{\Sigma'_1(1,1)}=\Phi_1\Phi_{M-3} = \dim \cH_{\Sigma'_{M-3}(1,1)}$ and $\dim \cH_{\Sigma'_1(\tau,\tau)}=\Phi_2\Phi_{M-2} = \dim \cH_{\Sigma'_{M-3}(\tau,\tau)}$, it follows from the monotonicity and positivity of $\Phi$ that
\begin{align}
\dim \cH_{\Sigma'_j(1,1)} < \dim \cH_{\Sigma'_j(\tau,\tau)}\qquad\textrm{ for }M>4, \textrm{ and all } j \in \{1,...,M-3 \}.
\label{eq:dimensiondifferencebound}
\end{align}
 Hence, according to the consistency condition~\eqref{eq:dimisomorphismcutspaces}, for $M>4$,  we only get an isomorphism $\cH_{\Sigma'(a,\dual{a})}\cong \cH_{\Sigma'(\permLoop{C}(a),\dual{\permLoop{C}(a)})}$ with $\permLoop{C}=\id$ being trivial for any internal loop $C$ in a standard DAP decomposition.  This shows that a protected gate acts diagonally in the standard basis.

Observe that this argument only involved the dimensions of the fusion spaces obtained by cutting along a curve~$C_j$ in  the pants decomposition.  Since it is generally true that cutting along a curve will decompose the $M$-punctured sphere into an $j+2$-punctured and a $(M-j)$-punctured sphere, respectively (for some $j$), the argument extends to arbitrary DAP-decompositions. In particular, $U$ is diagonal with respect to each of the bases~$\cB_{\sigma_j(\cC)}$, as claimed.

We have shown that
the conditions of Lemma~\ref{lem:trivialphasegatecharact} apply.
With Lemma~\ref{lem:connectednessfib}, Theorem~\ref{thm:fib} is immediate.

\subsection{The Ising model\label{sec:isingnpunctured}}
\newcommand{\SUM}[2]{\displaystyle\sum_{#1}^{#2}}

The Ising anyon model has label set $\mathbb{A} =\{1, \psi, \sigma\}$ and non-trivial fusion rules
\[
 \psi\times\psi=1,   \ \ \psi\times\sigma=\sigma,  \ \ \sigma\times\sigma=1+\psi.
\]

\subsubsection{On the 4-punctured sphere\label{sec:isingfourpuncturedcase}}

Consider the possible spaces $\cH_{S^2(\sigma,j,k,\sigma)}$ for $\{j,k\}\in\labels$, and observe that fusion consistency implies
\begin{displaymath}
\dim \cH_{S^2(\sigma,j,k,\sigma)}=
\left\{
\begin{array}{ll}
0 \ \ \text{if} \ \ j\neq k=\sigma \ \text{or}\ k\neq j=\sigma \\
1 \ \ \text{if} \ \ j,k\in\{1,\psi\}, \\
2 \ \ \text{if} \ \  j=k=\sigma.
\end{array}
\right.
\end{displaymath}
Therefore, the only nontrivial case to consider is $\cH_{S^2(\sigma,\sigma,\sigma,\sigma)}=\cH_{S^2(\sigma^4)}$ with an ordered basis $\set{\ket{1},\ket{\psi}}$. A locality-preserving automorphism of $\cH_{S^2(\sigma^4)}$ will act as
\begin{align*}
U\ket{a}&=e^{i\eta}e^{if(a)}\ket{\permLoop{C}(a)}\qquad\textrm{ where }f\in \Fset{\fourdot{\sigma}{\sigma}{\sigma}{\sigma}}{\four{\sigma}{\sigma}{\sigma}{\sigma}{\permLoop{C}(\cdot)}}\ 
\end{align*}
A valid permutation~$\permLoop{C}$ of $\{1,\psi\}$ that defines the action of $U$, and the set of phases can be determined as follows. Let $\mathcal{B}_\cC=\set{\ket{1}_\cC,\ket{\psi}_\cC}$ and $\mathcal{B}_{\cC'}=\set{\ket{1}_{\cC'},\ket{\psi}_{\cC'}}$ be corresponding ordered bases of $\cH_{S^2(\sigma^4)}$ for the two DAP-decomposition $\mathcal{C}$ and $\mathcal{C'}$, respectively. The $F$-matrix relating these two bases is given in the ordered basis~$\mathcal{B}_\cC$ as
\begin{equation*}
F=\frac{1}{\sqrt{2}}\begin{pmatrix} 1 & 1 \\ 1 & -1 \end{pmatrix}.
\end{equation*}

Now consider some locality-preserving automorphism~$U$ expressed in the bases~$\mathcal{B}_\cC$ and~$\mathcal{B}_{\cC'}$ as $\bU=\bPi\bD$ and $\bU'=\bPi'\bD'$ respectively,
for some $2\times 2$ permutation matrices $\mathbf{\Pi}, \mathbf{\Pi'}$ and diagonal matrices $\bD=\diag(\phase{}{1},\phase{}{\psi})$ and $\bD'=\diag(\phase{'}{1},\phase{'}{\psi})$ with phases $\phase{}{a},\phase{'}{a}\in\unitaryGroup{1}$.
Then the consistency relation takes the form $\mathbf{U'}=F\mathbf{U}F^{-1}$. Next, we find all consistent solutions for a given permutation $\mathbf{\Pi}$.

\begin{enumerate}
    \item For $\bPi=I$, we get
        \begin{equation}
            F
            \bPi
            \bD
    F^{-1}
    =\frac{1}
          {2}
     \begin{pmatrix}
         \phase{}{1}+\phase{}{\psi} & \phase{}{1}-\phase{}{\psi} \\
         \phase{}{1}-\phase{}{\psi} & \phase{}{1}+\phase{}{\psi}
     \end{pmatrix}
    =\bPi'
     \bD'.
    \label{eq:ising4punctureconstraint1}
\end{equation}

Suppose that $\bPi'=I$. Then the consistency relation~\eqref{eq:ising4punctureconstraint1} becomes 
\begin{equation*}
    \frac{1}
         {2}
    \begin{pmatrix}
        \phase{}{1}+\phase{}{\psi} & \phase{}{1}-\phase{}{\psi} \\
        \phase{}{1}-\phase{}{\psi} & \phase{}{1}+\phase{}{\psi}
    \end{pmatrix}
    =\begin{pmatrix}
         \phase{'}{1} &         0       \\
               0      & \phase{'}{\psi}
     \end{pmatrix}, 
\end{equation*}
which implies $\phase{}{1}=\phase{}{\psi}=\phase{'}{1}=\phase{'}{\psi}\eqqcolon\mathrm{e}^{\mathrm{i}\eta}$. Therefore $U$ expressed in the basis~$\mathcal{B}_\cC$ is trivial up to a global phase:
\begin{equation*}
    \bU
    =\mathrm{e}^{\mathrm{i}
                 \eta}
     I. 
\end{equation*}

Suppose instead that $\mathbf{\Pi'}=\begin{pmatrix} 0& 1 \\ 1 & 0 \end{pmatrix}$. The consistency relation \eqref{eq:ising4punctureconstraint1} then becomes 
\begin{equation*}
    \frac{1}
         {2}
    \begin{pmatrix}
        \phase{}{1}+\phase{}{\psi} & \phase{}{1}-\phase{}{\psi} \\
        \phase{}{1}-\phase{}{\psi} & \phase{}{1}+\phase{}{\psi}
    \end{pmatrix}
    =\begin{pmatrix}
               0      & \phase{'}{\psi} \\
         \phase{'}{1} &        0
     \end{pmatrix}, 
\end{equation*}
which implies $\phase{}{1}=-\phase{}{\psi}$ and $\phase{'}{1}=\phase{'}{\psi}=\phase{}{1}$. Setting $\mathrm{e}^{\mathrm{i}\eta}\coloneqq\phase{}{1}$, implies that $U$ expressed in the basis~$\mathcal{B}_\cC$ is given by
\begin{equation*}
    \bU
    =\mathrm{e}^{\mathrm{i}
                 \eta}
     \begin{pmatrix}
         1 &  0 \\
         0 & -1
     \end{pmatrix}.
\end{equation*}

These two solutions of the consistency relation, for the case $\bPi=I$, now determine the only two functions of the set
\begin{equation*}
    \Fset{\fourdot{\sigma}{\sigma}{\sigma}{\sigma}}{\four{\sigma}{\sigma}{\sigma}{\sigma}{\id(\cdot)}}
    =\set{(f(1),
           f(\psi))}
    =\set{(0,
           0),
          (0,
           \pi)}.
\end{equation*}

    \item For $\mathbf{\Pi}=\begin{pmatrix} 0&1\\1&0\end{pmatrix}$, corresponding to the transposition $(\psi,1)$, we get
        \begin{equation}
            F
            \bPi
            \bD
            F^{-1}
            =\frac{1}
                  {2}
             \begin{pmatrix}
                 \phase{}{1}+\phase{}{\psi}  & \phase{}{1}-\phase{}{\psi} \\
                 -\phase{}{1}+\phase{}{\psi} & -\phase{}{1}-\phase{}{\psi}
             \end{pmatrix}
            =\bPi'
             \bD'.
            \label{eq:ising4punctureconstraint2}
        \end{equation}
        By taking $\bPi'=I$, this becomes
        \begin{equation*}
            \frac{1}
                 {2}
            \begin{pmatrix}
                \phase{}{1}+\phase{}{\psi}  & \phase{}{1}-\phase{}{\psi} \\
                -\phase{}{1}+\phase{}{\psi} & -\phase{}{1}-\phase{}{\psi}
            \end{pmatrix}
            =\begin{pmatrix}
                 \phase{'}{1} &        0        \\
                       0      & \phase{'}{\psi}
             \end{pmatrix},
        \end{equation*}
        which implies $\phase{}{1}=\phase{}{\psi}=\phase{'}{1}=-\phase{'}{\psi}$.
        Letting $\mathrm{e}^{\mathrm{i}\eta}\coloneqq\phase{}{1}$ allows $U$ to be expressed in the basis $\mathcal{B}_\mathcal{C}$ by
        \begin{equation*}
            \bU
            =\mathrm{e}^{\mathrm{i}
                         \eta}
             \begin{pmatrix}
                 0 & 1 \\
                 1 & 0
             \end{pmatrix}.
        \end{equation*}
        Instead, suppose now that $\mathbf{\Pi'}=\begin{pmatrix} 0& 1 \\ 1 & 0 \end{pmatrix}$. Then the consistency relation~\eqref{eq:ising4punctureconstraint2} is of the form
        \begin{equation*}
            \frac{1}
                 {2}
            \begin{pmatrix}
                \phase{}{1}+\phase{}{\psi}  & \phase{}{1}-\phase{}{\psi} \\
                -\phase{}{1}+\phase{}{\psi} & -\phase{}{1}-\phase{}{\psi}
            \end{pmatrix}
            =\begin{pmatrix}
                       0      & \phase{'}{\psi} \\
                 \phase{'}{1} &        0
             \end{pmatrix},
        \end{equation*}
        implying that $\phase{}{1}=-\phase{}{\psi}=-\phase{'}{1}=\phase{'}{\psi}$. Let $\mathrm{e}^{\mathrm{i}\eta}\coloneqq\phase{}{1}$, then this shows that $U$ expressed in the basis $\mathcal{B}_\mathcal{C}$ is given by
        \begin{align*}
            \bU
            =\mathrm{e}^{\mathrm{i}
                         \eta}
             \begin{pmatrix}
                 0 & -1 \\
                 1 &  0
             \end{pmatrix}. 
        \end{align*}
        Furthermore, these two solutions completely determine the relevant set of functions (which happens to be the same as the previous case for $\bPi=I$):
        \begin{equation*}
            \Fset{\fourdot{\sigma}{\sigma}{\sigma}{\sigma}}{\four{\sigma}{\sigma}{\sigma}{\sigma}{(\psi,1)(\cdot)}}
            =\set{(f(1),
                   f(\psi))}
            =\set{(0,
                   0),
                  (0,
                   \pi)}.
        \end{equation*}
\end{enumerate}
By denoting the single qubit (logical) Pauli group as
\begin{equation*}
    \mathcal{P}
    \coloneqq
     \setcomprehension[\Big]{\phaseGlobal
                             \begin{pmatrix}
                                 1 & 0 \\
                                 0 & 1
                             \end{pmatrix},
                             \phaseGlobal
                             \begin{pmatrix}
                                 1 &  0 \\
                                 0 & -1
                             \end{pmatrix},
                             \phaseGlobal
                             \begin{pmatrix}
                                 0 & 1 \\
                                 1 & 0
                             \end{pmatrix},
                             \phaseGlobal
                             \begin{pmatrix}
                                      0     & -\mathrm{i} \\
                                 \mathrm{i} &      0
                             \end{pmatrix}}
                            {\phaseGlobal
                             \in
                              \unitaryGroup{1}},
\end{equation*}
these results can be summarized as follows: If $U$ is a locality-preserving automorphism of the fusion space $\cH_{S^2(\sigma^4)}$ of the $4$-punctured sphere, then $U$ expressed in the basis $\mathcal{B}_\mathcal{C}$ is in $\mathcal{P}$.

\subsubsection{On the $M$-punctured sphere}
Let $M\geq 4$ and consider the $M=N+3$-punctured sphere $S^2(\sigma^M)$ and corresponding space $\cH_{S^2(\sigma^M)}$. For the `standard' DAP-decomposition $\cC$ of $S^2(\sigma^M)$, a consistent labeling $\mathsf{L}(\cC)$ corresponds to a sequence $\bigl(\labeling{\ell}(C_1),\dots,\labeling{\ell}(C_N)\bigr)\eqqcolon(x_1,\dots,x_N)\eqqcolon\labelSequenceSphere{x}$. It is readily observed that $\dim \cH_{S^2(\sigma^M)}=0$ if $M$ is odd, as there are no consistent labelings in this case.

 Therefore, in what follows we will restrict our discussion to the $M=N+3$-punctured sphere where $N$ is any odd positive integer. In this case, any consistent labeling $\labeling{\ell}\in\mathsf{L}(\cC)$ yields a sequence $(x_1,\dots, x_N)$ where $x_i\in\{1,\psi\}$ for odd $i$ and $x_i=\sigma$ is fixed for even $i$. Actually any such labeling of this form is consistent, giving an isomorphism
 defined in terms of orthonormal basis elements by
\begin{align}
\begin{matrix}
W:\cH_{S^2(\sigma^{N+3})} &\rightarrow & (\mathbb{C}^2)^{(N+1)/2}\\
\ket{\labelSequenceSphere{x}}&\mapsto & \ket{x_1}\otimes\ket{x_3}\otimes\dots\otimes\ket{x_N}\ . 
\end{matrix}\label{eq:isingisomorphism}
\end{align}

\begin{lemma}\label{lem:connectednessising}
Consider the `standard' basis of the $M$-punctured sphere $S^2(\sigma^M)$, where $M\geq 4$ is even. Then there is only one equivalence class under the relation~$\sim$.
Furthermore, the set of braids~$\{\sigma_{j}\}_{j=1}^{M-1}$ generates the relation~$\sim$. 
\end{lemma}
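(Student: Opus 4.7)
My plan is to adapt the proof of Lemma~\ref{lem:connectednessfib} to the Ising setting. Under the Ising fusion rules $\sigma \times \sigma = 1 + \psi$ and $\sigma \times a = \sigma$ for $a \in \{1,\psi\}$, fusion-consistency in the standard DAP decomposition of $S^2(\sigma^{N+3})$ forces $x_j = \sigma$ at every even index $j$, while $x_j \in \{1,\psi\}$ at every odd index (these odd bits are independent, since the fusion constraints at even-index edges are automatically met by $\sigma\times a = \sigma$). Hence any two fusion-consistent labelings $\labelSequenceSphere{x}, \labelSequenceSphere{x}'$ can differ only at odd positions, and it suffices to establish $\labelSequenceSphere{x} \connected{\sigma_k} \labelSequenceSphere{x}'$ whenever they differ in exactly one odd-position entry.

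Fix such a pair with $x_j = 1$, $x'_j = \psi$ at an odd index $j$. The appropriate middle braid generator $\sigma_k$ acts on the $j$-th entry via $B(x_{j-1}, x_{j+1})$. Since $j$ is odd, its neighbors in the fusion chain (either the adjacent internal edges, which carry $\sigma$, or, at the ends of the chain, the boundary anyons themselves, also of type $\sigma$) are all labeled $\sigma$. Thus the relevant matrix is $B(\sigma, \sigma) = \tilde{F}^{-1} \tilde{R} \tilde{F}$, where $\tilde{F}$ is the Hadamard matrix (by the four-punctured sphere analysis of Section~\ref{sec:isingfourpuncturedcase}) and $\tilde{R} = \mathrm{diag}(R^{\sigma\sigma}_1, R^{\sigma\sigma}_\psi)$. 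A direct computation yields
\begin{equation*}
B(\sigma, \sigma) = \tfrac{1}{2}\begin{pmatrix} R^{\sigma\sigma}_1 + R^{\sigma\sigma}_\psi & R^{\sigma\sigma}_1 - R^{\sigma\sigma}_\psi \\ R^{\sigma\sigma}_1 - R^{\sigma\sigma}_\psi & R^{\sigma\sigma}_1 + R^{\sigma\sigma}_\psi \end{pmatrix}.
\end{equation*}

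Taking $\labeling{\ell} = \labelSequenceSphere{x}$, both $\langle\labelSequenceSphere{x}|V(\sigma_k)|\labelSequenceSphere{x}\rangle$ (a diagonal entry of $B(\sigma,\sigma)$) and $\langle\labelSequenceSphere{x}|V(\sigma_k)|\labelSequenceSphere{x}'\rangle$ (an off-diagonal entry) are non-zero, provided $R^{\sigma\sigma}_1 \pm R^{\sigma\sigma}_\psi$ are both non-zero. This follows from the well-known Ising values, whose ratio is a primitive fourth root of unity; both sum and difference are then manifestly non-zero. Hence $\labelSequenceSphere{x} \connected{\sigma_k} \labelSequenceSphere{x}'$, and iterating such single-entry flips connects any two fusion-consistent labelings, so there is a single $\sim$-equivalence class generated by the braid generators. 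The main technical point is verifying that $B(\sigma,\sigma)$ has all entries non-zero; a secondary care is required in handling the endpoint cases $j=1$ and $j=N$, where one ``neighbor'' is a boundary puncture rather than an internal edge, but since boundary anyons are of type $\sigma$ the relevant matrix is still $B(\sigma,\sigma)$ and the same conclusion applies.
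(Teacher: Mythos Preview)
Your proof is correct and follows essentially the same approach as the paper's: both arguments reduce to showing that two fusion-consistent labelings differing at a single odd position~$j$ are $\connected{\sigma_k}$-connected, using that the relevant braid matrix $B(\sigma,\sigma)$ has no zero entries. The paper simply writes down the explicit matrix $B(\sigma,\sigma)=\tfrac{\mathrm{e}^{-3\pi\mathrm{i}/8}}{\sqrt{2}}\begin{pmatrix}\mathrm{i}&1\\1&\mathrm{i}\end{pmatrix}$, whereas you derive it from $\tilde{F}$ and~$\tilde{R}$ and argue nonvanishing of the entries from $R^{\sigma\sigma}_\psi/R^{\sigma\sigma}_1$ being a primitive fourth root of unity; your explicit handling of the endpoint cases $j=1$ and $j=N$ is a small bonus.
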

\begin{proof}
If two fusion-consistent labelings~$\labelSequenceSphere{x}$, $\labelSequenceSphere{x}'$ differ only in location~$2j+1$, they can be connected by $\sigma_{2j+1}$: the relevant braid matrix is
\begin{equation*}
    B(\sigma,
      \sigma)
    =\frac{\mathrm{e}^{-
                       3
                       \pi
                       \mathrm{i}/
                       8}}
          {\sqrt{2}}
     \begin{pmatrix}
         \mathrm{i} &      1     \\
              1     & \mathrm{i}
     \end{pmatrix}.
\end{equation*}
We have $\labelSequenceSphere{x}\connected{\sigma_{2j+1}}\labelSequenceSphere{x}'$, and it follows that there is only one equivalence class under~$\sim$.
\end{proof}

Now consider a locality-preserving automorphism~$U$ of $\cH_{S^2(\sigma^{N+3})}$ and its associated family~$\perms=\{\permLoop{C_j}\}$ of permutations. Because only sequences~$\labelSequenceSphere{x}$ with $x_{2j}=\sigma$ for all $j$ are fusion-consistent, and~$\perms$ is a permutation on~$\mathsf{L}(\cC)$,  we conclude that $\permLoop{C_{2j}}(\sigma)=\sigma$ for all $j$.  In other words, we can essentially ignore labels carrying even indices. For odd indices, only labels $x_{2j+1}\in\{1,\psi\}$ are allowed, which means that $\permLoop{C_{2j+1}}\in \{\id, (\psi,1)\}$ either 
leaves the label invariant or interchanges $\psi$ and $1$. 
In conclusion, $\perms=\{\permLoop{C_j}\}_{j=1}^N$
are of the form $\permLoop{C_j}\in\{\id,(\psi,1)\}$ for odd $j$, and $\permLoop{C_j}=\id$ for even $j$.

For odd $j=2k+1$, we obtain the constraint
\begin{align*}
\phaseArg{}{\labelSequenceSphere{x}}=\eta_{2k+1}(x_1,\dots,\widehat{x_{2k+1}},\dots,x_N)+f_{2k+1}(x_{2k+1})\qquad\textrm{ for }k=0,\ldots, (N-1)/2
\end{align*}
where $f_{2k+1}\in \Fset{\fourdot{\sigma}{\sigma}{\sigma}{\sigma}}{\four{\sigma}{\sigma}{\sigma}{\sigma}{\permLoop{C_{2k+1}}(\cdot)}}$ given that for even labels $\permLoop{C_{2m}}(x_{2m})=x_{2m}=\sigma$.
Let us write
\begin{align}
\phaseArg{}{\labelSequenceSphere{x}}&=\eta(\labelSequenceSphere{x}) +\sum_{m=0}^{(N+1)/2}f_{2m+1}(x_{2m+1})\ \label{eq:isingphases}
\end{align}
and show that $\eta(\labelSequenceSphere{x})=\eta$ is actually independent of the labeling~$\labelSequenceSphere{x}$. Indeed, we can write
\begin{align*}
\eta(\labelSequenceSphere{x})&=\bigl(\phaseArg{}{\labelSequenceSphere{x}}-f_{2k+1}(x_{2k+1})\bigr)-\sum_{m, m\neq k}^{(N+1)/2}f_{2m+1}(x_{2m+1})\\
&=\eta_{2k+1}(x_1,\ldots,\widehat{x_{2k+1}},\ldots,x_N)-\sum_{m, m\neq k}^{(N+1)/2}f_{2m+1}(x_{2m+1})\ 
\end{align*}
Since this holds for all~$k$, we conclude that
$\eta(\labelSequenceSphere{x})=\eta(\widehat{x_1},x_2,\widehat{x_3},x_4,\ldots)$ is a function of the even entries only. 
But the latter are all fixed as $x_{2m}=\sigma$, hence $\eta(\labelSequenceSphere{x})=\eta$ is simply a global phase.

We can now combine these results into a general statement concerning locality-preserving automorphisms of the $M$-punctured sphere $S^2(\sigma^M)$. 
Again, since $\dim\cH_{S^2(\sigma^M)}=0$ for odd $M$ and $\dim\cH_{S^2(\sigma^2)}=1$, we are only concerned with the cases where $M=N+3\geq4$ is even. 
Let $\{\ket{\labelSequenceSphere{x}}\}_{\labelSequenceSphere{x}\in\mathsf{L}(\cC)}$ be a basis of $\cH_{S^2(\sigma^M)}$. Then such an automorphism must act on $\cH_{S^2(\sigma^M)}$ as
\begin{align*}
U\ket{\labelSequenceSphere{x}}=\phaseExp{}{\labelSequenceSphere{x}} \ket{\perms(\labelSequenceSphere{x})}, \qquad \text{where} \qquad \phaseArg{}{\labelSequenceSphere{x}}&=\eta+\sum_{m=0}^{(N+1)/2}f_{2m+1}(x_{2m+1})
\end{align*}
and
\begin{align*}
f_{2k+1}\in \Fset{\fourdot{\sigma}{\sigma}{\sigma}{\sigma}}{\four{\sigma}{\sigma}{\sigma}{\sigma}{\permLoop{C_{2k+1}}(\cdot)}}
=\set[\big]{\bigl(f(1),f(\psi)\bigr)}
=\set{(0,
       0),
      (0,
       \pi)}.
\end{align*}
More explicitly, we have
\begin{equation*}
    U
    \ket{\labelSequenceSphere{x}}
    =e^{i
        \eta}
     \left(\prod_{m
                  =1}^{(N+
                        1)/
                       2}
           e^{i
              f_{2
                 m+
                 1}(x_{2
                       m+
                       1})}\right)
     \ket{\permLoop{C_1}(x_1),
          x_2,
          \permLoop{C_3}(x_3),
          x_4,
          \dots,
          \permLoop{C_N}(x_N)}.
\end{equation*}
In particular, under the isomorphism~\eqref{eq:isingisomorphism}, we get
\begin{align*}
WU W^{-1}&=e^{i\eta} \bigotimes_{m=1}^{(N+1)/2} U_m\qquad\textrm{ where }\qquad U_m \ket{a}=e^{if_{2m-1}(a)}\ket{\permLoop{C_{2m-1}}(a)}\ .
\end{align*}
From Section~\ref{sec:isingfourpuncturedcase}, we know that~$U_m$ is a single-qubit Pauli for each~$m$ up to a global phase. This
concludes the proof of Theorem~\ref{thm:ising}. 

\section{Abelian anyon models\label{sec:abelianmodels}}
Our goal in this section is to characterize  topologically protected gates
in general abelian anyon models. For simplicity, we will restrict our attention to closed $2$-manifolds~$\Sigma$ (see Fig.~\ref{fig:TopologicalLoops}). We have seen in Lemma~\ref{thm:permutingprojectors} that in an arbitrary anyon model, protected gates permute the idempotents along closed loops. In this section we show that for the case of abelian anyon models, the protected gates can only permute the labels of string operators along closed loops (up to phases), which refines Lemma~\ref{thm:permutingprojectors} for abelian models. To formalize this notion, we introduce the generalized Pauli and Clifford groups in Section~\ref{sec:generalizedpauligroup}. The main result of this Section,  can then be stated as follows:

\begin{theorem}\label{thm:homologypreserving}
For an abelian anyon model,
any locality-preserving unitary
automorphism~$U$ acting on~$\cH_{\Sigma}$ has logical action
$\logical{U}\in\mathsf{Clifford}^{\star}_{\Sigma}$. 
\end{theorem}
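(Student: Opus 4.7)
The plan is to exploit the special structure of abelian anyon models, in which the label set $\labels$ forms a finite abelian group $G$ under fusion (so $N_{ab}^c=\delta_{ab,c}$) and the string operators $\logical{F_a(C)}$ are themselves unitary, satisfying $\logical{F_a(C)}\logical{F_b(C)}=\logical{F_{ab}(C)}$. Fixing a basis $\{C_j\}$ of loops generating $H_1(\Sigma)$, the operators $\{\logical{F_a(C_j)}\}_{a,j}$ generate a generalized Pauli group $\pauli_\Sigma$ whose $\mathbb{C}$-span is $\logical{\cA_\Sigma}$; the commutation relations on intersecting loops generalize~\eqref{eq:commutationrelationszn} and are encoded by a bicharacter on $G$. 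The goal is to show that $\logical{U}$ normalizes $\pauli_\Sigma$ up to phases (i.e.\ is Clifford) and moreover respects the additional structure that defines $\mathsf{Clifford}^{\star}_\Sigma$.

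The first step is to refine the local constraint of Proposition~\ref{prop:mainsingleloop}: for each loop $C$ in the chosen basis, I would show
\begin{equation*}
U\logical{F_a(C)}U^{-1} \;=\; \omega_C(a)\,\logical{F_{\sigma_C(a)}(C)}
\end{equation*}
for some phase $\omega_C(a)$ and group automorphism $\sigma_C\colon G\to G$. The point is that in the abelian case the $S$-matrix has $|S_{ab}|=1/\sqrt{|G|}$ uniformly and is (up to normalization) the character table of $G$, so that the matrix $\Lambda$ of~\eqref{eq:lambdamatrixexpr} is forced to be a unitary monomial matrix: a permutation of the primitive idempotents $\logical{P_a(C)}$ Fourier-dualizes to a phase-times-permutation action on the $\logical{F_a(C)}$. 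Once this is established, the group-homomorphism property of $\sigma_C$ and the character property of $\omega_C$ follow automatically by applying $U$-conjugation to both sides of $\logical{F_a(C)}\logical{F_b(C)}=\logical{F_{ab}(C)}$.

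The second step is to glue these local statements together. Since $U$ sends each generator of $\pauli_\Sigma$ into $\pauli_\Sigma$ up to a phase, and $\pauli_\Sigma$ spans $\logical{\cA_\Sigma}$ over $\mathbb{C}$, we conclude $U\cdot\pauli_\Sigma\cdot U^{-1}\subseteq\pauli_\Sigma$, so $\logical{U}$ lies in the generalized Clifford group. The refinement to the restricted Clifford group $\mathsf{Clifford}^{\star}_\Sigma$ then comes from two further constraints: the compatibility of the family $\{\sigma_{C_j}\}_j$ with the braiding bicharacter (which forces the induced action on the symplectic module underlying $\pauli_\Sigma$ to preserve the form in a way stricter than an arbitrary symplectic transformation), and the requirement that Proposition~\ref{prop:gluingconstraint} be satisfied simultaneously for every closed loop on $\Sigma$, not only those in our chosen basis.

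The main obstacle will be the first step: establishing that $\Lambda$ in~\eqref{eq:lambdamatrixexpr} genuinely reduces to a monomial matrix in the abelian case. The reasoning rests on the fact that for abelian $G$ the $S$-matrix is essentially the character table, so Pontryagin duality converts a permutation of labels (acting on idempotents) into a permutation-plus-character action on the group-algebra basis $\{\logical{F_a(C)}\}$. Packaging this cleanly via the explicit form of $\Pi^{-1}D\Pi D^{-1}$ under the character transform, and then identifying which combinations of local automorphisms $\{\sigma_{C_j},\omega_{C_j}\}_j$ survive the global bicharacter and gluing constraints, is exactly what carves the restricted group~$\mathsf{Clifford}^{\star}_\Sigma$ out of the full generalized Clifford group.
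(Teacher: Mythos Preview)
Your plan correctly identifies the crux: showing that in the abelian case the matrix~$\Lambda$ of~\eqref{eq:lambdamatrixexpr} is monomial, so that $\rho_U(\logical{F_a(C)})=\omega_C(a)\,\logical{F_{\sigma_C(a)}(C)}$. But your proposed mechanism for establishing this is where the gap lies. In the abelian case all $d_a=1$, so~\eqref{eq:lambdamatrixexpr} collapses to $\Lambda=S\Pi^{-1}S^{-1}$. Your claim is essentially that conjugating \emph{any} permutation of~$\labels$ by the $S$-matrix yields a monomial matrix (``a permutation of idempotents Fourier-dualizes to a phase-times-permutation on the $\logical{F_a}$''). This is false: Pontryagin duality converts a permutation of characters into a monomial action on the group algebra only when the permutation is an \emph{affine} map (a group automorphism composed with a translation). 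For a generic permutation~$\Pi$ the matrix $S\Pi S^{-1}$ is not monomial --- already for $\labels=\mathbb{Z}_4$ with $S_{ab}=i^{ab}/2$, the transposition $\Pi=(1\ 2)$ gives $(S\Pi S^{-1})_{1,1}=\tfrac12$ and $(S\Pi S^{-1})_{1,2}=\tfrac{1+i}{2}$, both nonzero. Proposition~\ref{prop:mainsingleloop} alone allows $\permLoop{C}$ to be any permutation, so nothing in your single-loop analysis forces $\Lambda$ to be monomial.

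The paper closes this gap not by working harder on one loop, but by bringing in a \emph{second} loop~$C'$ intersecting~$C$ once. The commutation relation $\logical{F_b(C')}\logical{F_a(C)}=\mathcal{D}S_{ab}\,\logical{F_a(C)}\logical{F_b(C')}$ is preserved under conjugation by~$U$, and comparing coefficients in the images yields $\Lambda_{a,c}\Lambda'_{b,d}(S_{cd}-S_{ab})=0$ for all labels (Lemma~\ref{lem:conjugateloopconsistency}). Summing this over~$a$ and using unitarity of~$S$ then forces each row of~$\Lambda'$ to have a single nonzero entry (Lemma~\ref{lem:abelianstringpermutation}). This two-loop input is the missing ingredient; without it your first step cannot be completed. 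A secondary point: once monomiality is established, the restriction to $\cliff^\star_\Sigma$ (as opposed to $\cliff_\Sigma$) is immediate from the fact that $\rho_U$ sends $\logical{F_a(C)}$ into $\logical{\cA(C)}$ for the \emph{same} loop~$C$; your discussion of bicharacter compatibility and Proposition~\ref{prop:gluingconstraint} as further refinements is unnecessary and suggests a misreading of the definition of $\cliff^\star_\Sigma$. The only remaining work is to check that the phase $\omega_C(a)$ is an $N$-th root of unity, which follows from $\logical{F_a(C)}^N=\logical{\id}$ (or, equivalently, from your observation that $\omega_C$ is a character of a group of exponent~$N$).
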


For abelian anyon models, the set~$\labels$ of particles is an abelian group and the fusion rules (i.e., the Verlinde algebra~\eqref{eq:verlindealgebrarelation}) are given by the group product, $N^{c}_{ab}=1$ if and only if~$c=ab$ and~$N^{c}_{ab}=0$ otherwise.
In other words,  any two particles~$a$ and~$b$ fuse to a unique particle~$c=a b$, and the identity element~$1\in\labels$ is the only particle satisfying $1 a=a$ for all $a\in\labels$. Another requirement is that the $S$ matrix is composed entirely of phases (divided by the quantum dimension $\mathcal{D}$), and $S_{1a}=S_{a1}= 1/\mathcal{D}$ for all $a \in \labels$. Furthermore, the involution~$a\mapsto\dual{a}$ defining the antiparticle associated to~$a\in \labels$ is simple the inverse $\dual{a}=a^{-1}$ with respect to the group multiplication. Note that,
by the fundamental theorem of finitely generated abelian groups, the group~$\labels$ is isomorphic to $\bbZ_{N_1}\times\bbZ_{N_2}\times\dots\times\bbZ_{N_r}$ for some prime powers~$N_j$.  The number $N=\mathsf{lcm}(N_1,\ldots,N_r)$ will play an important role in the following, determining e.g., the order of a protected gate.

It is well known that for abelian anyons $a$ and $b$, and two inequivalent loops $C$ and $C'$ whose intersection number is~1 in the manifold $\Sigma$ the relation 
\begin{align}
\logical{F_{\dual{b}}(C')}\logical{F_{\dual{a}}(C)} \logical{F_b(C')} \logical{F_a(C)} = \mathcal{D} S_{ab} \logical{\id}
\label{eq:SmatrixFromStringOperators}
\end{align}
holds.
As we will see, this provides an additional constraint on the logical action of a protected gate~$U$. The following consistency condition must hold:
\begin{lemma}\label{lem:conjugateloopconsistency}
Let $C$ and $C'$ be two loops on~$\Sigma$ which intersect once.
Consider the action of a locality preserving unitary automorphism of the code on the string operators on $C$ and $C'$, that is
\begin{equation}
    \rho_U(\logical{F_b(C)})
    =\sum_{d}\Lambda_{b,d}\logical{F_{d}(C)},
    \qquad
    \rho_U(\logical{F_b(C')})
    =\sum_{d}\Lambda'_{b,d}\logical{F_{d}(C')}. 
    \label{eq:rhoufbmappingprop3}
\end{equation}
Then the matrices $\Lambda$ and $\Lambda'$ must satisfy the following consistency condition
\begin{align}
 \Lambda_{a,c}~\Lambda'_{b,d} \left( S_{cd} - S_{ab} \right) =0 ~~~\forall a, b, c, d \in \labels. \label{eq:AbelianConditionOnProjectors}
\end{align}
\end{lemma}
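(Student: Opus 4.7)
The plan is to lift the commutation relation encoded in~\eqref{eq:SmatrixFromStringOperators} through the automorphism $\rho_U$ and then extract the constraints on $\Lambda$ and $\Lambda'$ coefficient by coefficient. First, since in an abelian model each $\logical{F_a(C)}$ is unitary with inverse $\logical{F_{\dual{a}}(C)}$, identity~\eqref{eq:SmatrixFromStringOperators} rearranges into the pure commutation relation
\begin{equation*}
\logical{F_b(C')}\,\logical{F_a(C)}=\mathcal{D}\,S_{ab}\,\logical{F_a(C)}\,\logical{F_b(C')}\qquad\text{for all }a,b\in\labels.
\end{equation*}

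Next I would apply $\rho_U$ to both sides of this commutation relation, using that $\rho_U$ is an algebra homomorphism that fixes scalars. Substituting~\eqref{eq:rhoufbmappingprop3} into both sides and then re-applying the commutation relation to rewrite the left-hand side in the common form $\logical{F_c(C)}\logical{F_d(C')}$, the two sides combine into the single operator identity
\begin{equation*}
\sum_{c,d}\Lambda_{a,c}\,\Lambda'_{b,d}\,(S_{cd}-S_{ab})\,\logical{F_c(C)}\,\logical{F_d(C')}=0.
\end{equation*}

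To deduce that each coefficient vanishes, I would use the commutation relation once more as a diagonalizing tool. Conjugation by $\logical{F_e(C)}$ multiplies each $\logical{F_c(C)}\logical{F_d(C')}$ by the $c$-independent phase $(\mathcal{D}\,S_{ed})^{-1}$, because $\{\logical{F_a(C)}\}_a$ mutually commute in an abelian model. Varying $e$ over $\labels$ produces a linear system on the $d$-slices of the identity whose coefficient matrix is essentially the unitary modular $S$-matrix; inverting it isolates each $d$-component. An analogous conjugation by $\logical{F_f(C')}$ varied over $f$ isolates each $c$-component, and since $\logical{F_c(C)}\logical{F_d(C')}$ is itself a nonzero unitary, the coefficient $\Lambda_{a,c}\Lambda'_{b,d}(S_{cd}-S_{ab})$ must vanish, as claimed.

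The main obstacle is precisely this coefficient-extraction step, which relies crucially on abelianness in two places: the unitarity of each individual $\logical{F_a(C)}$, and the mutual commutativity of string operators on the same loop, which is what makes conjugation by $\logical{F_e(C)}$ (resp.\ $\logical{F_f(C')}$) act as multiplication by a pure phase. In a general non-abelian model, the same conjugation would generate a sum over fusion products and the disentangling would fail --- consistent with the lemma being stated only for abelian anyons.
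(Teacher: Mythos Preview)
Your argument is correct and follows the same route as the paper: rewrite~\eqref{eq:SmatrixFromStringOperators} as a commutation relation, apply the homomorphism $\rho_U$, and read off the coefficients of $\logical{F_c(C)}\logical{F_d(C')}$. The only difference is that the paper simply invokes linear independence of the family $\{\logical{F_c(C)}\logical{F_d(C')}\}_{c,d}$ to extract the coefficients, whereas you supply an explicit proof of that independence via conjugation by $\logical{F_e(C)}$ and $\logical{F_f(C')}$ together with unitarity of~$S$; this is a nice self-contained justification of the step the paper takes for granted, but not a genuinely different strategy.
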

\begin{proof}
Since in an abelian anyon model every string operator~$[F_a(C)]$ is unitary the relation~\eqref{eq:SmatrixFromStringOperators} is equivalent to the commutation relation
\begin{equation*}
    \logical{F_b(C')}
    \logical{F_a(C)}
    =\mathcal{D}
     S_{a
        b}
     \logical{F_a(C)}
     \logical{F_b(C')}.
\end{equation*}
Conjugating this by~$U$ and rearranging terms yields
\begin{eqnarray}
0= \sum_{c,d} \Lambda_{a,c}~\Lambda'_{b,d} \left( \mathcal{D} S_{cd} - \mathcal{D} S_{ab}\right) \logical{F_{c}(C)}\logical{F_{d}(C')}.
\end{eqnarray}
The claim follows from linear independence of the logical operators $\logical{F_{c}(C)}\logical{F_{d}(C')} $.
\end{proof}

Invoking our previous result of Lemma~\ref{thm:permutingprojectors}, the following lemma is implied:
\begin{lemma}\label{lem:abelianstringpermutation}
The anyon labels of \emph{string} operators along the loop are permuted by $U$
\begin{align}
\Lambda_{b,d} = e^{i \phi_{b}} \delta_{d,\tilde{\pi}(b)},
\end{align}
for some phase $\phi_{b}$, and where $\tilde{\pi}$ is a permutation of anyon labels. 
\end{lemma}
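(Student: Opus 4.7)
My plan is to combine the consistency relation from Lemma~\ref{lem:conjugateloopconsistency} with the unitarity of the $S$-matrix, which encodes the only piece of abelian structure that enters the argument. The claim is vacuous for contractible $C$ (then $\logical{\cA(C)}$ is scalar by Postulate~\ref{pos:errorcorrection}) and for $\Sigma=S^2$; I therefore take $C$ to be a non-contractible simple loop and let $C'$ be a transverse companion intersecting $C$ exactly once, which exists on any closed orientable $\Sigma$ of positive genus. The matrices $\Lambda$ and $\Lambda'$ of~\eqref{eq:rhoufbmappingprop3} represent the action of $\rho_U$ on bases of the algebras $\logical{\cA(C)},\logical{\cA(C')}\cong\Ver$, so both are invertible. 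In particular, no column of $\Lambda'$ can vanish, so for every $d\in\labels$ there exists a $b\in\labels$ with $\Lambda'_{b,d}\neq 0$.

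The heart of the argument is to show that each row of $\Lambda$ has at most one nonzero entry. Fix $a\in\labels$ and suppose $\Lambda_{a,c_1}\neq 0$ and $\Lambda_{a,c_2}\neq 0$. For an arbitrary $d\in\labels$ choose $b$ with $\Lambda'_{b,d}\neq 0$; applying~\eqref{eq:AbelianConditionOnProjectors} to the quadruples $(a,b,c_1,d)$ and $(a,b,c_2,d)$ yields $S_{c_1 d}=S_{ab}=S_{c_2 d}$, hence $S_{c_1 d}=S_{c_2 d}$ for every $d$. Unitarity of $S$ implies that its rows are distinct, forcing $c_1=c_2$. Combined with invertibility of $\Lambda$, every row (and every column) has exactly one nonzero entry, so $\Lambda_{a,c}=\mu_a\,\delta_{c,\tilde\pi(a)}$ for some bijection $\tilde\pi\colon\labels\to\labels$ and some $\mu_a\in\mathbb{C}\setminus\{0\}$. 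Finally, since $\logical{F_a(C)}$ is unitary in the abelian case and $\rho_U(\cdot)=U(\cdot)U^{-1}$ preserves unitarity, $\rho_U(\logical{F_a(C)})=\mu_a\logical{F_{\tilde\pi(a)}(C)}$ must be unitary, which forces $|\mu_a|=1$, so $\mu_a=e^{i\phi_a}$ as required.

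The main obstacle is conceptual rather than technical. Lemma~\ref{thm:permutingprojectors} already tells us that the idempotents $\logical{P_a(C)}$ are permuted on each single loop, but this does \emph{not} imply that the string operators $\logical{F_a(C)}$ are permuted up to phases: the formula~\eqref{eq:lambdamatrixexpr} shows that $\Lambda$ is generically a full non-monomial matrix, and the present lemma simply fails in the non-abelian setting. The step where abelianness genuinely enters is the non-degeneracy argument used to collapse $S_{c_1 d}=S_{c_2 d}$ for all $d$ into $c_1=c_2$; it is essential both that $S$ has no repeated rows and that every value of $d$ is accessible by varying $b$, which in turn relies on having a companion loop $C'$ intersecting $C$ exactly once.
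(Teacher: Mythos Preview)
Your proof is correct and somewhat more direct than the paper's. The paper proceeds by invoking the explicit abelian formula $\Lambda_{b,d}=\sum_a S_{b,a}\,\overline{S_{d,\permLoop{C}(a)}}$ from Proposition~\ref{prop:mainsingleloop}\eqref{it:trivialstringop}, then \emph{sums} the constraint~\eqref{eq:AbelianConditionOnProjectors} over $a$ to obtain closed-form expressions $\sum_a\Lambda_{a,c}=\mathcal{D}\,\overline{S_{c,\permLoop{C}(1)}}$ and $\sum_a\Lambda_{a,c}S_{ab}=\overline{S_{c,\permLoop{C}(\dual b)}}$; these yield a scalar condition which, via the distinct-rows property of $S$, forces each \emph{row} of $\Lambda'$ to have a unique nonzero entry. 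You instead avoid the explicit $\Lambda$-formula entirely: invertibility of $\Lambda'$ guarantees that every \emph{column} is nonzero, which lets you vary $d$ freely in~\eqref{eq:AbelianConditionOnProjectors} and conclude $S_{c_1 d}=S_{c_2 d}$ for all $d$ whenever $\Lambda_{a,c_1},\Lambda_{a,c_2}\neq 0$; the same $S$-unitarity step then pins down $\Lambda$. Your route uses strictly less input (no reference to the idempotent-permutation $\permLoop{C}$ is needed) and makes the logical dependence on $S$-nondegeneracy more transparent; the paper's route, by contrast, exposes the explicit relationship between the idempotent permutation and the resulting string-operator permutation.
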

\begin{proof}
Recall from~(\ref{eq:rhoufbmappingprop}) that
\begin{equation}
\Lambda_{b,d} = \sum_a\frac{S_{b,a}}{S_{1,a}} S_{1,\permLoop{C}(a)}\conjugate{S_{d,\permLoop{C}(a)}} = \sum_a S_{b,a} \conjugate{S_{d,\permLoop{C}(a)}}, 
\label{eq:rhoufbmappingprop2}
\end{equation}
where $\permLoop{C}$ is the permutation of the central idempotents associated with loop~$C$, where the second equality holds for abelian anyons. An analogous equation holds for loop $C'$. Now sum over all $a \in \labels$ in~\eqref{eq:AbelianConditionOnProjectors}. To evaluate the sum, we require $ \sum_a\Lambda_{a,c}$ and $ \sum_a\Lambda_{a,c} S_{ab}$. Firstly,
\begin{equation*}
    \sum_a
    \Lambda_{a,
             c}
    =\sum_{a,
           g}
     S_{a,
        g}
     \conjugate{S_{c,
                   \permLoop{C}(g)}}
    =\mathcal{D}
     \sum_g
     \delta_{g,
             1}
     \conjugate{S_{c,
                   \permLoop{C}(g)}}
    =\mathcal{D}
     \conjugate{S_{c,
                   \permLoop{C}(1)}},
\end{equation*}
where we used unitarity of the $S$-matrix, $\delta_{1z} =\sum_x \conjugate{S_{x1}} S_{xz} = \sum_x S_{xz}/\mathcal{D}$. Secondly,
\begin{equation*}
    \sum_a
    \Lambda_{a,
             c}
    S_{a
       b}
    =\sum_{a,
           g}
     S_{a,
        g}
     \conjugate{S_{c,
                   \permLoop{C}(g)}}
     S_{ab}
    =\sum_{a,
           g}
     S_{a,
        g}
     \conjugate{S_{c,
                   \permLoop{C}(g)}}
     \conjugate{S_{a
                   \dual{b}}}
    =\sum_g
     \delta_{g,
             \dual{b}}
     \conjugate{S_{c,
                   \permLoop{C}(g)}}
    =\conjugate{S_{c,
                   \permLoop{C}(\dual{b})}}.
\end{equation*}
Therefore~\eqref{eq:AbelianConditionOnProjectors} implies
\begin{equation}
    \bigl(\mathcal{D}
          S_{c
             d}
          \conjugate{S_{c,
                        \permLoop{C}(1)}}-
          \conjugate{S_{c,
                        \permLoop{C}(\dual{b})}}\bigr)
    \Lambda_{b,
             d}'
    =0
    \qquad
    \forall
     b,
     c,
     d
    \in
     \labels.
    \label{eq:AbelianConditionOnProjectors2}
\end{equation}
For any $B \in \labels$, there must exist at least one anyon $D \in \labels$ such that $\Lambda'_{B,D} \neq 0$. Then
\begin{equation}
    \mathcal{D}
    S_{c
       D}
    \conjugate{S_{c,
                  \permLoop{C}(1)}}-
    \conjugate{S_{c,
                  \permLoop{C}(\dual{B})}}
    =0
    \qquad
    \forall
     c
    \in
     \labels.
\end{equation}
For each $D'\neq D$, there must be some~$C \in \labels$ such that $S_{CD} \neq S_{CD'}$.
Therefore substituting into~\eqref{eq:AbelianConditionOnProjectors2} the values $b=B,c=C$ and $d=D'$, the term in brackets must be non-zero, implying  $\Lambda'_{B,D'} = 0$ for all $D'\neq D$. Unitarity of $U$ yields the claim for loop $C'$. 
\end{proof}

\subsection{The generalized Pauli and Clifford groups\label{sec:generalizedpauligroup}}
Consider the case where 
 $\labels=\mathbb{Z}_{N_1}\times\dots\times\mathbb{Z}_{N_r}$ and set~$N=\mathsf{lcm}(N_1,\ldots,N_r)$. We define the following group associated with the surface~$\Sigma$. 

  \begin{definition}[Pauli group]\label{def:pauliclifford}
Consider a genus-$g$ surface~$\Sigma$ and let $\cG=\{C_j\}_{j=1}^{3g-1}$ be the loops
associated with generators of the mapping class group as in Fig.~\ref{fig:TopologicalLoops}.  The {\em Pauli group~$\pauli_{\Sigma}$ associated with~$\Sigma$} is
 \begin{align*}
 \pauli_{\Sigma}\coloneqq\big\langle \left\{\phaseGlobal\logical{F_a(C)}\ \big|\ \phaseGlobal\in \langle e^{2\pi i/N}\rangle, a\in\labels, C\in\cG \right\}\big\rangle\ ,
 \end{align*} 
i.e., the set of logical operators 
generated by taking products of string-operators associated with~$\cG$,  where
$\langle e^{2\pi i/N}\rangle $ is the subgroup of~$\unitaryGroup{1}$ consisting of $N$-th roots of unity. 
\end{definition}

According to Eq.~\eqref{eq:SmatrixFromStringOperators}, we can always reorder and write each element $P\in\pauli_{\Sigma}$ in the standard form
\begin{align*}
P= \phaseGlobal\logical{F_{a_1}(C_1)}\cdots \logical{F_{a_{3g-1}}(C_{3g-1})}\qquad\textrm{ for some }\phaseGlobal\in \langle e^{2\pi i/N}\rangle,\  a_j\in\labels\  .
\end{align*}
This  shows that the group~$\pauli_{\Sigma}$ is finite. Furthermore, since
$a^N=1$  for every $a\in\labels$, we conclude that $P^N=\phaseGlobal\logical{\id}$ is proportional to the identity up to a phase $\phaseGlobal\in\langle e^{2\pi i/N}\rangle$. That is, every element of the Pauli group~$\pauli_\Sigma$ has order dividing~$N$.

Given this definition, we can proceed to give the definition of the Clifford group.
\begin{definition}[Clifford group]
 The 
  {\em Clifford group associated with~$\Sigma$}
  is the group of logical unitaries
  \begin{align*}
  \cliff_\Sigma\coloneqq\{\phaseGlobal\logical{U}\ |\ \logical{U}\pauli_\Sigma\logical{U}^{-1}\subset\pauli_\Sigma,\phaseGlobal\in\langle e^{2\pi i/N}\rangle  \}\ .
  \end{align*}
In this definition,~$\logical{U}$ is any logical unitary on the code space. 
  \end{definition}
We can define a `homology-preserving subgroup' of $\cliff_\Sigma$. To do so, we first introduce
the following subgroup of~$\pauli_{\Sigma}$ associated with a loop on~$\Sigma$.
\begin{definition}[Restricted Pauli group]
Let $C\in\cG$ be a single closed loop. We set
\begin{align*}
 \pauli_{\Sigma}(C)\coloneqq\big\langle \left\{\phaseGlobal\logical{F_a(C)}\ \big|\ \phaseGlobal\in \langle e^{2\pi i/N}\rangle, a\in\labels\right\}\big\rangle\ ,
\end{align*}
i.e., the subgroup generated by  string-operators associated with the loop~$C$. 
\end{definition}
It is straightforward to check that for any $C\in\cG$, the subgroup $\pauli_{\Sigma_g}(C)\subset \pauli_{\Sigma_g}$  is normal; furthermore, any $P\in\pauli_{\Sigma_g}(C)$ has the
simple form of a product $P=\phaseGlobal[F_{a_1}(C)]\cdots [F_{a_r}(C)]$. 

Given this definition, we can define a subgroup of Clifford group elements as follows:
\begin{definition}[Homology-preserving Clifford group]
 The 
  {\em homology-preserving Clifford group associated with~$\Sigma$}
  is the subgroup
  \begin{align*}
   \cliff^\star_\Sigma\coloneqq\{\phaseGlobal\logical{U}\ |\ \logical{U}\pauli_\Sigma(C)\logical{U}^{-1}\subset\pauli_\Sigma(C)\textrm{ for all }C\in\cG,\ \phaseGlobal\in\langle e^{2\pi i/N}\rangle  \}\ .
  \end{align*} 
\end{definition}
Note that
this is a proper subgroup, i.e., $\cliff^\star_\Sigma \subsetneq    \cliff_\Sigma$, as can be seen from the following example.
\begin{example}
Consider for example Kitaev's $D(\mathbb{Z}_2)$-code on a torus~$\Sigma_2$ (cf.~Example~\ref{ex:kitaevtoric}). In this case, there are two inequivalent homologically non-trivial cycles $C_1$ and $C_2$. In the language of stabilizer codes, the logical operators $(\bar{X}_1,\bar{Z}_1)=(F_e(C_1),F_m(C_2))$  and $(\bar{X}_2,\bar{Z}_2)=(F_e(C_2),F_m(C_1))$ are often referred to as the logical Pauli operators associated with the first and second logical qubit, respectively. Consider the logical Hadamard $\bar{H}_1$ on the first qubit, which acts as
\begin{align*}
\bar{H}_1 \bar{X}_1\bar{H}_1 ^\dagger=\bar{Z}_1\qquad\textrm{ and }\qquad  \bar{H}_1 \bar{Z}_1\bar{H}_1 ^\dagger=\bar{X}_1
\end{align*}
but leaves $\bar{X}_2$ and $\bar{Z}_2$ invariant.  Then $\bar{H}_1$ belongs to the Clifford group,    $\bar{H}_1\in \cliff_\Sigma$. However, $\bar{H}_1\not\in\cliff^\star_\Sigma$ because $\bar{X}_1$ and $\bar{Z}_1$ belong to different homology classes (specified by $C_1$ and $C_2$, respectively). 

\end{example}

In the following, we make use of the existence of a loop $C'$ which intersects with a given loop~$C$ exactly once. Note that this is not necessarily given, but works  in the special case where~$C$ is one of the~$3g-1$ curves~$\{C_j\}_{j=1}^{3g-1}$ associated with the generators of the mapping class group of the genus-$g$ surface~$\Sigma_g$  (cf.~Fig.~1). We are now ready to prove Theorem~\ref{thm:homologypreserving}, i.e., that a protected gate~$U$ has logical action~$\logical{U}\in\mathsf{Clifford}^{\star}_{\Sigma}$.

\begin{proof}
 By Lemma~\ref{lem:abelianstringpermutation}, we 
have that
$\sum_{c}\Lambda_{a,c} \logical{F_c(C)}=\phaseGlobal \logical{F_b(C)}$
for some $\phaseGlobal\in\unitaryGroup{1}$ and $b\in\mathbb{A}$. It remains to show that~$\phaseGlobal$ is an $N$-th root of unity. 
We have
\begin{equation*}
    \phaseGlobal^N
    \logical{\id}
    =\phaseGlobal^N
     \logical{F_b(C)^N}
    =\logical{\phaseGlobal
              F_b(C)}^N
    =\logical{U}
     \logical{F_a(C)}^N
     \logical{U^\dagger}
    =\logical{\id}
\end{equation*}
because the string operators~$F_a(C)$ have order 
 dividing~$N$, thus we must have $\phaseGlobal^N=1$.
 Because $a$ and $C$ were arbitrary, this concludes  the proof that $\logical{U}\in\cliff^\star_\Sigma$. 
\end{proof}

\section*{Acknowledgements}
RK and SS gratefully acknowledge support by NSERC, and MB, FP, and JP gratefully acknowledge support by NSF grants PHY-0803371 and PHY-1125565, NSA/ARO grant W911NF-09-1-0442, and AFOSR/DARPA grant FA8750-12-2-0308. RK is supported by the Technische Universit\"at M\"unchen -- Institute for Advanced Study, funded by the German Excellence Initiative and the European Union Seventh Framework Programme under grant agreement no.~291763. OB gratefully acknowledges support by the ERC (TAQ).
The Institute for Quantum Information and Matter (IQIM) is an NSF Physics Frontiers Center with support by the Gordon and Betty Moore Foundation. RK and SS thank the IQIM for their hospitality. We thank Jeongwan Haah, Olivier Landon-Cardinal and Beni Yoshida for helpful discussions, and the referees and editors for their comments.

\appendix

\section{Density on a subspace and protected gates~\label{sec:appendixsuperselection}} 

\begin{lemma}\label{lem:generalizeddensity}
Let $\cH_0$ be an invariant subspace under the mapping class group representation, and suppose the action of $\mcg_\Sigma$ is dense in the projective unitary group $\mathsf{PU}(\cH_0)$.
Let $\cH_1$ be the orthogonal complement of $\cH_0$ in $\cH_\Sigma$. Assume that
the decomposition $\cH_0\oplus\cH_1$ stems from the gluing postulate in the sense that
$\cH_j=\bigoplus_{\vec{a}\in\Lambda_j}\cH_{\Sigma'(\vec{a})}$ for $j=0,1$, 
where $\Lambda_0,\Lambda_1$ are disjoint 
set of labelings 
of the boundary components of the surface~$\Sigma'$ obtained by cutting~$\Sigma$ along  a family~$\vec{C}$  of pairwise non-intersecting curves. 
If $\dim \cH_1<\dim\cH_0$ (or a similar assumption), then any protected gate $U$ leaves $\cH_0$ invariant and acts as a global phase on it.
\end{lemma}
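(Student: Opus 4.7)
The plan is to mirror the two main ingredients behind Corollary~\ref{cor:density}: first, to show that any protected gate must respect the superselection sector structure and therefore preserve $\cH_0$; second, to apply the density argument on the restricted space. First I would invoke Proposition~\ref{prop:mainsingleloop} for each curve $C_j\in\vec{C}$ in the cut, extracting permutations $\permLoop{C_j}\colon\labels\to\labels$ with $U P_{a_j}(C_j) U^\dagger = P_{\permLoop{C_j}(a_j)}(C_j)$. Multiplying these idempotents one sees that $U$ carries the sector $\cH_{\Sigma'(\vec{a})}$ unitarily onto $\cH_{\Sigma'(\perms(\vec{a}))}$, where $\perms$ is the induced permutation on boundary labelings of $\vec{C}$; Proposition~\ref{prop:gluingconstraint} then furnishes $\dim\cH_{\Sigma'(\vec{a})}=\dim\cH_{\Sigma'(\perms(\vec{a}))}$.

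The next step is to deduce $\perms(\Lambda_0)=\Lambda_0$, hence $U\cH_0=\cH_0$. Here the assumption $\dim\cH_1<\dim\cH_0$ (or, as indicated by the phrase ``or a similar assumption'', the stronger sector-wise separation) enters: because $\perms$ pairs each labeling with one of matching sector dimension, any ``leakage'' from $\Lambda_0$ into $\Lambda_1$ would force a sector of $\Lambda_0$ to have the same dimension as a sector of $\Lambda_1$, and the hypothesis is intended to preclude this. I expect this to be the main obstacle, since the bare total-dimension inequality $\dim\cH_1<\dim\cH_0$ does not by itself forbid $\perms$-orbits that cross between $\Lambda_0$ and $\Lambda_1$ in a dimension-preserving manner; the actual proof must either strengthen the assumption (for instance, demanding that every sector in $\Lambda_1$ is strictly smaller than every sector in $\Lambda_0$) or combine the inequality with additional structural input.

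Once invariance is in hand, I would restrict to $\cH_0$, writing $\bU_0\coloneqq\bU|_{\cH_0}$ and $\bV_0(\vartheta)\coloneqq\bV(\vartheta)|_{\cH_0}$; the latter is well-defined because $\cH_0$ is MCG-invariant by hypothesis. In a standard basis compatible with the decomposition $\cH_0\oplus\cH_1$ the matrix $\bU$ is block-diagonal, so its monomial form from Theorem~\ref{thm:mainmappinglcassgroup} descends to a monomial form for $\bU_0$ in the standard basis of $\cH_0$; the same reasoning applied to $\bV(\vartheta)\bU\bV(\vartheta)^\dagger$ gives that $\bV_0(\vartheta)\bU_0\bV_0(\vartheta)^\dagger\in\phiperm$ on the $\cH_0$-block for every $\vartheta\in\mcg_\Sigma$. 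At this point I would run the argument of Corollary~\ref{cor:density} verbatim on $\cH_0$: assuming $\bU_0$ is non-trivial, diagonalizing and then mixing two eigenvectors by a Hadamard (or the alternative matrix used in that proof when $\lambda_2=-\lambda_1$) produces a unitary $\bW\in\mathsf{U}(\cH_0)$ for which $\bW\bU_0\bW^\dagger$ has both diagonal and strictly off-diagonal entries, so it is not monomial; the same remains true for any sufficiently close approximation up to a global phase. By the assumed density of the mapping class group representation in $\mathsf{PU}(\cH_0)$ such an approximation is realized by some $\bV_0(\vartheta_1\cdots\vartheta_m)$, contradicting the monomial form derived above. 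Consequently $\bU_0$ has a single eigenvalue and acts as a global phase on $\cH_0$, as claimed.
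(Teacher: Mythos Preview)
Your two-step strategy---first establish $U\cH_0=\cH_0$ from the sector-permutation structure, then run the density argument of Corollary~\ref{cor:density} on the restriction---is different from the paper's route, and the gap you flag in the first step is genuine. The bare inequality $\dim\cH_1<\dim\cH_0$ really does not force $\perms(\Lambda_0)=\Lambda_0$: a permutation $\perms$ can swap a sector of $\Lambda_0$ with an equidimensional sector of $\Lambda_1$ while respecting all the dimension-matching constraints from Proposition~\ref{prop:gluingconstraint}. Your suggested fixes (strengthening to a sector-wise dimension separation, or invoking unspecified ``additional structural input'') remain speculative, so as written the argument does not close.

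The paper avoids this obstacle altogether by \emph{not} trying to establish invariance first. Instead it writes $\bU$ in block form with respect to $\cH_0\oplus\cH_1$, takes the Schur decomposition $\bU_{00}=\bW_{00}\Gamma\bW_{00}^\dagger$ of the $(0,0)$ block, and exploits the density on~$\cH_0$ to approximate $\bW_{00}^\dagger$ (and later a Hadamard on two coordinates of~$\cH_0$) by products $\bV(\vartheta_1)\cdots\bV(\vartheta_m)$; crucially, since $\cH_0$ is $\mcg_\Sigma$-invariant these are block diagonal, so the $(0,0)$ block of the conjugated matrix is controlled. The monomial constraint $\bU\in\phiperm_{\mcg_\Sigma}$ is then applied to the \emph{full} conjugated matrix (not just the $(0,0)$ block), which rules out the cases where $\Gamma$ has a non-zero off-diagonal entry or two distinct diagonal entries. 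This leaves $\Gamma=\lambda I$, and only here does the hypothesis $\dim\cH_1<\dim\cH_0$ enter: it forces $\lambda\neq 0$ (otherwise the $d_1\times d_0$ block $\bU_{10}$ would need $d_0$ nonzero entries in distinct rows, impossible when $d_1<d_0$), whence $|\lambda|=1$ and monomiality of $\bU$ kills the off-diagonal blocks. Thus invariance of $\cH_0$ and scalarity of $\bU_{00}$ are obtained simultaneously in the final step, rather than sequentially.
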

\begin{proof}
Extending~$\vec{C}$ to a DAP-decomposition~$\cC$, the unitary~$U$ expressed  in the (suitably ordered) basis~$\cB_\cC$ takes the form
\begin{align*}
\bU=\left(
\begin{matrix}
\bU_{00} & \bU_{01}\\
\bU_{10} & \bU_{11}
\end{matrix}
\right)\ ,
\end{align*}
where ${\bf U}_{jk}$ describes the operator $P_{\cH_j}UP_{\cH_k}$ obtained by projecting the domain and image of $U$ to $\cH_k$ and $\cH_j$, respectively. 

Consider the Schur decomposition
 $\bU_{00}=\bW_{00}\Gamma\bW_{00}^\dagger$
 of $\bU_{00}$, i.e., $\bW_{00}$ is a unitary matrix and $\Gamma$ is upper triangular.  There are different cases to consider:
 \begin{enumerate}[(i)]
 \item
 If $\Gamma$ is diagonal with a single eigenvalue~$\lambda$, then
 \begin{align*}
\bU=\left(
\begin{matrix}
\lambda I & \bU_{01}\\
\bU_{10} & \bU_{11}
\end{matrix}
\right)\ . 
  \end{align*} 
  Assume for the sake of contradiction that $\lambda=0$. Writing $d_j=\dim \cH_{j}$, the $d_1\times d_0$-matrix $\bU_{10}$, must have exactly $d_0$ non-zero values, each in a different row because $\bU\in\Delta$. This is only possible if $d_1>d_0$, contradicting our assumption. 

  We conclude that $\lambda\neq 0$. But then the condition $\bU\in\Delta$ requires that $\lambda\in\unitaryGroup{1}$ and $\bU_{01}=\bU_{10}=0$ (since we cannot have more than one non-zero entry in each column or row).
 
  \item
  $\Gamma$ has a non-zero off-diagonal element~$\Gamma_{j,k}$, $j<k$. We will show that this is not consistent with the fact that $U$ is a protected gate (i.e., leads to a contradiction). By reordering basis elements of~$\cB_{\cC}$, we can assume  without loss of generality that $\Gamma_{1,2}\neq 0$. 
By using, e.g., Solovay-Kitaev on~$\cH_0$, we find a 
product $\tilde{\bV}=\bV(\vartheta_1)\cdots\bV(\vartheta_m)$ of   images of mapping class group elements approximating
  $\bV=\bW_{00}^\dagger\oplus \bW_{11}$, where $\bW_{11}$ is an arbitrary unitary on~$\cH_1$. 

Consider the matrix~$\bV\bU\bV^\dagger$. 
We have 
$(\bV\bU\bV^\dagger)_{j,k}=\Gamma_{j,k}$ for $j,k=1,\ldots,\dim\cH_{0}$. In particular, 
$(\bV\bU\bV^\dagger)_{1,2}\neq 0$
and $(\bV\bU\bV^\dagger)_{2,1}=0$.

We claim that we must have
$(\bV\bU\bV^\dagger)_{1,1}=(\bV\bU\bV^\dagger)_{2,2}=0$. To show this, assume for the sake of contradiction that one of these diagonal entries is non-zero. Then~$\bV\bU\bV^\dagger\not\in \Delta$ since it has two non-zero entries in the same row or column.  But this implies  $\tilde{\bV}\bU\tilde{\bV^\dagger}\not\in\Delta$ since~$\tilde{\bV}\bU\tilde{\bV}^\dagger\approx\bV\bU\bV^\dagger$, a contradiction to the fact that $\bU\in\Delta_{\vartheta_1\cdots\vartheta_m}$. 

Now let $X_{j,k}=(\bV\bU\bV^\dagger)_{j,k}$ for $j,k\in\{1,2\}$ be the principal minor $2\times 2$~submatrix. We have established that its only non-zero entry is~$X_{1,2}$.
Using the Hadamard matrix~$H$, we then have 
$(HXH^\dagger)_{1,1}=X_{1,2}/2\neq 0$ and $(HXH^\dagger)_{1,2}=-X_{1,2}/2\neq 0$. Let ${\bf H}=H\oplus I_{(\dim \cH_0-2)} $. By Solovay-Kitaev, we can find a product 
$\tilde{\bV}'=\bV(\vartheta'_1)\cdots\bV(\vartheta'_\ell)$ of   images of mapping class group elements approximating
  $\bV'={\bf H}\oplus \bW'_{11}$, where $\bW'_{11}$ is an arbitrary unitary on~$\cH_1$. 
Then we have
\begin{align*}
(\bV'\bV \bU\bV^\dagger (\bV')^\dagger)_{1,1}&=X_{1,2}/2\neq 0\\
(\bV'\bV \bU\bV^\dagger (\bV')^\dagger)_{1,2}&=-X_{1,2}/2\neq 0\ ,
\end{align*}
which shows that 
$\bV'\bV \bU\bV^\dagger (\bV')^\dagger\not\in\Delta$. 
By continuity, this shows that
$\tilde{\bV}'\tilde{\bV}\bU\tilde{\bV}^\dagger(\tilde{\bV}')^\dagger\not\in\Delta$, contradicting the fact that $\bU\in\Delta_{\vartheta_1'\cdots\vartheta_\ell'\vartheta_1\cdots\vartheta_m}$.

  \item $\Gamma$ is diagonal with distinct eigenvalues: in this case we can apply the same kind of argument as in the proof of Corollary~\ref{cor:density}.
 \end{enumerate}
  \end{proof}

%\bibliographystyle{plain}
%\bibliography{UniversalGatesRef}
\end{document}